\newtheorem{theorem}{Theorem}[section]
\newtheorem{lemma}{Lemma}[section]
\newtheorem{proposition}{Proposition}[section]
\newtheorem{assumption}{Assumption}[section]
\newtheorem{definition}{Definition}[section]
\newcommand{\bTheta}{\bm{\Theta}}
\newcommand{\btheta}{\bm{\theta}}
\newcommand{\bx}{\mathbf{x}}
\newcommand{\bX}{\mathbf{X}}
\newcommand{\cX}{\mathcal{X}}
\newcommand{\cO}{\mathcal{O}}
\newcommand{\E}{\mathrm{E}}
\newcommand{\Var}{\mathrm{V}}
\newcommand{\Cov}{\mathrm{Cov}}
\newcommand{\estmuMC}{\bar{\mu}}
\newcommand{\estmuIS}{\widehat{\mu}}
\newcommand{\estmuSN}{\widetilde{\mu}}
\newcommand{\tmustar}{\widetilde{\mu}^\star}
\newcommand{\ESS}{n^e}
\newcommand{\real}{\mathbb{R}}
\DeclareMathOperator{\st}{\mbox{subject to}}
\def\argmin{\mathop{\rm arg\,min}}%
\providecommand{\keywords}[1]
{
	\small
	\textbf{\textit{Keywords---}} #1
}
\title{Efficient Nested Simulation Experiment Design\\
	via the Likelihood Ratio Method}
\author[1]{Mingbin Ben Feng}
\author[2]{Eunhye Song}
\affil[1]{Department of Statistics and Actuarial Science, University of Waterloo, Canada}
\affil[2]{The Harold and Inge Marcus Department of Industrial and Manufacturing Engineering, The Pennsylvania State University, USA}
\date{} 
\begin{document}
\maketitle

\begin{abstract}
	In nested simulation literature, a common assumption is that the experimenter can choose the number of outer scenarios to sample. This paper considers the case when the experimenter is given a fixed set of outer scenarios from an external entity.
	We propose a nested simulation experiment design that pools inner replications from one scenario to estimate another scenario's conditional mean via the likelihood ratio method. Given the outer scenarios, we decide how many inner replications to run at each outer scenario as well as how to pool the inner replications by solving a bi-level optimization problem that minimizes the total simulation effort.
	We provide asymptotic analyses on the convergence rates of the performance measure estimators computed from the optimized experiment design.
	Under some assumptions, the optimized design achieves $\cO(\Gamma^{-1})$ mean squared error of the estimators given simulation budget $\Gamma$.
	Numerical experiments demonstrate that our design outperforms a state-of-the-art design that pools replications via regression.
\end{abstract} \hspace{10pt}

\keywords{nested simulation, input uncertainty quantification, simulation experiment design, importance sampling, likelihood ratio method, bi-level optimization}

\section{Introduction}\label{sec:intro}
We consider a simulation experiment design problem known as \emph{nested simulation}~\citep{hong2017} whose objective is to estimate some functional of a conditional mean of a stochastic simulation output.
Specifically, we are interested in quantities of the form
\begin{equation*}
	\rho\left( \mu(\btheta) \right),\quad \mu(\btheta)=\E_{\btheta}[g(\bX)] \equiv \E\left[g(\bX)|\btheta\right],
\end{equation*}
where $\btheta$ is a simulation input vector referred to as the \emph{outer scenario}, $\bX$
is a random vector whose distribution is parameterized by $\btheta$, and $g(\bX)$ is the stochastic simulation output. Each simulation run that generates $g(\bX)$ given $\btheta$ is referred to as an \emph{inner replication}. Thus, $\mu(\btheta)$ represents the conditional mean of the simulation output given the outer scenario, $\btheta$.
Finally, $\rho$ is a functional that maps the distribution of $\mu(\btheta)$ to a real number.
Two classes of $\rho$ are considered in this study:
\begin{enumerate}[label = (\roman*), itemsep=0pt, topsep=0pt]
	\item $\rho\left( \mu(\btheta) \right) = \E[\zeta(\mu(\btheta))]$ for some real-valued function $\zeta$.
	\item $\rho\left( \mu(\btheta) \right) = \inf\left\{z\in\real: \mathbb{P}(\mu(\btheta) \leq z) \geq \alpha\right\}$ for some user-specified $0<\alpha<1$, i.e., the $\alpha$-quantile or the $\alpha$-level Value-at-Risk (VaR) of $\mu(\btheta)$.
\end{enumerate}
We refer to a generic $\rho(\mu(\btheta))$ as the \emph{nested statistic} in this paper.

Nested simulation is a common tool for quantifying uncertainty of a stochastic simulation model. There is rich literature on simulation model risk analysis~\citep{Barton2022} that adopts the prediction interval of $\mu(\btheta)$ (i.e., quantiles) to quantify the variability in the simulation output caused by uncertain input parameter $\btheta$. The enterprise risk management (ERM) literature~\citep{broadie2015,hong2017,dang_feng_hardy_2022,wang2022smooth} also applies nested simulation to manage risk exposures of portfolios of complex financial instruments.
In this case, $\mu(\btheta)$ is typically a financial portfolio's profit/loss at a future time which depends on underlying risk factors such as equity prices and interest rates represented by $\btheta$.
Popular choices for $\rho$ include exceedance probability, variance, quantile (Value-at-Risk), and Conditional Value-at-Risk (CVaR)~\citep{rockafellar2002conditional}.

Standard nested simulation (SNS) is a classical experiment design that samples $M$ outer scenarios, runs $N$ inner replications at each sampled outer scenario $\btheta$ to estimate $\mu(\btheta)$, and computes a nested statistic from them.
Estimating a nested statistic with small bias and variance requires both $M$ and $N$ to be large.
Thus, designing an efficient nested simulation experiment has attracted much attention from researchers and practitioners.

Our objective is to propose an experiment design from which various nested statistics can be computed with estimation errors comparable to SNS while spending significantly fewer replications. We propose an experiment design that has different numbers of inner replications at different outer scenario.
Furthermore, we focus on the case where the experimenter has no control over outer scenario generation and \emph{must work with a given set of outer scenarios}.
Indeed, such a case arises frequently in practice. For instance,~\cite{GMVCO} describe the nested simulation experiment conducted at General Motors (GM) to quantify uncertainty in their market simulator. In their context, $\btheta$ is a consumer utility parameter vector for vehicle features whose distribution is estimated from a conjoint analysis. Specifically, $\btheta$ has a few thousand dimensions, where each represents a survey respondent's preference for a vehicle feature (e.g., color-black). They impose a Bayesian prior on $\btheta$ and updates it to the posterior by incorporating the conjoint data assuming that the respondents make their choices following a discrete choice model. Since there is no conjugacy to exploit in this model, a Markov chain Monte Carlo algorithm is applied to generate an approximately independent and identically distributed (i.i.d.) sample from the posterior, which discards many realized $\btheta$s to avoid high positive correlations among them. This procedure takes significantly longer than inner replications (market simulation). Thus, it is more natural to consider the computational budget for generating $\btheta$s separately from the cost of inner replications.
Moreover, because GM runs several other market analyses using a common set of $\btheta$s across different business divisions, it is required to work with the same given set of $\btheta$s for uncertainty quantification for consistency.

In practical ERM applications, the underlying risk factors (outer scenarios) may be provided by regulators or by a separate modeling team in a firm, so the experimenter has little control over them.~\cite{risk2018sequential} point out that ``the outer
scenarios are usually treated by practitioners as a fixed object (namely coming from an exogenous economic scenario generator (ESG))...one cannot add (or subtract) further scenarios''~\cite{li2021nested} also mention that ``(in nested simulation) it is often necessary to run a pre-determined set of economic scenarios to project the institution's
cash flows''.
~\cite{ha2022least} mention that ``the same firm-wide scenarios are typically used for evaluating the company's asset and liability portfolios''.
Therefore, it has an important practical impact to study the design where the number of outer scenarios is fixed.

A long-studied experiment design question for nested simulation is how to allocate given simulation budget (the number of inner replications) $\Gamma$ to minimize estimation error.
For SNS, we have $\Gamma=NM$ and the question boils down to choosing $M$ and $N$ so that the nested statistic estimator's mean squared error (MSE) is minimized.~\cite{lee1998} shows that for an exceedance probability estimator, its bias and variance respectively diminish in $\cO(N^{-1})$ and $\cO(M^{-1})$.\footnote{See Section~\ref{sec:problem} for the definition of $\cO(\cdot)$ and other asymptotic notation used throughout the paper.}
This analysis leads to the asymptotically optimal allocation, $N = \cO(\Gamma^{1/3})$ and $M=\cO(\Gamma^{2/3})$.~\cite{gordy2010} establish the same result for a quantile estimator and analyze the bias of CVaR.
~\cite{sun2011} study estimating the variance of $\mu(\btheta)$ using analysis of variance and shows that as $M$ increases, the choice of $N$ minimizing the asymptotic variance of the variance estimator converges to a constant.

Recent studies depart from the $M$-outer-$N$-inner experiment design to enhance efficiency.~\cite{broadie2011} propose a sequential experiment design to estimate an exceedance probability, which selects outer scenarios adaptively and allocates inner replications to the scenarios more critical for reducing the MSE of the estimator;
they achieve asymptotic MSE of $\cO(\Gamma^{-4/5+\epsilon})$ for any positive $\epsilon$.~\citet{giles2019} extend their idea to propose a multi-level Monte Carlo estimator.
Another line of research considers \emph{pooling} inner replications from some or all outer scenarios to estimate $\mu(\btheta)$ for each outer scenario $\btheta$.
These approaches fit a metamodel for $\mu(\btheta)$ by running an initial experiment design, then use the fitted model to estimate a nested statistic with no additional inner replications.~\cite{liustaum2010} adopts a stochastic kriging model to estimate CVaR.~\cite{broadie2015} applies a regression model to estimate Type-(i) nested statistic.
With the optimal choice of $M=\Gamma$ and $N=1$ for the initial experiment, they show that the MSE of the nested statistic estimator converges in $\cO(\Gamma^{-1+\delta})$ for any $\delta>0$ until it reaches an asymptotic bias level.~\cite{hong2017} adopts nonparametric kernel regression and the $k$-nearest-neighbor estimator to pool inner replications from nearby outer scenarios to estimate each $\mu(\btheta)$.
For Type-(i) nested statistic, their optimal parameter choice leads to MSE convergence rate of $\cO(\Gamma^{-\min\{1,4/(d+2)\}})$, where $d$ is the dimension of $\btheta$.~\cite{wang2022smooth} considers kernel ridge regression, which results in the MSE convergence rate between $\cO(\Gamma^{-1})$ and $\cO(\Gamma^{-2/3})$, depending on the smoothness of $\mu(\cdot)$.

A common assumption in aforementioned studies is that given budget $\Gamma$, the experimenter can choose the values of $M$ and $N$.
Hence, it is sensible to seek for estimation-error-minimizing $M$ (and $N$)
as a function of $\Gamma$. However, the same assumption does not apply to when the $M$ outer scenarios are fixed.
Another common feature of the reviewed studies is that an equal number of inner replications are spent at all outer scenarios;~\cite{broadie2011} and~\citet{giles2019} are exceptions, however, they study sequential experiment designs whereas we propose a single-stage experiment design. By single stage, we mean that no initial experiment is required to fit a metamodel as in other work.

We identify two key experiment design decisions for our pooling scheme: (1) \emph{At which outer scenarios to run inner simulation and how many replications to run?} (2) \emph{How to pool these replications} when estimating the conditional mean at each outer scenario?
The former is referred to as the \emph{sampling decision} and the latter as the \emph{pooling decision}.

We propose a new single-stage experiment design that makes both optimal sampling and pooling decisions via the likelihood ratio (LR) method by solving a bi-level optimization problem.
We estimate each $\mu(\btheta)$ by taking a weighted average of the LR estimators defined between $\btheta$ and all $M$ outer scenarios. However, some pairwise LR estimators may have infinite variances.
We adopt an LR estimator variant that allows us to (approximately) assess its variance and hence evaluate the efficiency of pooling each pair of the outer scenarios \emph{prior to running any inner replications}.

Utilizing this feature, our bi-level optimization problem has a constraint that requires the pooled estimator of $\mu(\btheta)$ for each $\btheta$ to have variance (approximately) no larger than that of a standard Monte Carlo (MC) estimator with $N$ i.i.d. samples.
The upper-level problem makes the sampling decision that minimizes the total simulation budget.
The lower-level problem makes the pooling decision by finding the variance-minimizing weights for each $\btheta$ given any sampling decision.
Unlike other pooled nested simulation designs, ours may result in running no replication at some outer scenarios.
We reformulate the bi-level problem into a linear program (LP) with guaranteed feasibility.
The total simulation budget for our design is the optimal objective value of the LP. 

While our method works for any $N$ that the experimenter provides, a question arises on how to choose $N$. When $M$ is fixed, even if $N$ is infinity, the nested statistic estimator has nonzero MSE.
Hence, we argue that it is sensible to increase $N$ just enough so that the MSE convergence rate is not slowed down by $N$.
The functional relationship between $N$ and $M$ is difficult to derive for finite $M$ in general.
However, by analyzing the convergence rate of the estimator as $M,N\to\infty$, we obtain a guidance on how to choose $N$ relative to $M$.
We show that the MSE convergence rate of the Type-(i) nested statistic estimator computed from our experiment design is $\mathcal{O}(N^{-1})+\mathcal{O}(M^{-1})$. Similarly, the Type-(ii) statistic (quantile) converges to the true quantile in $\cO_p(M^{-1/2})+\cO_p(N^{-1/2})$. Both results suggest $N=\Theta(M)$.

Caution must be taken when interpreting our asymptotic analysis results to avoid the misconception that it is less efficient than SNS, which has an MSE convergence rate of $\mathcal{O}(N^{-2})+\mathcal{O}(M^{-1})$.
Recall that in our design, $N$ is merely a parameter in the bi-level program.
This parameter ensures that the pooled LR estimator's approximate variance is no larger than the variance of the standard MC estimator with $N$ i.i.d. replications.
Our design does not actually run $N$ replications at the $M$ scenarios, and its simulation budget tends to be much less than $MN$.
Indeed, under some mild conditions, we show that the optimized simulation budget grows in $\cO(M)$ when $N=M$ is adopted. This result combined with the MSE convergence rate analyses suggests that the MSE of a type-(i) statistic estimated from our design converges in $\cO(\Gamma^{-1})$ given a simulation budget $\Gamma$.

Lastly, to place our contribution in the literature, we provide a brief review on two closely related methods: importance sampling (IS) and the LR method.
These are mathematically identical but differ in their goals.
IS concerns finding a variance-minimizing sampling distribution for a single estimator \emph{prior to} running simulations.
While we also consider sampling decisions, our study differ from classical IS~\citep{hesterberg1988advances,owenbook,glassermanbook} because we consider multiple estimators at the same time.
The LR method aims to save computation by reusing outputs \emph{after} the simulations are run.
This method has been applied to improve efficiency of metamodeling~\citep{dong2018unbiased}, gradient estimation~\citep{l1990unified,l1993two,glasserman2014robust}, and sensitivity analysis and optimization~\citep{rubinstein1993discrete,kleijnen1996optimization,fu2016handbook,maggiar2018derivative}.
The LR method is also applied in the ``green simulation'' experiment design~\citep{feng2015green,feng2017green}, which recycles and reuses simulation outputs previously made to save computations and improve precision.
Similar to these approaches, our pooling decision is also guided by the LR method once the sampling decision is made.

In recent development of simulation model risk quantification literature,~\cite{zhouliu2019} apply the LR method to pool inner replications of the nested simulation with equal weights and demonstrate significant computational savings.~\cite{zhang2022sample} also propose an LR-based approach that pools inner replications simulated from one common sampling distribution.
Neither of these studies optimizes the sampling decision. Moreover, both designs have a risk of making some outer scenarios' conditional mean estimators have large or even unbounded variances.
Our optimal experiment design avoids such a case by assigning a small (or zero) pooling weight to an outer scenario pair, if their LR estimator is deemed to have a large variance.

Our preliminary work~\citep{feng2019efficient} explores application of various kinds of LR estimators to make the simulation risk analysis more efficient. However, we have not considered optimizing the sampling and pooling decisions in the earlier work. Moreover, the theoretical convergence analyses on different types of nested statistics are new and we focus exclusively on the self-normalized LR estimator in the current paper.

The remainder of the paper is organized as follows. Section~\ref{sec:problem} provides a mathematical framework for the nested simulation problem. Section~\ref{sec:estimator} defines our pooled LR estimator and discusses its properties. In Section~\ref{sec:design}, we propose the experiment design framework to optimize sampling and pooling decisions. Asymptotic properties of the nested simulation statistics computed from the experiment design are analyzed in Section~\ref{sec:asymptotics} followed by the analysis on the simulation budget growth rate in Section~\ref{sec:opt.obj}. Empirical evaluations and conclusions are provided in Sections~\ref{sec:expr} and~\ref{sec:conclusion}, respectively.

\section{Problem Statement}\label{sec:problem}
Consider a nested simulation experiment, where the outer scenarios are $M$ i.i.d.\ multidimensional random vectors, $\btheta_1,\ldots,\btheta_M$, whose support is $\bTheta \subseteq \real^p$.
With slight abuse of notation, we treat $\btheta_1,\ldots,\btheta_M$ as realized outer scenarios rather than random vectors until Section~\ref{sec:asymptotics}.
For any $\btheta\in\bTheta$, let $h(\bx;\btheta)$, $\mu(\btheta)$, and $\Var_{\btheta}[g(\bX)]$ represent the conditional density function, mean, and variance of the input vector $\bX$ given $\btheta$, respectively.
When no ambiguity arises, we also use the shorthand notation $\mu_i = \mu(\btheta_i)$ for convenience.
The following assumption facilitates our discussion.
\begin{assumption}\label{assm:MCbasic}
	For all $\btheta\in\bTheta$, $h(\bx;\btheta)$ is a well-defined probability distribution function and has a common support $\cX\subseteq \real^d$ for a fixed dimension $d$.
	Furthermore, the simulation output function $g$ does not depend on $\btheta$ and $\sup_{\btheta\in\bTheta}\Var_{\btheta}[g(\bX)]<\infty$.
\end{assumption}
The common support, $\cX$, ensures that the LR between any two outer scenarios is well-defined.
The common $g$ allows us to reuse $\bX$ {generated} from $\btheta_j\neq \btheta_i$ to estimate~$\mu_i$ via the LR method.
The fixed input dimension, $d$, is a limitation of our method.
We refer the readers to~\cite{feng2019efficient} on applying the LR method to improve nested simulation efficiency when the dimension of $\bX$ is random.
Lastly, $\sup_{\btheta\in\bTheta}\Var_{\btheta}[g(\bX)]<\infty$ ensures that the standard Monte Carlo (MC) estimator for $\mu(\btheta)$ for all $\btheta\in\bTheta$ has finite variance.

Recall that we consider two types of nested statistics.
For Type-(i) statistic, we consider three choices for $\zeta$: (1) The indicator function, $\zeta(\mu_i) = I(\mu_i \leq \xi)$ for some fixed $\xi\in\real$; the indicator function can be used to estimate the exceedance probability that a portfolio's loss beyond $\xi$, that is, $\Pr(\mu_i> \xi)$.
(2) The hockey stick function, $\zeta(\mu_i) = \max\{\mu_i-\xi,0\} =(\mu_i-\xi)I(\mu_i>\xi)$, which is commonly used in ERM applications to price derivatives or compound options~\citep{glassermanbook}.
(3) Smooth function $\zeta$ with a bounded second derivative. An example of such $\zeta$ is the squared loss function given target $\xi$, $\zeta(\mu_i) = (\mu_i-\xi)^2$.

In SNS with $N$ inner replications for each outer scenario, the MC estimator of $\mu_i$,
$\estmuMC_i \equiv \sum_{k=1}^{N} g(\bX_{k})/N$, is computed for each $i$, where $\bX_{k} \stackrel{i.i.d.}{\sim} h(\bx;\btheta_i)$ is the input vector generated within the $k$th inner replication made at $\btheta_i$.
Under Assumption~\ref{assm:MCbasic}, $\Var_{\btheta_i}[\estmuMC_i]=\Var_{\btheta_i}[g(\bX)]/N$ for any $\btheta_i\in\bTheta$. From $\estmuMC_1,\ldots,\estmuMC_M$, $\E[\zeta(\mu_i)]$ can be estimated by $\sum_{i=1}^M \zeta(\estmuMC_i)/M$.
Also, the $\alpha$-quantile of $\mu_i$ can be estimated by the empirical quantile $\estmuMC_{(\lceil M\alpha \rceil)}$, where $\estmuMC_{(i)}$ is the $i$th order statistic of $\estmuMC_1,\ldots,\estmuMC_M$.

Throughout the paper, for positive sequences $\{a_n\}$ and $\{b_n\}$, $a_n = \cO(b_n)$ means that there exists constant $\bar{c}>0$ such that $a_n \leq \bar{c}b_n$ for all $n\in\mathbb{Z}^+$ and $a_n = o(b_n)$ implies $a_n/b_n\to0$ as $n\to\infty$. Moreover, $a_n=\Theta(b_n)$ means that there exist constants $\bar{c}, \underline{c}>0$ such that $\underline{c}b_n \leq a_n\leq \bar{c}b_n$ for all $n\in\mathbb{Z}^+$. For sequence of random variables $\{X_n\}$, we say $X_n = \cO_p(b_n)$, if for any $\varepsilon>0$ there exist $\underline{c}>0$ and $\bar{n}>0$ such that $\Pr\{|X_n/b_n|>\underline{c}\}<\varepsilon$ for all $n> \bar{n}$.

\section{Conditional Mean Estimator via Likelihood Ratio Method}\label{sec:estimator}
In this section, we continue to treat $\btheta_1,\ldots,\btheta_M$ as realized outer scenarios.
We apply the LR method to pool inner replications from several scenarios to estimate the conditional mean at each scenario.
Consider two scenarios $\btheta_i$ and $\btheta_j$, where $\btheta_i$ is the target scenario whose conditional mean $\mu_i$ is desired and $\btheta_j$ is the sampling scenario at which inner replications are generated.
Then, from the change of measure, we have
\begin{equation}\label{eq:muthetaLR}
	\mu_i = \E_{\btheta_i}\left[g(\bX)\right] = \E_{\btheta_j}\left[g(\bX)\frac{h(\bX;\btheta_i)}{h(\bX;\btheta_j)}\right] = \E_{\btheta_j}\left[g(\bX) W_{ij}(\bX)\right],
\end{equation}
where $W_{ij}(\bX) \equiv \frac{h(\bX;\btheta_i)}{h(\bX;\btheta_j)}$ is the LR of the input vector, $\bX$, between $\btheta_i$ and $\btheta_j$.
Under Assumption~\ref{assm:MCbasic}, $W_{ij}(\bX)$ is well-defined and $\E_{\btheta_j}[W_{ij}(\bX)] = 1$ for any $\btheta_i,\btheta_j\in\bTheta$. From~\eqref{eq:muthetaLR}, one can define an unbiased LR estimator of $\mu_i$, $\widehat{\mu}_{ij} \equiv \sum_{k=1}^{N_j} g(\bX_k)W_{ij}(\bX_k)/N_j$, where $N_j$ is the number of inner replications made at $\btheta_j$. We refer to $\widehat{\mu}_{ij}$ as the nominal LR estimator.

A diagnostic measure for assessing efficiency of an LR estimator is the \emph{effective sample size} (ESS), which is defined as the number of replications required for the MC estimator to achieve the same variance as the LR estimator.
So, the larger the ESS, the more precise the LR estimator is.
Because $\Var[\estmuIS_{ij}] = \Var_{\btheta_j}[g(\bX)W_{ij}(\bX)]/N_j$, the ESS of the nominal LR estimator is
$\frac{\Var_{\btheta_i}[g(\bX)]}{\Var_{\btheta_j}[g(\bX)W_{ij}(\bX)]} N_j$.
Due to its dependence on $g(\bX)$, this ESS cannot be evaluated analytically in general and must be estimated via simulation.

In this study, we adopt the self-normalized LR estimator:
\begin{equation}\label{eq:muhatthetaSN}
	\estmuSN_{ij}= {N_j}^{-1}\sum\nolimits_{k=1}^{N_j} g(\bX_k)\widetilde{W}_{ij,k},\quad\bX_k \stackrel{i.i.d.}{\sim} h(\bx;\btheta_j),\,\forall k=1,\ldots,N_j,
\end{equation}
where $\widetilde{W}_{ij,k} = \frac{W_{ij}(\bX_k)}{\sum\nolimits_{\ell=1}^{N_j}W_{ij}(\bX_\ell)/N_j}$ is the self-normalized LR from the $k$th replication so that the sample average of $\widetilde{W}_{ij,1},\widetilde{W}_{ij,2},\ldots,\widetilde{W}_{ij,N_j}$ becomes $1$.
Lemma~\ref{lem:consistency} states asymptotic properties of $\estmuSN_{ij}$. Note that $\bX$ is dropped from $W_{ij}(\bX)$ henceforth for notational convenience.

\begin{lemma}\label{lem:consistency}
	Consider any given target and sampling scenarios $\btheta_i,\btheta_j\in\bTheta$. Under Assumption~\ref{assm:MCbasic}, (i) $\estmuSN_{ij}$ converges almost surely to $\mu_i$ as $N_j$ increases; and (ii)
	with additional regularity conditions in Assumption~\ref{assm:moment.conditions.for.LR} in~\ref{app:technicalassumptions} of the Online Supplement,
	\begin{equation}\label{eq:asymp.bias.var}
		\E_{\btheta_j}[\estmuSN_{ij}] - \mu_i =
		\tfrac{-\E_{\btheta_j}[W_{ij}^2(g(\bX)-\mu_i)]}{N_j} + o(N_j^{-1}); 	\Var_{\btheta_j}[\estmuSN_{ij}] = \tfrac{\E_{\btheta_j}[W_{ij}^2(g(\bX)-\mu_i)^2]}{N_j} + o(N_j^{-1}).
	\end{equation}
\end{lemma}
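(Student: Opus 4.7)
}

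The self-normalized estimator can be written as a ratio $\estmuSN_{ij} = A_{N_j}/B_{N_j}$, where $A_{N_j} = (1/N_j)\sum_{k=1}^{N_j} g(\bX_k)W_{ij}(\bX_k)$ and $B_{N_j} = (1/N_j)\sum_{k=1}^{N_j} W_{ij}(\bX_k)$. Under Assumption~\ref{assm:MCbasic}, $\E_{\btheta_j}[g(\bX)W_{ij}]=\mu_i$ and $\E_{\btheta_j}[W_{ij}]=1$ (as already noted following~\eqref{eq:muthetaLR}), and both expectations are finite. For part~(i), the plan is to apply the strong law of large numbers separately to $A_{N_j}$ and $B_{N_j}$ to obtain almost sure convergence to $\mu_i$ and $1$, respectively, and then invoke the continuous mapping theorem on the ratio. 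The only mild issue is that $B_{N_j}$ could in principle vanish, but since its a.s.\ limit is $1>0$, continuity of division at $(\mu_i,1)$ applies.

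For part~(ii), the plan is a Taylor/delta-method expansion of the ratio around $(\mu_i,1)$. Let $Z_k := W_{ij}(\bX_k)(g(\bX_k)-\mu_i)$ and $U_k := W_{ij}(\bX_k)-1$, so that $\E_{\btheta_j}[Z_k]=0$ and $\E_{\btheta_j}[U_k]=0$. Then
\begin{equation*}
\estmuSN_{ij} - \mu_i \;=\; \frac{A_{N_j}-\mu_i B_{N_j}}{B_{N_j}} \;=\; \frac{\bar Z_{N_j}}{1+\bar U_{N_j}},\qquad \bar Z_{N_j}:=\tfrac{1}{N_j}\sum_k Z_k,\;\bar U_{N_j}:=\tfrac{1}{N_j}\sum_k U_k.
\end{equation*}
I would expand the denominator geometrically as $1 - \bar U_{N_j} + \bar U_{N_j}^2/(1+\bar U_{N_j})$ to obtain
\begin{equation*}
\estmuSN_{ij}-\mu_i \;=\; \bar Z_{N_j} - \bar Z_{N_j}\bar U_{N_j} + R_{N_j},
\end{equation*}
where $R_{N_j}=\bar Z_{N_j}\bar U_{N_j}^2/(1+\bar U_{N_j})$. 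Taking expectations and using independence of replications with $\E[Z_k]=\E[U_k]=0$, only the diagonal terms survive, giving $\E[\bar Z_{N_j}\bar U_{N_j}] = N_j^{-1}\E_{\btheta_j}[ZU] = N_j^{-1}\E_{\btheta_j}[W_{ij}^2(g(\bX)-\mu_i)]$, which produces the leading bias term with the correct sign. Similarly, $\Var[\bar Z_{N_j}]=N_j^{-1}\E_{\btheta_j}[W_{ij}^2(g(\bX)-\mu_i)^2]$ gives the claimed leading variance.

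The main obstacle is controlling the remainder so that $\E[R_{N_j}]=o(N_j^{-1})$ and $\Var[\bar Z_{N_j}\bar U_{N_j}]$ plus cross-terms involving $R_{N_j}$ are $o(N_j^{-1})$. This is exactly where the extra moment/regularity conditions in Assumption~\ref{assm:moment.conditions.for.LR} enter: they should guarantee finiteness of sufficiently high moments of $W_{ij}$ and $W_{ij}g(\bX)$ so that (a) Cauchy--Schwarz bounds on $\E|\bar Z_{N_j}\bar U_{N_j}^2|$ yield the rate $O(N_j^{-3/2})$, and (b) the event $\{|\bar U_{N_j}|\geq 1/2\}$ on which the geometric expansion could misbehave has probability decaying faster than any polynomial in $N_j$ (by Markov's inequality at a suitable moment), so its contribution is negligible. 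Combining the controlled leading terms with the $o(N_j^{-1})$ remainder bounds yields~\eqref{eq:asymp.bias.var}.
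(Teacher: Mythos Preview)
Your plan is essentially the paper's own argument: write $\estmuSN_{ij}$ as the ratio $\overline{W_{ij}g(\bX)}/\overline{W_{ij}}$ and expand around $(\mu_i,1)$, so that $\bar Z_{N_j}$ contributes the leading variance and $-\bar Z_{N_j}\bar U_{N_j}$ the leading bias in~\eqref{eq:asymp.bias.var}. The only difference is remainder bookkeeping: the paper keeps the second-order Taylor remainder in Lagrange form with intermediate points $(A,B)$ and invokes Assumption~\ref{assm:moment.conditions.for.LR}(ii)---whose $B^{-4}$, $B^{-6}$ factors are designed precisely to dominate those Lagrange terms uniformly in $N_j$---together with dominated convergence, whereas you use the exact identity $1/(1+u)=1-u+u^{2}/(1+u)$ and event-splitting on $\{|\bar U_{N_j}|\geq 1/2\}$.

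One small gap in your sketch: Assumption~\ref{assm:moment.conditions.for.LR}(i) only gives fourth moments, so Markov's inequality yields $\Pr\{|\bar U_{N_j}|\geq 1/2\}=O(N_j^{-2})$, not decay faster than every polynomial. More importantly, on that event $1+\bar U_{N_j}=\overline{W_{ij}}$ can be arbitrarily close to $0$, so a probability bound alone does not force $\E\big[|R_{N_j}|\,\mathbf{1}\{|\bar U_{N_j}|\geq 1/2\}\big]=o(N_j^{-1})$; you still need a uniform-in-$N_j$ integrability condition on $\bar Z_{N_j}\bar U_{N_j}^{2}/\overline{W_{ij}}$. That is exactly what Assumption~\ref{assm:moment.conditions.for.LR}(ii) supplies, and it explains why that assumption is stated in its somewhat unusual $\sup_{N_j}\sup_{B}$ form rather than as a pure moment condition on $W_{ij}$ and $g(\bX)$.
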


\noindent Part (i) of Lemma~\ref{lem:consistency} is proved in Theorem 9.2 of~\cite{owenbook}. Results similar to Part~(ii) can be found in~\cite{hesterberg1988advances} and~\cite{owenbook}.

As shown in Lemma~\ref{lem:consistency}, the self-normalized LR estimator is biased but consistent.
We adopt the self-normalized LR estimator in our study because it has a convenient ESS approximation.~\cite{kong1992} shows that the variance of $\estmuSN_{ij}$ can be approximated as
\begin{equation}
	\label{eq:ESS}
	\Var_{\btheta_j}[\estmuSN_{ij}] \approx \Var_{\btheta_i}[g(\bX)]\frac{ \E_{\btheta_j}[W_{ij}^2]}{N_j}.
\end{equation}
Then, the approximate ESS of $\estmuSN_{ij}$ is $\ESS_{ij} \equiv N_j/\E_{\btheta_j}[W_{ij}^2]$, which is free of $g(\bX)$ and can be computed analytically in some cases.
In Lemma~\ref{lem:exp.EW2} (Section~\ref{sec:opt.obj}), we derive a closed-form expression for $\E_{\btheta_j}[W_{ij}^2]$ when $h(\bx;\btheta)$ belongs to the exponential family.

In a more detailed derivation,~\cite{liu1996} shows the approximation error of~\eqref{eq:ESS} is
\begin{equation}\label{eq:ESS.liu}
	\Var_{\btheta_j}[\estmuSN_{ij}] = \left(\Var_{\btheta_i}[g(\bX)]{ \E_{\btheta_j}[W_{ij}^2]} + \E_{\btheta_i}[(W_{ij}-\E_{\btheta_i}[W_{ij}])(g(\bX)-\mu_i)^2]\right){N_j^{-1}} + o(N_j^{-1}),
\end{equation}
which holds under Assumption~\ref{assm:moment.conditions.for.LR} and comments that $\E_{\btheta_i}[(W_{ij}-\E_{\btheta_i}[W_{ij}])(g(\bX)-\mu_i)^2]N_j^{-1}$ omitted from~\eqref{eq:ESS} may be small when $g(\bX)$ is relatively flat, however, when it is large, the approximation error of~\eqref{eq:ESS} can be significant.
Nevertheless,~\cite{liu1996} recommends using~\eqref{eq:ESS} as a rule of thumb.
We adopt this recommendation and evaluate efficiency of $\widetilde{\mu}_{ij}$ using $\ESS_{ij}$ for each $(\btheta_i,\btheta_j)$ pair prior to running any inner simulations.
In Section~\ref{subsec:ERM}, we empirically demonstrate that Approximation~\eqref{eq:ESS} performs well with an ERM example.

\noindent \textbf{Remark 1}. Some studies~\citep{martino2017effective,elvira2018rethinking} define the ESS as the number of replications such that the LR estimator's MSE matches $\Var[\estmuMC_i]$. As seen in Lemma~\ref{lem:consistency}, the squared bias diminishes faster than the variance asymptotically, thus we consider matching the variances.

\section{Optimal Nested Simulation Experiment Design}\label{sec:design}
Suppose $M$ realized outer scenarios, $\btheta_1,\ldots,\btheta_M$, are given.
We estimate the conditional means at all $M$ scenarios by
pooling the self-normalized LR estimators.
We require that, for each $\btheta_i$, the pooled estimator's variance is no larger than $\Var[\bar{\mu}_i]/N$, which is the variance of the SNS estimator with $N$ inner replications in each scenario; we refer to this requirement as the \emph{precision requirement}.
Our goal is to minimize the total number of inner replications run at the $M$ scenarios, that is, the simulation budget.
In this section, we assume that the experimenter chooses an appropriate $N$.

Suppose $N_j$ i.i.d.\ inner replications are run at each $\btheta_j$. Some outer scenarios may have zero replications, i.e., $N_j = 0$.
Then, $\widetilde{\mu}_{ij}$ is well-defined for any $(\btheta_i,\btheta_j), 1\leq i,j\leq M$, if $N_j>0$.
For each $i$, consider the following pooled LR estimator of $\mu_i$:
\begin{equation}\label{eq:muhatthetaSNpooled}
	\estmuSN_i \equiv \sum\nolimits_{{j=1, N_j>0}}^M \gamma_{ij} \estmuSN_{ij}, \quad \sum\nolimits_{{j=1, N_j>0}}^M\gamma_{ij} = 1,
\end{equation}
where $\gamma_{ij}, j=1,\ldots,M$, are the pooling weights.
In SNS, $\estmuMC_i$ only uses the inner replications simulated at $\btheta_i$.
In contrast, the pooled LR estimator $\estmuSN_i$ pools all inner replications from all sampling scenarios with weights $\{\gamma_{ij}:j=1,\ldots,M\}$.
As all inner replications are run independently, the variance of the pooled LR estimator is given by
$\Var[\estmuSN_i] = \sum_{j=1,N_j>0}^M \gamma_{ij}^2 \Var_{\btheta_j}[\estmuSN_{ij}]$.

Recall that our precision requirement for each $\btheta_i$ is $\Var[\estmuSN_i]\leq \Var[\estmuMC_i]$.
From~\eqref{eq:ESS}, we have
\begin{equation}\label{eq:var.constraint}
	\sum\nolimits_{{j=1, N_j>0}}^M \gamma_{ij}^2 \Var_{\btheta_j}[\estmuSN_{ij}] =\Var[\estmuSN_i]\leq \Var[\estmuMC_i]= \frac{\Var_{\btheta_i}[g(\bX)]}{N} \stackrel{\eqref{eq:ESS}}{\Rightarrow}
	\sum\nolimits_{{j=1, N_j>0}}^M \frac{\gamma_{ij}^2 \E_{\btheta_j}[W_{ij}^2]}{N_j} \leq \frac{1}{N}.
\end{equation}
For given sampling decisions $\{N_j\}$, there may be infinitely many feasible weights $\{\gamma_{ij}\}$ that satisfy the constraints.
Among them, it is sensible to choose $\{\gamma_{ij}\}$ that minimizes $\Var[\estmuSN_i]$.
From this insight, we formulate the following bi-level optimization problem:
\begin{align}\label{prob:bi-level}
	& \min_{N_j \geq 0,\gamma_{ij} } &  & \sum\nolimits_{j=1}^M N_j                                                                                                                                                                                                                       \\
	& \st                            &  & \sum\nolimits_{{j=1, N_j>0}}^M \frac{\gamma_{ij}^2 \E_{\btheta_j}[W_{ij}^2]}{N_j} \leq \frac{1}{N}, \qquad\qquad\qquad \forall i=1,\ldots,M \nonumber                                                                                           \\
	&                                &  & \begin{aligned}\label{prob:bi-levelsub}
		\{\gamma_{ij}\} \in\argmin_{\gamma_{ij}} \left\{\sum_{{j=1, N_j>0}}^M \frac{\gamma_{ij}^2 \E_{\btheta_j}[W_{ij}^2]}{N_j}: \sum_{{j=1, N_j>0}}^M \gamma_{ij} = 1\right\},
	\end{aligned} \hspace{10pt} \forall i=1,\ldots,M
\end{align}
A brief overview of bi-level optimization is provided in~\ref{app:bi-levelopt}.
The upper-level problem~\eqref{prob:bi-level} makes the \emph{sampling decision} by finding the optimal $\{N_j\}$.
The sampling decision determines not only at which $\btheta_j$s to run replications at, but also how many replications to run at each.
The lower-level problem~\eqref{prob:bi-levelsub} defined for each target scenario $i$ makes the \emph{pooling decision} to find $\{\gamma_{ij}\}$ that minimizes the (approximate) variance of $\estmuSN_i$ given the sampling decision $\{N_j\}$.
We ignore integrality constraints for $\{N_j\}$ here.
In numerical experiments, we assign $\lceil N_j \rceil$ inner replications to $\btheta_j$.

Given the sampling decision $\{N_j\}$, for each $i$ the lower-level problem~\eqref{prob:bi-levelsub} is a simple quadratic program that can be solved analytically via the Karush-Kuhn-Tucker (KKT) conditions:
The Lagrangian function of the $i$th lower-level problem is $\mathcal{L}(\gamma_{ij},\lambda)=\sum_{j=1, N_j>0}^M \frac{\gamma_{ij}^2 \E_{\btheta_j}[W_{ij}^2]}{N_j}+\lambda(1-\sum_{j=1, N_j>0}^M \gamma_{ij})$.
The corresponding KKT condition is $\frac{2\gamma_{ij} \E_{\btheta_j}[W_{ij}^2]}{N_j}-\lambda=0$ for all $j=1,\ldots,M$ such that $N_j>0$, which implies $\gamma_{ij}= \frac{\lambda N_j}{2\E_{\btheta_j}[W_{ij}^2]}$.
Hence, the optimal pooling decision, $\gamma_{ij}^\star$, is proportional to $\frac{N_j}{\E_{\btheta_j}[W_{ij}^2]}$.
Considering $\sum_{j=1, N_j>0}^M \gamma_{ij}=1$, we have
$\gamma^\star_{ij} = \frac{N_j/\E_{\btheta_j}[W_{ij}^2]}{\sum_{k=1}^M N_k/\E_{\btheta_k}[W_{ik}^2]}$, for all $j=1,\ldots,M$ such that $N_j>0$.
Note that we drop the condition, $N_j>0$ since the same expression produces $\gamma_{ij}^\star = 0$ when $N_j = 0$, as desired.
Notice that $\gamma_{ij}^\star = 0$ when $\E_{\btheta_j}[W_{ij}^2]=\infty$ even if $N_j>0$.
This is the case when $\btheta_i$ does not pool from the inner replications at $\btheta_j$ because $\Var[\estmuSN_{ij}]$ is large.
Plugging in $\gamma^\star_{ij}$ into~\eqref{prob:bi-levelsub}, the $i$th lower-level problem's optimal objective value is
\begin{equation}\label{eq:lower.obj}
	\sum\nolimits_{j=1}^M \frac{(\gamma^\star_{ij})^2 \E_{\btheta_j}[W_{ij}^2]}{N_j} = \left(\sum\nolimits_{j=1}^M \frac{N_j}{\E_{\btheta_j}[W_{ij}^2]}\right)^{-1}.
\end{equation}
Plugging~\eqref{eq:lower.obj} into the first constraint of the bi-level program, we have:
\begin{align}\label{prob:LP}
	\min_{N_j\geq 0} \hspace{6pt}\sum\nolimits_{j=1}^M N_j \;\;\;\;
	\mbox{subject to} \hspace{6pt}\sum\nolimits_{j=1}^M \frac{N_j}{ \E_{\btheta_j}[W_{ij}^2]} \geq N , \qquad \forall i=1,\ldots,M.
\end{align}
Note that this linear program (LP) is always feasible because $\E_{\btheta_j}[W_{jj}^2] = 1$, so $\{N_j=N, \forall j=1,\ldots,M\}$ satisfies all constraints.
Also note that the constraints can be written as $\sum\nolimits_{j=1}^M \ESS_{ij} \geq N$ for all $ i=1,\ldots,M$.
Namely, these constraints require the total ESS of the pooled estimator to be at least $N$ for each target scenario.
Provided that $\E_{\btheta_j}[W_{ij}^2]$ can be computed a priori, e.g., exponential family,~\eqref{prob:LP} can be solved prior to running any inner replications.
Numerical studies in Section~\ref{sec:expr} show that the optimal objective value of~\eqref{prob:LP} is orders of magnitude smaller than $MN$, which demonstrates that our experiment design significantly reduces the simulation budget compared to SNS.

Let $\{c_j^\star\}$ be an optimal solution of~\eqref{prob:LP} when $N=1$.
Proposition~\ref{prop:linearity.of.Nj} shows that, for fixed outer scenarios, one can solve~\eqref{prob:LP} for $N=1$ and obtain the optimal solution for arbitrary $N$ by simply scaling the solution from the former by $N$.
Therefore, even if the target precision, $N$, is changed \emph{post hoc}, there is no need to resolve~\eqref{prob:LP}.
Such proportionality is also useful for showing asymptotic properties of the pooled estimator in Section~\ref{sec:asymptotics}.

\begin{proposition}\label{prop:linearity.of.Nj}
	Given $\btheta_1,\ldots,\btheta_M$, let $\{c_j\}$ and $\{c_j^\star\}$ be a feasible solution and an optimal solution of~\eqref{prob:LP}, respectively, for $N=1$. Then, $\{N_j=N\cdot c_j\}$ and $\{N_j^\star=N \cdot c_j^\star\}$ are a feasible solution and an optimal solution of~\eqref{prob:LP}, respectively, for arbitrary $N>0$.
\end{proposition}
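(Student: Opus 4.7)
The plan is to exploit the fact that the LP in~\eqref{prob:LP} is homogeneous of degree one in the right-hand-side target $N$: multiplying every decision variable by the same positive constant rescales both the objective and the left-hand side of each constraint by that constant. Concretely, I would establish feasibility first, then optimality by contradiction, each step being a one-line rescaling argument.

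For feasibility, I would take any $\{c_j\}$ feasible for the $N=1$ version. Since $c_j \ge 0$ and $\sum_{j=1}^M c_j/\E_{\btheta_j}[W_{ij}^2] \ge 1$ for every $i$, multiplying the latter inequality by an arbitrary $N>0$ and pushing $N$ into the numerator gives $\sum_{j=1}^M (N c_j)/\E_{\btheta_j}[W_{ij}^2] \ge N$, and $Nc_j \ge 0$. So $\{N c_j\}$ satisfies every constraint of the LP with target $N$.

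For optimality, I would argue by contradiction. Suppose $\{N c_j^\star\}$ is not optimal for the LP with target $N$; then there exists a feasible $\{\widetilde N_j\}$ with $\sum_j \widetilde N_j < N \sum_j c_j^\star$. Define $\widetilde c_j \equiv \widetilde N_j/N$. Dividing the constraint $\sum_j \widetilde N_j/\E_{\btheta_j}[W_{ij}^2] \ge N$ by $N>0$ shows that $\{\widetilde c_j\}$ is feasible for the $N=1$ LP, while $\sum_j \widetilde c_j = (1/N)\sum_j \widetilde N_j < \sum_j c_j^\star$, contradicting the optimality of $\{c_j^\star\}$ in the $N=1$ problem. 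Hence $\{N c_j^\star\}$ must be optimal.

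There is no substantive obstacle here; the only thing to be careful about is handling the $N_j=0$ indices correctly, but since the summation in~\eqref{prob:LP} already runs over all $j$ (with the convention that $\gamma_{ij}^\star = 0$ whenever $N_j=0$, already justified in the paragraph preceding~\eqref{eq:lower.obj}), the rescaling preserves the zero/nonzero pattern of the allocation and no special case analysis is needed. I would conclude by remarking that this proportionality is precisely what is needed in Section~\ref{sec:asymptotics} to study the asymptotic behavior of $\sum_j N_j^\star$ as $N$ varies while the outer scenarios are held fixed.
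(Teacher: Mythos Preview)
Your proof is correct and follows essentially the same approach as the paper: establish feasibility by multiplying the $N=1$ constraints through by $N$, then prove optimality by contradiction, dividing a hypothetical better solution by $N$ to obtain a feasible point for the $N=1$ problem with smaller objective. The paper's argument is identical in structure and length, so there is nothing to add.
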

\begin{proof}
	As $\{c_j\}$ is feasible {for}~\eqref{prob:LP} when $N=1$, multiplying both sides of the inequality constraint of~\eqref{prob:LP} by arbitrary $N$
	shows that $\{N_j= N\cdot c_j\}$ is a feasible solution of the revised problem.
	By means of contradiction, suppose that there exists a feasible solution $\{N_j'\}$ of the revised problem such that $\sum_{j=1}^{M}N_j'<\sum_{j=1}^{M}N_j^\star$.
	Dividing both sides of the constraints by $N$, it is clear that $\{c_j'=N_j'/N\}$ is a feasible solution of~\eqref{prob:LP} when $N=1$.
	However, by construction $\sum_{j=1}^{M}c_j'=(\sum_{j=1}^{M}N_j')/N<(\sum_{j=1}^{M}N_j^\star)/N=\sum_{j=1}^{M}c_j^\star$, which contradicts the premise that $\{c_j^\star\}$ is an optimal solution of~\eqref{prob:LP} when $N=1$.
\end{proof}

In light of Proposition~\ref{prop:linearity.of.Nj}, the optimal pooling weights corresponding to $\{N_j^\star\}$ satisfy
\begin{equation}\label{eq:lower.KKT.N0=1}
	\gamma^\star_{ij} = \frac{c_j^\star/\E_{\btheta_j}[W_{ij}^2]}{\sum_{k=1}^M c_k^\star/\E_{\btheta_k}[W_{ik}^2]},\hspace{14pt} \forall j=1,\ldots,M,
\end{equation}
which implies $\{\gamma_{ij}^\star\}$ does not depend on $N$.

We note that both the optimal pooling decision~\eqref{eq:lower.KKT.N0=1} and the objective value~\eqref{eq:lower.obj} have meaningful interpretations:
Recall that the ESS of $\estmuSN_i$ is approximated by $\ESS_{ij}=N_j/E_{\btheta_j}[W_{ij}^2]$.
For any outer scenario $\btheta_i$,~\eqref{eq:lower.KKT.N0=1} can be written as $\gamma^\star_{ij}=\ESS_{ij}/(\sum_{k=1}^M \ESS_{ik})$.
So, given any sampling decision $\{N_j\}$, the optimal way to pool the estimators $\estmuSN_{ij}$, $j=1,\ldots,M$, is to weight them proportionally to their ESS.
Moreover, plugging $\gamma^\star_{ij}=\ESS_{ij}/(\sum_{k=1}^M \ESS_{ik})$ into the first equality of~\eqref{eq:var.constraint}, we have $\Var[\estmuSN_i]\approx\Var_{\btheta_i}[g(\bX)]/(\sum_{j=1}^M \ESS_{ij})$.
This suggests that the ESS of the optimally pooled estimator $\estmuSN_i$ is equal to the total ESS of the estimators $\estmuSN_{ij}, j=1,\ldots,M$.

Once $\{N_j^\star\}$ and $\{\gamma_{ij}^\star\}$ are found, we run inner replications at $\{\btheta_j\}$ as prescribed by $\{N_j^\star\}$. For each $(i,j)$ pair such that $\gamma_{ij}^\star >0$, we compute $\estmuSN_{ij}$ defined in~\eqref{eq:muhatthetaSN}. The optimally pooled conditional mean estimators are computed as
$\estmuSN_i^\star = \sum\nolimits_{{j=1}}^M \gamma_{ij}^\star \estmuSN_{ij}$, for $i=1,\ldots,M$.
Then, the Type-(i) nested statistic can be estimated by $\widetilde{\zeta} = \sum_{i=1}^M\zeta(\estmuSN_i^\star)/M$.
The $\alpha$-quantile of $\mu_i$ is estimated by the empirical quantile, $\estmuSN_{(\lceil M\alpha\rceil)}^\star$, computed from $\estmuSN_1,\ldots,\estmuSN_M$.

\section{Asymptotic Analysis}\label{sec:asymptotics}
In this section, we treat $\btheta_1,\ldots,\btheta_M$ as random vectors, not realizations, and establish asymptotic properties of the estimated nested statistics, $\widetilde{\zeta}$ and $\estmuSN_{(\lceil M\alpha\rceil)}^\star$, computed from the optimized design. These analyses provide a guidance for selecting $N$ as a function of $M$.
Prior to presenting the main results, we first establish convergence results for $\widetilde{\mu}_i^*$.

Since feasibility of~\eqref{prob:LP} is always guaranteed, $\estmuSN_i^\star$ is well-defined for any $N$ and $M$.
The following theorem states a strong consistency result for $\estmuSN_i^\star$.

\begin{theorem}\label{thm:strong.consistency}
	Suppose Assumption~\ref{assm:MCbasic} holds. Given $\btheta_1,\ldots,\btheta_M$,
	$\estmuSN_i^\star \xrightarrow{a.s.} \mu_i \mbox{ as } N\to\infty$ for $i=1,\ldots,M$.
\end{theorem}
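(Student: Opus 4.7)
The plan is to exploit the explicit structure of the optimal solution together with Proposition~\ref{prop:linearity.of.Nj} and the consistency of the individual self-normalized LR estimators from Lemma~\ref{lem:consistency}(i). The target scenarios $\btheta_1,\ldots,\btheta_M$ are held fixed, so the coefficients $\{c_j^\star\}$ from the $N=1$ LP and the pooling weights $\{\gamma_{ij}^\star\}$ in~\eqref{eq:lower.KKT.N0=1} are deterministic constants not depending on $N$. Hence $\estmuSN_i^\star=\sum_{j=1}^M\gamma_{ij}^\star\,\estmuSN_{ij}$ is a finite convex combination with \emph{fixed} weights of the component estimators $\estmuSN_{ij}$, each of which is computed from $N_j^\star=N\cdot c_j^\star$ inner replications at $\btheta_j$. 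Almost sure convergence of a finite, deterministic linear combination follows from almost sure convergence of each summand, so it suffices to analyze each $\estmuSN_{ij}$ term-by-term as $N\to\infty$.

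I would then partition the index set $\{1,\ldots,M\}$ into $\mathcal{J}_+=\{j:c_j^\star>0\}$ and $\mathcal{J}_0=\{j:c_j^\star=0\}$. For $j\in\mathcal{J}_+$, $N_j^\star=Nc_j^\star\to\infty$ as $N\to\infty$, so Lemma~\ref{lem:consistency}(i) yields $\estmuSN_{ij}\xrightarrow{a.s.}\mu_i$. For $j\in\mathcal{J}_0$, no replications are generated at $\btheta_j$ and, crucially, the closed form~\eqref{eq:lower.KKT.N0=1} gives $\gamma_{ij}^\star=0$, so these indices contribute nothing to the sum and we need not worry about $\estmuSN_{ij}$ being undefined. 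Feasibility of~\eqref{prob:LP}, together with the LP constraint $\sum_{j}N_j^\star/\E_{\btheta_j}[W_{ij}^2]\geq N$, ensures $\mathcal{J}_+$ is nonempty and that $\sum_{j\in\mathcal{J}_+}\gamma_{ij}^\star=1$.

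Putting these pieces together, by the finite intersection of almost sure events, on a single probability-one set we have $\estmuSN_{ij}\to\mu_i$ simultaneously for all $j\in\mathcal{J}_+$ (independence of replications across scenarios is not even needed, only a countable union of null sets), and therefore
\begin{equation*}
  \estmuSN_i^\star=\sum_{j\in\mathcal{J}_+}\gamma_{ij}^\star\,\estmuSN_{ij}\;\xrightarrow{a.s.}\;\sum_{j\in\mathcal{J}_+}\gamma_{ij}^\star\,\mu_i=\mu_i,
\end{equation*}
as required. The argument then applies to every $i=1,\ldots,M$ by intersecting $M$ probability-one sets.

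The only subtlety, and what I would flag as the main obstacle to check carefully, is the interface between the LP solution and Lemma~\ref{lem:consistency}: one must confirm that (a) the rounding $\lceil N_j^\star\rceil$ still sends $N_j^\star\to\infty$ on $\mathcal{J}_+$ (immediate, since $c_j^\star>0$), (b) $\gamma_{ij}^\star$ is genuinely independent of $N$ so the convex combination is non-random in its weights (which is exactly the content of~\eqref{eq:lower.KKT.N0=1} combined with Proposition~\ref{prop:linearity.of.Nj}), and (c) scenarios $j$ with $\E_{\btheta_j}[W_{ij}^2]=\infty$ are automatically excluded by $\gamma_{ij}^\star=0$, so $\estmuSN_i^\star$ is not contaminated by infinite-variance components. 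Once these points are verified, the remainder of the proof is a direct application of the strong consistency of each self-normalized LR estimator.
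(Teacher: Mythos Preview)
Your proposal is correct and follows essentially the same approach as the paper's proof: both rely on Proposition~\ref{prop:linearity.of.Nj} to fix the pooling weights $\{\gamma_{ij}^\star\}$ independently of $N$, invoke Lemma~\ref{lem:consistency}(i) on each summand with $N_j^\star\to\infty$, and dispose of the $N_j^\star=0$ terms via $\gamma_{ij}^\star=0$. Your write-up is more explicit about the partition $\mathcal{J}_+\cup\mathcal{J}_0$ and the edge cases (rounding, infinite second moments), but the underlying argument is identical.
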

\begin{proof}
	Recall that from Proposition~\ref{prop:linearity.of.Nj}, $\{\gamma_{ij}^\star\}$ do not depend on $N$ and are constants once $\btheta_1,\ldots,\btheta_M$ are given.
	From the definition of $\estmuSN_i^\star$,
	\begin{equation*}
		\lim\limits_{N\to\infty} \estmuSN_{i}^\star = \lim\limits_{N\to\infty} \sum\nolimits_{j=1}^M \gamma_{ij}^\star \estmuSN_{ij} = \sum\nolimits_{j=1}^M \gamma_{ij}^\star \lim\limits_{N_j\to\infty} \estmuSN_{ij} \xrightarrow{a.s.} \sum\nolimits_{\substack{j=1}}^M \gamma_{ij}^\star \mu_i = \mu_i\sum\nolimits_{j=1}^M \gamma_{ij}^\star = \mu_i,
	\end{equation*}
	where the second equality holds because $N_j^\star \propto N$ from Proposition~\ref{prop:linearity.of.Nj} and $\gamma_{ij}^\star=0$ whenever $N_j^\star=0$.
	The almost sure convergence holds from Lemma~\ref{lem:consistency}.
\end{proof}

Next, we examine the MSE convergence rate of $\tmustar_i$.
Recall that Lemma~\ref{lem:consistency} shows both bias and variance of $\estmuSN_{ij}$ are $\cO(N_j^{-1})$ under Assumption~\ref{assm:moment.conditions.for.LR}. Because $N_j^\star\propto N$, to show $\mathrm{MSE}[\tmustar_i]=\cO(N^{-1})$, it suffices to have the result of Lemma~\ref{lem:consistency} hold for each $(i,j)$ pair such that $\gamma_{ij}^\star>0$. Assumption~\ref{assm:boundedmoments} below formally states the sufficient condition.

\begin{assumption}\label{assm:boundedmoments}
	For each $\btheta_i\in\bTheta$, let $\bTheta_i \equiv \{\btheta_j\in\bTheta|\E_{\btheta_j}[W_{ij}^2]<\infty\}$. Then, 
	\begin{enumerate}[label=(\roman*)]
		\item there exists $C>0$ such that for any $N_j >C, \btheta_i\in\bTheta$ and $\btheta_j \in \bTheta_i$, $N_j |\E_{\btheta_j}[\estmuSN_{ij}]-\mu_i| < |\E_{\btheta_j}[W_{ij}^2(g(\bX)-\mu_i)]| + 1$ and
		$N_j \Var_{\btheta_j}[\estmuSN_{ij}]< \E_{\btheta_j}[W_{ij}^2(g(\bX)-\mu_i)^2] +1$.
		\item `
		$\sup_{\btheta_i\in\bTheta}\sup_{\btheta_j\in\bTheta_i}\E_{\btheta_j}[W_{ij}^2(g(\bX)-\mu_i)]<\infty$ and $\sup_{\btheta_i\in\bTheta}\sup_{\btheta_j\in\bTheta_i}\left|\E_{\btheta_j}[W_{ij}^2(g(\bX)-\mu_i)^2]\right|<\infty$.
	\end{enumerate}
\end{assumption}

Assumption~\ref{assm:boundedmoments}.(i) ensures the conclusions of Lemma~\ref{lem:consistency} hold for any $\btheta_i,\btheta_j\in\bTheta$, while (ii) assures the moments in the constants to be uniformly bounded in $\bTheta$.

Additionally, we make a minor modification to the definition of $\{N_j^\star\}$:
\begin{align}\label{eq:modified.Njstar}
	N_j^\star =
	\begin{cases}
		\delta N,    & \mbox{if } 0< c_j^\star < \delta, \\
		c_j^\star N, & \mbox{otherwise},
	\end{cases}
\end{align}
where $\delta$ is a small positive constant. In words,~\eqref{eq:modified.Njstar} guarantees that if any replications are made at $\btheta_j$, then $N_j^\star$ is to be at least $\delta$ fraction of $N$.
The number of outer scenarios we sample at does not increase because $N_j^\star = 0$ if $c_j^\star = 0$. We emphasize that~\eqref{eq:modified.Njstar} has no practical impact as $\delta$ can be chosen to be arbitrarily small in the experiments.
The following theorem establishes that for any sample of $M$ outer scenarios, $\mathrm{MSE}[\tmustar_i] = \cO(N^{-1})$.

\begin{theorem}\label{thm:convergencerate}
	Suppose Assumptions~\ref{assm:MCbasic} and~\ref{assm:boundedmoments} hold. Then, for any finite $M$
	\begin{align*}
		& \sup\nolimits_{\{\btheta_1,\ldots,\btheta_M\}\subset\bTheta} \left|\E[\estmuSN^\star_i|\btheta_1,\ldots,\btheta_M] - \mu_i\right| = \cO(N^{-1}) \mbox{ and } \\ 	&\sup\nolimits_{\{\btheta_1,\ldots,\btheta_M\}\subset\bTheta}\Var[\estmuSN_i^\star|\btheta_1,\ldots,\btheta_M] = \cO(N^{-1})
	\end{align*}
	as $N\to\infty$. Moreover, the same statement holds when $M\to\infty$.
\end{theorem}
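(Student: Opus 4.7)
The plan is to expand $\tmustar_i = \sum_{j=1}^M \gamma_{ij}^\star \estmuSN_{ij}$ and exploit three facts: (i) the pooling weights are deterministic given $\btheta_1,\ldots,\btheta_M$ and satisfy $\sum_j \gamma_{ij}^\star=1$; (ii) the inner replications at different sampling scenarios are mutually independent, so the variances add with weight squared; and (iii) Assumption~\ref{assm:boundedmoments} bounds both the per-scenario bias and variance of each $\estmuSN_{ij}$ (with $\gamma_{ij}^\star>0$) at rate $\cO(N_j^{-1})$ with constants that are \emph{uniform} over $\btheta_i\in\bTheta$ and $\btheta_j\in\bTheta_i$. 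The modification~\eqref{eq:modified.Njstar} is the key enabler: whenever $\gamma_{ij}^\star>0$ we must have $N_j^\star\geq \delta N$, so $1/N_j^\star \leq 1/(\delta N)$.

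For the bias, I would first write
\begin{equation*}
\E[\tmustar_i\mid \btheta_1,\ldots,\btheta_M] - \mu_i = \sum_{j=1}^M \gamma_{ij}^\star\bigl(\E_{\btheta_j}[\estmuSN_{ij}]-\mu_i\bigr),
\end{equation*}
using $\sum_j\gamma_{ij}^\star=1$. For each $j$ with $\gamma_{ij}^\star>0$ we have $\btheta_j\in\bTheta_i$ (since $\E_{\btheta_j}[W_{ij}^2]<\infty$ is necessary for $\gamma_{ij}^\star>0$ to be optimal in the lower-level problem; otherwise its objective contribution would be infinite). By Assumption~\ref{assm:boundedmoments}, $N_j^\star|\E_{\btheta_j}[\estmuSN_{ij}]-\mu_i| < B+1$ for a finite constant $B := \sup_{\btheta_i\in\bTheta}\sup_{\btheta_j\in\bTheta_i}|\E_{\btheta_j}[W_{ij}^2(g(\bX)-\mu_i)]|$. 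Combining with $N_j^\star\geq \delta N$ gives
\begin{equation*}
\bigl|\E[\tmustar_i\mid \btheta_1,\ldots,\btheta_M]-\mu_i\bigr| \leq \sum_{j=1}^M \gamma_{ij}^\star\cdot\frac{B+1}{\delta N} = \frac{B+1}{\delta N},
\end{equation*}
which is $\cO(N^{-1})$ uniformly in $\{\btheta_1,\ldots,\btheta_M\}$ and in $M$.

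For the variance, by independence of the inner replications generated at distinct $\btheta_j$,
\begin{equation*}
\Var[\tmustar_i\mid \btheta_1,\ldots,\btheta_M] = \sum_{j=1}^M (\gamma_{ij}^\star)^2 \Var_{\btheta_j}[\estmuSN_{ij}].
\end{equation*}
Applying Assumption~\ref{assm:boundedmoments} once more, $\Var_{\btheta_j}[\estmuSN_{ij}]\leq (B'+1)/N_j^\star\leq (B'+1)/(\delta N)$ for the uniform bound $B' := \sup_{\btheta_i}\sup_{\btheta_j\in\bTheta_i}|\E_{\btheta_j}[W_{ij}^2(g(\bX)-\mu_i)^2]|<\infty$. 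Because $(\gamma_{ij}^\star)^2\leq \gamma_{ij}^\star$ (as each weight lies in $[0,1]$), summing yields the uniform bound $(B'+1)/(\delta N)$, again independent of $M$. Letting $M\to\infty$ does not affect either bound because the constants $B$, $B'$, and $\delta$ are specified before $M$ and because both the weight-sum-to-one identity (for bias) and the $(\gamma_{ij}^\star)^2\leq \gamma_{ij}^\star$ inequality (for variance) absorb $M$ exactly.

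The main obstacle I anticipate is the justification for discarding the $o(N_j^{-1})$ remainders in Lemma~\ref{lem:consistency} uniformly over $(i,j)$ pairs; the clean route is simply to invoke Assumption~\ref{assm:boundedmoments} which is stated precisely to give such uniform bounds of the form $N_j|\E_{\btheta_j}[\estmuSN_{ij}]-\mu_i|<|\cdot|+1$ and $N_j\Var_{\btheta_j}[\estmuSN_{ij}]<|\cdot|+1$ for $N_j$ larger than a fixed threshold $C$. Since $N_j^\star\geq\delta N\to\infty$, this threshold is eventually satisfied for every $j$ active in the pool, and the argument proceeds uniformly. No appeal to the structure of $\{c_j^\star\}$ beyond the floor $\delta$ is needed, which is why the modification~\eqref{eq:modified.Njstar} is essential.
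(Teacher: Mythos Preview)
Your proposal is correct and follows essentially the same route as the paper's proof: decompose $\tmustar_i$ via the pooling weights, invoke Assumption~\ref{assm:boundedmoments} for uniform per-pair bias and variance bounds, use the floor $N_j^\star\geq\delta N$ from~\eqref{eq:modified.Njstar}, and collapse the sum via $\sum_j\gamma_{ij}^\star=1$ (for bias) and $(\gamma_{ij}^\star)^2\leq\gamma_{ij}^\star$ (for variance). The paper's argument is identical in structure and in the key inequalities used.
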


\begin{proof}
	
	By construction, $\estmuSN_i^\star$ only pools replications at $\btheta_j \in\bTheta_i$. For sufficiently large $N$,
	\begin{align*}
		& \left|\E[\estmuSN_i^\star|\btheta_1,\ldots,\btheta_M] - \mu_i\right| \leq \sum_{{j=1 }}^M \gamma_{ij}^\star\left|\E_{\btheta_j}[\estmuSN_{ij}] - \mu_i\right| \stackrel{(*)}{<} \sum_{{j=1, N^\star_j>0 }}^M \frac{\gamma_{ij}^\star}{N_j^\star}\left(|\E_{\btheta_j}[W_{ij}^2(g(\bX)-\mu_i)]| + 1\right) \\
		& \stackrel{(**)}{\leq} \sum\nolimits_{{j=1 }}^M \frac{\gamma_{ij}^\star}{\delta N}\sup_{\btheta_j\in \bTheta_i}\left(|\E_{\btheta_j}[W_{ij}^2(g(\bX)-\mu_i)]| + 1\right) \stackrel{(***)}{=} \frac{1}{\delta N}\sup_{\btheta_j\in \bTheta_i}\left(|\E_{\btheta_j}[W_{ij}^2(g(\bX)-\mu_i)]| + 1\right),
	\end{align*}
	where $(*)$, $(**)$, and $(***)$ follow from Assumption~\ref{assm:boundedmoments},~\eqref{eq:modified.Njstar}, and ${\sum_{{j=1, N^\star_j>0 }}^M \gamma_{ij}^\star=1}$ respectively. Because $\sup_{\btheta_i \in \bTheta}\sup_{\btheta_j\in \bTheta_i}\left(|\E_{\btheta_j}[W_{ij}^2(g(\bX)-\mu_i)]| + 1\right)$ is bounded from
	Assumption~\ref{assm:boundedmoments}, we conclude 
	\[	\sup_{\{\btheta_1,\ldots,\btheta_M\}\in\bTheta} \left|\E[\estmuSN^\star_i|\btheta_1,\ldots,\btheta_M] - \mu_i\right| = \cO(N^{-1})
	\] 
	for finite $M$ as well as when $M\to\infty$.
	For the variance, as all inner replications are independently simulated,
	\begin{align*}
		& \Var[\tmustar_i|\btheta_1,\ldots,\btheta_M] = \sum\nolimits_{j=1}^M (\gamma_{ij}^\star)^2 \Var_{\btheta_j}[\estmuSN_{ij}] \stackrel{(*)}{<} \sum\nolimits_{j=1,N^\star_j>0}^M \frac{(\gamma_{ij}^\star)^2}{N_j^\star} \left(\E_{\btheta_j}[W_{ij}^2(g(\bX)-\mu_i)^2] +1 \right) \\
		& \stackrel{(**)}{\leq} \sum\nolimits_{j=1}^M \frac{\gamma_{ij}^\star}{\delta N} \sup_{\btheta_j \in \bTheta_i} \left(\E_{\btheta_j}[W_{ij}^2(g(\bX)-\mu_i)^2] +1 \right) {=} \frac{1}{\delta N}\sup_{\btheta_j\in \bTheta_i}\left(\E_{\btheta_j}[W_{ij}^2(g(\bX)-\mu_i)^2] +1 \right)
	\end{align*}
	for sufficiently large $N$, where $(*)$ follows from Assumption~\ref{assm:boundedmoments} and~\eqref{eq:modified.Njstar} and $(**)$ holds since ${0 \leq \gamma_{ij}^\star \leq 1}$. Because $\sup_{\btheta_i \in \bTheta}\sup_{\btheta_j\in \bTheta_i} \left(\E_{\btheta_j}[W_{ij}^2(g(\bX)-\mu_i)^2] +1 \right)<\infty$ from Assumption~\ref{assm:boundedmoments}, we conclude
	\[
	\sup_{\{\btheta_1,\ldots,\btheta_M\}\in\bTheta}\Var[\tmustar_i|\btheta_1,\ldots,\btheta_M] = \cO(N^{-1})
	\]
	 for finite $M$ and when $M\to\infty$.
\end{proof}

In Sections~\ref{subsec:indicator}--\ref{subsec:quantile}, we analyze asymptotic properties of Type-(i) and Type-(ii) nested statistics.
For $\widetilde{\zeta} = \sum_{i=1}^M\zeta(\estmuSN_i^\star)/M$, we show that its bias and variance converge in $\cO(N^{-1})$ and $\cO(N^{-1})+\cO(M^{-1})$, respectively, when $\zeta$ is an indicator, hockey stick, or smooth function with bounded second derivative.
Sections~\ref{subsec:indicator}--\ref{subsec:smooth} present different assumptions and proofs for each choice of $\zeta$.
In Section~\ref{subsec:quantile}, we show that the empirical quantile $\estmuSN_{(\lceil M\alpha\rceil)}^\star$ converges to the true $\alpha$-quantile of $\mu(\btheta)$ in $\cO_p(M^{-1/2})+\cO_p(N^{-1/2})$.

\subsection{Indicator function of the conditional mean}\label{subsec:indicator}
Suppose $\zeta(\mu_i) = I(\mu_i\leq \xi)$ for some $\xi \in\real$. Let $\Phi$ be the cumulative distribution function (cdf) of $\mu_i$. By definition, $\E[\zeta(\mu_i)]= \Phi(\xi)$. Thus, we denote the corresponding estimator $\widetilde{\zeta} \equiv M^{-1}\sum_{i=1}^M I(\tmustar_i\leq \xi)$ by $\Phi_{M,N}(\xi)$, where $\Phi_{M,N}(\cdot)$ is the empirical cdf (ecdf) constructed from $\tmustar_1,\ldots,\tmustar_M$.

For ease of exposition, let $\epsilon_i \equiv \sqrt{N}(\tmustar_i-\mu_i)$, which is the scaled estimation error of $\tmustar_i$ so that its limiting distribution is not degenerate as $N \to\infty$.
From Theorem~\ref{thm:convergencerate}, $\E[\epsilon_i|\btheta_i]=\E[\E[\epsilon_i|\btheta_1,\ldots,\btheta_M]|\btheta_i] = \cO(N^{-1/2})$ uniformly for all $\btheta_i\in\bTheta$. Similarly, $\Var[\epsilon_i|\btheta_i]=\cO(1)$ uniformly for all $\btheta_i\in\bTheta$. In the following, we denote the joint distribution of $\mu_i$ and $\epsilon_i$ by $f_i(\mu,\epsilon)$, where $\btheta_i$ is an arbitrary scenario in $\{\btheta_1,\ldots,\btheta_M\}$. Similarly, $f_{ij}(\mu_i,\mu_j,\epsilon_i,\epsilon_j)$ refers to the joint distribution of $\mu_i,\mu_j,\epsilon_i$, and $\epsilon_j$ for some arbitrary $\btheta_i$ and $\btheta_j$ among $\{\btheta_1,\ldots,\btheta_M\}$. We make Assumption~\ref{assm:indicator} below to facilitate Theorem~\ref{thm:indicator.mse} that stipulates the MSE convergence rate of $\Phi_{M,N}(\xi)$.
\begin{assumption}\label{assm:indicator}
	The cdf of $\mu_i$, $\Phi$, is absolutely continuous with continuous probability density function (pdf) $\phi$. For any $\btheta_i\in\{\btheta_1,\ldots,\btheta_M\}$, $f_{i}(\mu,\epsilon)$, is differentiable with respect to $\mu$ for each $M$ and $N$. Moreover, there exist $p_{s,M,N}(\epsilon), s=0,1$, such that $f_i(\mu,\epsilon)\leq p_{0,M,N}(\epsilon)$ and $\left|\frac{\partial f_i(\mu,\epsilon)}{\partial \mu}\right|\leq p_{1,M,N}(\epsilon)$
	for all $\mu$ and for each $M$ and $N$, and
	$\sup_M\sup_{N} \int_{-\infty}^\infty |\epsilon|^k p_{s,M,N}(\epsilon) d\epsilon <\infty$ for $s=0,1$ and $0\leq k \leq 2$.
	Also, for any $\btheta_i\neq \btheta_j$, $f_{ij}(\mu_i,\mu_j,\epsilon_i,\epsilon_j)$, is differentiable with respect to $\mu_i$ and $\mu_j$ for each $M$ and $N$.
	There exist $p_{s,M,N}(\epsilon_i,\epsilon_j), s=0,1$, such that $f_{i,j}(\mu_i,\mu_j,\epsilon_i,\epsilon_j)\leq p_{0,M,N}(\epsilon_i,\epsilon_j)$ and
	$\left|\frac{\partial f_{i,j}(\mu_i,\mu_j,\epsilon_i,\epsilon_j)}{\partial \mu}\right|\leq p_{1,M,N}(\epsilon_i,\epsilon_j)$ for all $\mu_i, \mu_j, i\neq j$ and for each $M$ and $N$, and
	$\sup_M\sup_{N} \int_{-\infty}^\infty \int_{-\infty}^\infty |\epsilon_i|^{k_i} |\epsilon_j|^{k_j} p_{s,M,N}(\epsilon_i,\epsilon_j) d\epsilon_i d\epsilon_j <\infty
	$ for $s=0,1$, and $0\leq k_i, k_j \leq 2$ such that $k_i + k_j \leq 3$.
\end{assumption}

Assumption~\ref{assm:indicator} is likely to hold when $M$ and $N$ are large and $\rho(\mu(\btheta)) = \E[\zeta(\mu(\btheta))]$ is well-estimated via SNS. Indeed, Assumption~\ref{assm:indicator} is identical to Assumption~1 in~\cite{gordy2010} made to show the MSE convergence rate for the SNS estimator of $\rho(\mu(\btheta))$ except that i) we define $\epsilon_i$ as the scaled estimation error of the pooled LR estimator, $\tilde{\mu}_i^\star$, not the standard MC estimator, $\bar{\mu}_i$; and ii) we impose the boundedness and moment conditions for the joint distribution of $(\mu_i,\mu_j,\epsilon_i,\epsilon_j)$, not just for that of $(\mu_i,\epsilon_i)$. The latter is because $\tilde{\mu}_i^\star$ pools simulation outputs from $\btheta_j\neq \btheta_i$. Recall that $\tilde{\mu}_i^\star$ is the weighted average of all $\tilde{\mu}_{ij}$'s such that $N_j^\star>0$ and $\gamma_{ij}^\star>0$, where each $\tilde{\mu}_{ij}$ is a sample average of $N_j^\star$ i.i.d. observations of $W_{ij}g(\bX)$. Therefore, for sufficiently large $N$, $\epsilon_i = \sqrt{N}(\tilde{\mu}_i^\star-\mu_i)$ behaves like a normal random variable as in SNS. Note that if $\tilde{\mu}_i^\star$ is replaced by a generic pooled LR estimator $\tilde{\mu}_i$ in~\eqref{eq:muhatthetaSNpooled} (e.g., constructed with suboptimal weights $\gamma_{ij}$), then $\epsilon_i$ does not converge to a normal random variable in general since some $\tilde{\mu}_{ij}$ may have unbounded variance. For optimized $\tilde{\mu}_i^\star$, the variance result in Theorem~\ref{thm:convergencerate} guarantees that a central limit theorem can be applied to $\epsilon_i$.
On the other hand, if SNS does not estimate $\rho(\mu(\btheta))$ well, 
then there is little hope that Assumption~\ref{assm:indicator} holds.


\begin{theorem}\label{thm:indicator.mse}
	Under Assumptions~\ref{assm:MCbasic}--\ref{assm:indicator}, $\E[\Phi_{M,N}(\xi)] - \Phi(\xi) = \cO(N^{-1})$ and $\Var[\Phi_{M,N}(\xi)] = \cO(N^{-1})+\cO(M^{-1})$.
\end{theorem}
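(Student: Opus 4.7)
The plan is to reduce both estimates to a careful Taylor expansion of a smoothed indicator, using the joint density $f_i(\mu,\epsilon)$ (resp.\ $f_{i,j}$) to absorb the non-smoothness of $I(\cdot\le\xi)$ in $\mu$ before differentiating, and then to invoke Theorem~\ref{thm:convergencerate} together with Assumption~\ref{assm:indicator} to control the resulting remainder terms uniformly.

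\textbf{Bias.} By the exchangeability of the pairs $(\btheta_i,\tmustar_i)$ (the LP and the pooling weights are symmetric functions of $\{\btheta_1,\ldots,\btheta_M\}$), $\E[\Phi_{M,N}(\xi)]=\Pr(\tmustar_1\le\xi)$. Writing $\tmustar_1=\mu_1+\epsilon_1/\sqrt{N}$ and conditioning on $\epsilon_1$ gives
\begin{equation*}
\Pr(\tmustar_1\le\xi)=\int F_1\!\left(\xi-\epsilon/\sqrt{N},\epsilon\right)d\epsilon,\qquad F_1(t,\epsilon)\equiv\int_{-\infty}^t f_1(\mu,\epsilon)\,d\mu.
\end{equation*}
Taylor-expanding $F_1(\cdot,\epsilon)$ in its first argument around $\xi$ to second order produces $\Phi(\xi)$ as the leading term (after integrating out $\epsilon$), a $-N^{-1/2}\int\epsilon f_1(\xi,\epsilon)\,d\epsilon$ term, and an integral remainder bounded via $|\partial_\mu f_1|\le p_{1,M,N}(\epsilon)$ and $\sup\int\epsilon^2 p_{1,M,N}(\epsilon)\,d\epsilon<\infty$, which is $\cO(N^{-1})$. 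The key cancellation is in the $N^{-1/2}$ coefficient: $\int\epsilon f_1(\xi,\epsilon)\,d\epsilon=\phi(\xi)\E[\epsilon_1\mid\mu_1=\xi]$ and, by the tower property and Theorem~\ref{thm:convergencerate}, $\E[\epsilon_1\mid\mu_1=\xi]=\sqrt{N}\,\E[\E[\tmustar_1-\mu_1\mid\btheta_1,\ldots,\btheta_M]\mid\mu_1=\xi]=\cO(N^{-1/2})$. Hence this term is also $\cO(N^{-1})$, giving the bias bound.

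\textbf{Variance.} Decompose
\begin{equation*}
\Var[\Phi_{M,N}(\xi)]=\tfrac{1}{M}\Var[I(\tmustar_1\le\xi)]+\tfrac{M-1}{M}\Cov[I(\tmustar_1\le\xi),I(\tmustar_2\le\xi)].
\end{equation*}
The first piece is $\cO(M^{-1})$ since indicators have variance at most $1/4$. For the covariance, repeat the integral-Taylor device with the joint density $f_{1,2}(\mu_1,\mu_2,\epsilon_1,\epsilon_2)$: define $G(s,t,\epsilon_1,\epsilon_2)=\int_{-\infty}^s\!\int_{-\infty}^t f_{1,2}\,d\mu_1 d\mu_2$ and expand $G(\xi-\epsilon_1/\sqrt{N},\xi-\epsilon_2/\sqrt{N},\epsilon_1,\epsilon_2)$ around $(\xi,\xi)$. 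The zeroth-order term integrates to $\Pr(\mu_1\le\xi,\mu_2\le\xi)=\Phi(\xi)^2$ by i.i.d.\ of the outer scenarios, which cancels with the leading part of the product $\Pr(\tmustar_1\le\xi)\Pr(\tmustar_2\le\xi)=\Phi(\xi)^2+\cO(N^{-1})$ from the bias step. The first-order terms have the form $-N^{-1/2}\int\!\int\epsilon_k G_{s}(\xi,\xi,\epsilon_1,\epsilon_2)\,d\epsilon_1d\epsilon_2$, $k=1,2$; these integrals equal $\phi(\xi)\E[\epsilon_k\mid\mu_k=\xi]=\cO(N^{-1/2})$ by the same Theorem~\ref{thm:convergencerate} argument (after integrating out the other coordinate), so each first-order contribution is $\cO(N^{-1})$. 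The second-order remainder is controlled by $\iint\epsilon_i^{k_i}\epsilon_j^{k_j}p_{1,M,N}(\epsilon_i,\epsilon_j)\,d\epsilon_id\epsilon_j$ with $k_i+k_j\le 2$, uniformly bounded by Assumption~\ref{assm:indicator}, yielding $\cO(N^{-1})$. Summing, the covariance is $\cO(N^{-1})$ and the variance is $\cO(M^{-1})+\cO(N^{-1})$.

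\textbf{Expected obstacles.} The non-smoothness of the indicator in $\mu$ is the main technical nuisance: one cannot Taylor-expand $I(\mu+\epsilon/\sqrt{N}\le\xi)$ directly, which is precisely why Assumption~\ref{assm:indicator} bounds $f_i$ and $\partial_\mu f_i$ with integrable envelopes in $\epsilon$—this lets us differentiate $F_i$ and $G$ in their first arguments under the integral sign and dominate the second-order remainders. The other delicate point is arguing that the apparent $\cO(N^{-1/2})$ coefficient vanishes to $\cO(N^{-1/2})$ strength; this is where Theorem~\ref{thm:convergencerate}'s \emph{uniform in $\btheta$} bias bound is essential, since it transfers a bound on $\E[\tmustar_i-\mu_i\mid\btheta_1,\ldots,\btheta_M]$ into a bound on $\E[\epsilon_i\mid\mu_i=\xi]$ via the tower property. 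Everything else is bookkeeping with the envelopes supplied by Assumption~\ref{assm:indicator}.
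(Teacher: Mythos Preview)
Your argument is correct and follows the same overall strategy as the paper: shift the non-smoothness of the indicator onto the $\mu$-coordinate, Taylor-expand there under the envelope bounds of Assumption~\ref{assm:indicator}, and kill the apparent $\cO(N^{-1/2})$ coefficient via the uniform conditional-bias bound of Theorem~\ref{thm:convergencerate}. The bias step is essentially identical to the paper's (you expand the partial CDF $F_1(\cdot,\epsilon)$, the paper expands the density $f_i(\cdot,\epsilon)$ over the short interval $[\xi-\epsilon/\sqrt N,\xi]$; these are equivalent).

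The one genuine difference is in the variance step. The paper first subtracts the oracle ecdf $\Phi_M$ and bounds $\Cov[I(\tmustar_i\le\xi)-I(\mu_i\le\xi),\,I(\tmustar_j\le\xi)-I(\mu_j\le\xi)]$; the expectation of that product collapses algebraically to the single localized integral
\[
\int\!\!\int\!\!\int_{\xi-\epsilon_i/\sqrt N}^{\xi}\!\!\int_{\xi-\epsilon_j/\sqrt N}^{\xi} f_{ij}(\mu_i,\mu_j,\epsilon_i,\epsilon_j)\,d\mu_j\,d\mu_i\,d\epsilon_j\,d\epsilon_i,
\]
which is manifestly $\cO(N^{-1})$ after one Taylor step in $(\mu_i,\mu_j)$, with no first-order term to cancel. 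Your direct expansion of $\Cov[I(\tmustar_1\le\xi),I(\tmustar_2\le\xi)]$ via $G(s,t,\epsilon_1,\epsilon_2)$ also works, but it produces $N^{-1/2}$ coefficients of the form $\phi(\xi)\,\E[\epsilon_k I(\mu_{3-k}\le\xi)\mid\mu_k=\xi]$ that you must separately argue are $\cO(N^{-1/2})$ via Theorem~\ref{thm:convergencerate} and the tower property. Both routes are valid; the paper's differencing trick just localizes everything to the small box up front and avoids that extra cancellation.
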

\begin{proof}
	
	Let us define $\Phi_M$ as the ecdf constructed from $\mu_1,\ldots,\mu_M$ given the same outer scenarios as $\Phi_{M,N}$.
	Because $\Phi_M$ is an unbiased estimator of $\Phi$, $\E[\Phi_{M,N}(\xi)]-\Phi(\xi) = \E[\Phi_{M,N}(\xi)-\Phi_M(\xi)]$. From definitions,
	$
	\Phi_{M,N}(\xi)-\Phi_{M}(\xi) =M^{-1}\sum_{i=1}^M \left(I(\tmustar_i\leq \xi) - I(\mu_i\leq \xi) \right)
	$.
	For each $i$,
	\begin{align}
		\E[I(\tmustar_i\leq \xi) - I(\mu_i\leq \xi)] & = \int_{-\infty}^\infty \int_{-\infty}^{\xi-\frac{\epsilon}{\sqrt{N}}} f_{i}(\mu,\epsilon)d\mu d\epsilon - \int_{-\infty}^\infty \nonumber \int_{-\infty}^\xi f_{i}(\mu,\epsilon)d\mu d\epsilon \nonumber \\
		& = -\int_{-\infty}^\infty \int_{\xi-{\epsilon}/{\sqrt{N}}}^{\xi} f_{i}(\mu,\epsilon)d\mu d\epsilon.\label{eq:exp.ind.diff}
	\end{align}
	The first-order Taylor series expansion of $f_{i}(\mu,\epsilon)$ at $\mu\in {[\xi-\epsilon/\sqrt{N}, \xi]}$ is
	\begin{equation}\label{eq:marginal.taylor}
		f_{i}(\mu,\epsilon) = f_i(\xi,\epsilon) + \tfrac{\partial f_i(\Check{\mu},\epsilon)}{\partial \mu} (\mu-\xi),
	\end{equation}
	where $\Check{\mu}\in(\mu,\xi)$.
	From Assumption~\ref{assm:indicator}, for all $M$ and $N$,
	\begin{equation}
		\label{eq:integral.density}
		\frac{\epsilon}{\sqrt{N}}f_i(\xi,\epsilon) - \frac{\epsilon^2}{2N} p_{1,M,N}(\epsilon)
		\leq \int_{\xi-\epsilon/\sqrt{N}}^\xi f_{i}(\mu,\epsilon)d\mu \leq
		\frac{\epsilon}{\sqrt{N}}f_i(\xi,\epsilon) + \frac{\epsilon^2}{2N} p_{1,M,N}(\epsilon).
	\end{equation}
	Thus, from~\eqref{eq:exp.ind.diff}, integrating all three sides of~\eqref{eq:integral.density} with respect to $\epsilon \in (-\infty, \infty)$ gives upper and lower bounds to $\E[I(\estmuSN_i^\star\leq \xi) - I(\mu_i\leq \xi)]$.
	Note that $f_i(\xi,\epsilon) = f_i(\epsilon|\xi)\phi(\xi)$, where $ f_i(\epsilon|\xi)$ is the conditional pdf of $\epsilon_i$ given $\mu_i = \xi$. Thus, $\int_{-\infty}^\infty f_i(\xi,\epsilon)\frac{\epsilon}{\sqrt{N}}d\epsilon = \phi(\xi)\E\left[\left.\frac{\epsilon_i}{\sqrt{N}}\right|\mu_i=\xi\right] = \phi(\xi)\E[\tmustar_i-\mu_i|\mu_i=\xi]$, where $\E[\tmustar_i-\mu_i|\mu_i]=\cO(N^{-1})$ uniformly for all $\mu_i$ from Theorem~\ref{thm:convergencerate}. Also, Assumption~\ref{assm:indicator} guarantees that $\int_{-\infty}^\infty \epsilon^2 p_{1,M,N}(\epsilon) d\epsilon$ is bounded.
	Therefore, $\E[I(\tmustar_i\leq \xi) - I(\mu_i\leq \xi)] = \cO(N^{-1})$ for each $i$, which in turn implies $\E[\Phi_{M,N}(\xi)-\Phi_{M}(\xi)] = \cO(N^{-1})$.
	Next, noticing $\Phi_{M,N}(\xi) = \Phi_{M,N}(\xi) - \Phi_M(\xi)+ \Phi_M(\xi)$, $\Var[ \Phi_{M,N}(\xi)]$ can be written as
	\begin{equation}\label{eq:variance.decomp}
		\begin{aligned}
			\Var[ \Phi_{M,N}(\xi)]
			& = \Var[ \Phi_{M,N}(\xi) - \Phi_M(\xi)] + \Var[ \Phi_M(\xi)] +2\Cov[ \Phi_{M,N}(\xi) - \Phi_M(\xi), \Phi_M(\xi)] \\
			& \leq 2\Var[ \Phi_{M,N}(\xi) - \Phi_M(\xi)] + 2\Var[ \Phi_M(\xi)]
		\end{aligned}
	\end{equation}
	Clearly, $\Var[ \Phi_M(\xi)]=\cO(M^{-1})$.
	In the following, we show $\Var[\Phi_{M,N}(\xi) - \Phi_M(\xi)]=\cO(M^{-1}) + \cO(N^{-1})$ by first expanding it as
	\small
	\begin{equation}\label{eq:temp2}
		\frac{1}{M^2} \sum_{i=1}^M \Var[I(\tmustar_i\leq \xi)-I(\mu_i\leq \xi)]
		+ \frac{1}{M^2} \sum_{i=1}^M\sum_{\substack{j=1\\ j\neq i}}^M \Cov[ I(\tmustar_i\leq \xi)-I(\mu_i\leq \xi), I(\tmustar_j\leq \xi)-I(\mu_j\leq \xi)]
	\end{equation}
	\normalsize
	and showing that the leading order of~\eqref{eq:temp2} is $\cO(N^{-1})$.
	First, $\Var[I(\tmustar_i\leq \xi)-I(\mu_i\leq \xi)] = \E[(I(\tmustar_i\leq \xi)-I(\mu_i\leq \xi))^2] - \E[I(\tmustar_i\leq \xi)-I(\mu_i\leq \xi)]^2$, where the latter term is $\cO(N^{-2})$ from the bias derivation above. Because $(I(\tmustar_i\leq \xi)-I(\mu_i\leq \xi))^2=1$, if and only if, $\tmustar_i\leq \xi$ and $\mu_i> \xi$, or $\tmustar_i > \xi$ and $\mu_i \leq \xi$,
	\begin{align*}
		\E[(I(\tmustar_i\leq \xi)-I(\mu_i\leq \xi))^2] = \int_{-\infty}^0 \int_{\xi}^{\xi-\epsilon/\sqrt{N}} f_i(\mu,\epsilon) d\mu d\epsilon + \int_0^{\infty} \int_{\xi-\epsilon/\sqrt{N}}^\xi f_i(\mu,\epsilon) d\mu d\epsilon.
	\end{align*}
	Following the same logic in~\eqref{eq:integral.density}, both inner integrals are bounded from above by $\frac{\epsilon}{\sqrt{N}}f_i(\xi,\epsilon) + \frac{\epsilon^2}{2N} p_{1,M,N}(\epsilon)$. Hence,
	\begin{align*}
		\E[(I(\tmustar_i\leq \xi)-I(\mu_i\leq \xi))^2] \leq \int_{-\infty}^\infty \left\{\frac{\epsilon}{\sqrt{N}}f_i(\xi,\epsilon) + \frac{\epsilon^2}{2N} p_{1,M,N}(\epsilon)\right\} d\epsilon = \cO(N^{-1}).
	\end{align*}
	This concludes that the first summation of~\eqref{eq:temp2} is $\cO((NM)^{-1})$.
	
	The covariance term of~\eqref{eq:temp2} can be rewritten as
	\small
	\begin{equation}\label{eq:cov.indicator}
		\E\left[ \left(I(\tmustar_i\leq \xi)-I(\mu_i\leq \xi)\right)\right.
		\left.\left(I(\tmustar_j\leq \xi)-I(\mu_j\leq \xi)\right)\right]
		- \E\left[I(\tmustar_i\leq \xi)-I(\mu_i\leq \xi)\right]\E\big[I(\tmustar_j\leq \xi)-I(\mu_j\leq \xi)\big].
	\end{equation}
	\normalsize
	The first expectation of~\eqref{eq:cov.indicator} is equal to
	\begin{align*}
		& \Pr\{\tmustar_i \leq \xi, \tmustar_j\leq \xi\} - \Pr\{\mu_i\leq \xi,\tmustar_j\leq \xi\} + \Pr\{ \mu_i\leq\xi,\mu_j\leq \xi\} - \Pr\{\tmustar_i \leq \xi, \mu_j\leq \xi\}                                     \\
		& = \int_{-\infty}^\infty \int_{-\infty}^\infty \int_{\xi-{\epsilon_i}/{\sqrt{N}}}^{\xi}\int_{\xi-{\epsilon_j}/{\sqrt{N}}}^\xi f_{ij}(\mu_i,\mu_j,\epsilon_i,\epsilon_j) d\mu_j d\mu_i d\epsilon_j d\epsilon_i.
	\end{align*}
	Applying the first-order Taylor series expansion to $f_{ij}(\mu_i,\mu_j,\epsilon_i,\epsilon_j)$ gives
	\small
	\begin{align}\label{eq:jointpdf.taylor}
		& f_{ij}(\mu_i,\mu_j,\epsilon_i,\epsilon_j) = f_{ij}(\xi,\xi,\epsilon_i,\epsilon_j) +\frac{\partial f_{ij}(\Check{\mu}_i,\Check{\mu}_j,\epsilon_i,\epsilon_j)}{\partial \mu_i} (\mu_i-\xi) + \frac{\partial f_{ij}(\Check{\mu}_i,\Check{\mu}_j,\epsilon_i,\epsilon_j)}{\partial \mu_j} (\mu_j-\xi)
	\end{align}
	\normalsize
	for some $\Check{\mu}_i \in (\mu_i,\xi)$ and $\Check{\mu}_j \in (\mu_j,\xi)$.
	Under Assumption~\ref{assm:indicator}, the integral of~\eqref{eq:jointpdf.taylor} with respect to $\mu_i\in[\xi-\epsilon_i/\sqrt{N},\xi]$ and $\mu_j\in[\xi-\epsilon_j/\sqrt{N},\xi]$ is lower/upper-bounded by
	\begin{equation}\label{eq:integrated.jointpdf.taylor} \frac{\epsilon_i\epsilon_j}{N}f_{ij}(\xi,\xi,\epsilon_i,\epsilon_j) \mp \frac{|\epsilon_i^2\epsilon_j+\epsilon_j^2\epsilon_i|}{2N^{3/2}} p_{1,M,N}(\epsilon_i,\epsilon_j),
	\end{equation}
	Integrating~\eqref{eq:integrated.jointpdf.taylor} once again with respect to $\epsilon_i, \epsilon_j \in(-\infty,\infty)$,
	we have 
	\[
	\E[ \left(I(\tmustar_i\leq \xi)-I(\mu_i\leq \xi)\right)\left(I(\tmustar_j\leq \xi)-I(\mu_j\leq \xi)\right)] = \cO(N^{-1}).
	\]
	Since it is already shown that $\E\big[I(\tmustar_i\leq \xi)-I(\mu_i\leq \xi)\big]=\cO(N^{-1})$, we conclude $\Cov[ I(\tmustar_i\leq \xi)-I(\mu_i\leq \xi), I(\tmustar_j\leq \xi)-I(\mu_j\leq \xi)]=\cO(N^{-1})$ from~\eqref{eq:cov.indicator}, which in turn implies $\Var[ \Phi_{M,N}(\xi) - \Phi_M(\xi)] = \cO(M^{-1}) + \cO(N^{-1})$ from~\eqref{eq:variance.decomp}.
\end{proof}

\subsection{Hockey stick function of the conditional mean}\label{subsec:hockeystick}
Next, we analyze the case when $\zeta$ is a hockey stick function, i.e., $\zeta(\mu_i)=\max\{\mu-\xi,0\}=(\mu-\xi)I(\mu_i>\xi)$ for some $\xi\in\real$, which requires the following additional moment conditions.

\begin{assumption}\label{assm:hockeystick}
	For $p_{1,N,M}(\epsilon)$ in Assumption~\ref{assm:indicator}, $
	\sup_M\sup_{N} \int_{-\infty}^\infty |\epsilon|^3 p_{1,M,N}(\epsilon) d\epsilon <\infty$.
	Similarly, for $p_{s,N,M}(\epsilon_i,\epsilon_j)$ defined in Assumption~\ref{assm:indicator},\
	\begin{align*}
		\sup\nolimits_M\sup\nolimits_{N} \int_{-\infty}^\infty \int_{-\infty}^\infty |\epsilon_i|^{k_i} |\epsilon_j|^{k_j} p_{s,M,N}(\epsilon_i,\epsilon_j) d\epsilon_i d\epsilon_j <\infty
	\end{align*}
	for $s=0,1$, $0\leq k_i, k_j \leq 3$ and $k_i + k_j \leq 5$. Also, for each $M$ and $N$, there exist functions $q_{s,M,N}(\mu_i,\epsilon_i,\epsilon_j), s=0,1$, such that $f_{ij}(\mu_i,\mu_j,\epsilon_i,\epsilon_j) \leq q_{0,M,N}(\mu_i,\epsilon_i,\epsilon_j)$ and $\left|\frac{\partial f_{ij}(\mu_i,\mu_j,\epsilon_i,\epsilon_j)}{\partial \mu_j}\right| \leq q_{1,M,N}(\mu_i,\epsilon_i,\epsilon_j)$ for all $\mu_i, \mu_j, \epsilon_i$, and $\epsilon_j$.
	Lastly, for $s=0,1$,
	\begin{align*}
		\sup_{M}\sup_{N} \int_{-\infty}^\infty \int_{-\infty}^\infty \int_{\xi}^\infty |\mu_i|^{k_m}|\epsilon_i|^{k_i}|\epsilon_j|^{k_j}q_{s,M,N}(\mu_i,\epsilon_i,\epsilon_j) d\mu_i d\epsilon_i d\epsilon_j
		< \infty
	\end{align*}
	for any $\xi\in\real, 0 \leq k_m \leq 1$, and $0 \leq k_i, k_j \leq 3$ {such that} $k_i+k_j \leq 4$.
\end{assumption}

Similar to Assumption~\ref{assm:indicator}, Assumption~\ref{assm:hockeystick} is an extension of Assumption~1 in~\citet{gordy2010} to analyze the hockey stick function, $\zeta$, and is likely to hold when SNS estimates $\E[\zeta(\mu(\theta))]$ well.
Theorem~\ref{thm:hockey.stick} below shows that the hockey stick function results in the same bias and variance convergence rates as the those of the indicator function. Due to the space limit, we present the proof of Theorem~\ref{thm:hockey.stick} in~\ref{app:proof.hockey}.

\begin{theorem}\label{thm:hockey.stick}
	Suppose $\zeta(\mu_i) = (\mu_i-\xi) I(\mu_i> \xi)$ for some constant $\xi\in\real$ and $\widetilde{\zeta} = \sum_{i=1}^M\zeta(\tmustar_i)/M$.
	Under Assumptions~\ref{assm:MCbasic}--\ref{assm:hockeystick}, $\E[\widetilde{\zeta}-\zeta(\mu_i)] = \cO(N^{-1})$ and $\Var[\widetilde{\zeta}] = \cO(M^{-1})+\cO(N^{-1})$.
\end{theorem}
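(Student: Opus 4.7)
The plan is to mirror the structure of the proof of Theorem~\ref{thm:indicator.mse}, exploiting the fact that $\zeta(x)=(x-\xi)I(x>\xi)$ is $1$-Lipschitz with a single kink at $\xi$ and with $\zeta'(x)=I(x>\xi)$ away from $\xi$. Introduce the auxiliary estimator $\bar{\zeta}_M \equiv M^{-1}\sum_{i=1}^M \zeta(\mu_i)$; it is unbiased for $\E[\zeta(\mu_i)]$ and, under the tail-integrability conditions in Assumption~\ref{assm:hockeystick}, satisfies $\Var[\bar{\zeta}_M]=\cO(M^{-1})$. The bias then reduces to $\E[\widetilde{\zeta}-\bar{\zeta}_M]$, and the variance is controlled via $\Var[\widetilde{\zeta}]\leq 2\Var[\widetilde{\zeta}-\bar{\zeta}_M]+2\Var[\bar{\zeta}_M]$, so the remaining task is to analyze the pointwise differences $\zeta(\tmustar_i)-\zeta(\mu_i)$ jointly across $i$.

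For the bias, I would use the decomposition $\zeta(\tmustar_i)-\zeta(\mu_i)=(\tmustar_i-\mu_i)I(\mu_i>\xi)+R_i$, where the remainder $R_i$ is nonzero only when $\mu_i$ and $\tmustar_i$ lie on opposite sides of $\xi$, and in that case $|R_i|\leq (|\epsilon_i|/\sqrt{N})\,I(|\mu_i-\xi|\leq|\epsilon_i|/\sqrt{N})$. Conditioning the main term on $\btheta_i$ gives $N^{-1/2}\E[I(\mu_i>\xi)\,\E[\epsilon_i\mid\btheta_i]]$, which is $\cO(N^{-1})$ by the uniform bias bound from Theorem~\ref{thm:convergencerate}. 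The remainder is integrated against $f_i(\mu,\epsilon)\leq p_{0,M,N}(\epsilon)$ over the thin strip of width $2|\epsilon|/\sqrt{N}$, giving a contribution of order $N^{-1}\int\epsilon^2 p_{0,M,N}(\epsilon)\,d\epsilon = \cO(N^{-1})$.

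For the variance, the $1$-Lipschitz property yields $\Var[\zeta(\tmustar_i)-\zeta(\mu_i)]\leq\E[(\tmustar_i-\mu_i)^2]=\cO(N^{-1})$ uniformly in $i$, so the diagonal sum in $\Var[\widetilde{\zeta}-\bar{\zeta}_M]$ is $\cO(M^{-1}N^{-1})$. The pairwise covariances expand into a leading term $N^{-1}\E[\epsilon_i\epsilon_j I(\mu_i>\xi)I(\mu_j>\xi)]$ plus cross terms involving $R_i$ or $R_j$. Cauchy--Schwarz together with the uniform second-moment bound on $\epsilon$ from Theorem~\ref{thm:convergencerate} shows the leading term is $\cO(N^{-1})$. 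The $R$-terms are handled by a four-dimensional integration analogous to the indicator case but now weighted by $(\mu-\xi)$ on $\{\mu>\xi\}$; applying the Taylor expansion of $f_{ij}$ around $(\xi,\xi)$ in the boundary strips as in~\eqref{eq:jointpdf.taylor}--\eqref{eq:integrated.jointpdf.taylor} and invoking the joint envelopes $q_{s,M,N}$ of Assumption~\ref{assm:hockeystick} shows each covariance is $\cO(N^{-1})$, so the off-diagonal sum is also $\cO(N^{-1})$.

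The main obstacle relative to Theorem~\ref{thm:indicator.mse} is the unboundedness of $\zeta$: every integral that in the indicator proof rested on $|I(\cdot)|\leq 1$ now picks up an extra $\mu-\xi$ factor on $\{\mu>\xi\}$. This is precisely why Assumption~\ref{assm:hockeystick} strengthens Assumption~\ref{assm:indicator} by introducing the $\mu$-weighted joint envelopes $q_{s,M,N}(\mu_i,\epsilon_i,\epsilon_j)$ and by raising the admissible orders of the $\epsilon$-moments. The most delicate bookkeeping is verifying that the $R_i R_j$ and mixed $R_i$--main-term cross pieces really contribute at order $\cO(N^{-1})$ rather than $\cO(N^{-1/2})$; this is where the combined $|\mu_i|^{k_m}|\epsilon_i|^{k_i}|\epsilon_j|^{k_j}$ integrability of Assumption~\ref{assm:hockeystick} is used in full.
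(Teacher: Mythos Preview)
Your approach is correct and in fact more elementary than the paper's. The paper does not introduce the auxiliary $\bar\zeta_M$ nor the $1$-Lipschitz decomposition $\zeta(\tmustar_i)-\zeta(\mu_i)=(\tmustar_i-\mu_i)I(\mu_i>\xi)+R_i$; instead it writes out $\E[\widetilde\zeta]-\E[\zeta(\mu_i)]$ and each pairwise covariance $\Cov[(\tmustar_i-\xi)I(\tmustar_i>\xi),(\tmustar_j-\xi)I(\tmustar_j>\xi)]$ directly as fourfold integrals, subtracts the corresponding true-$\mu$ integrals, and then partitions the $(\mu_i,\mu_j)$ integration region into four pieces (both in $[\xi-\epsilon/\sqrt{N},\xi]$, both in $[\xi,\infty)$, and the two mixed cases). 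It is precisely on the mixed and ``both large'' pieces, where an integrand like $(\mu_i-\xi)\epsilon_j/\sqrt{N}$ is unbounded in $\mu_i$, that the $q_{s,M,N}(\mu_i,\epsilon_i,\epsilon_j)$ envelopes of Assumption~\ref{assm:hockeystick} are genuinely needed.

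By contrast, your decomposition eliminates the unbounded $\mu-\xi$ factor from the outset: both the main term and $R_i$ are dominated by $|\tmustar_i-\mu_i|=N^{-1/2}|\epsilon_i|$. This means you can close the covariance step more cheaply than you indicate: since $\E[(\zeta(\tmustar_i)-\zeta(\mu_i))^2]\leq \E[(\tmustar_i-\mu_i)^2]=\cO(N^{-1})$ uniformly by Theorem~\ref{thm:convergencerate}, Cauchy--Schwarz alone gives $\Cov[\zeta(\tmustar_i)-\zeta(\mu_i),\zeta(\tmustar_j)-\zeta(\mu_j)]=\cO(N^{-1})$ without any Taylor expansion of $f_{ij}$ and without the $q_{s,M,N}$ envelopes. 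Your last paragraph (invoking the $q$-envelopes for the $R$-cross terms) is therefore correct but heavier than necessary; the ``weighted by $(\mu-\xi)$'' concern you flag simply does not arise in your own decomposition. What your route buys is conceptual simplicity and a near-free variance bound; what the paper's direct computation buys is a more explicit accounting of constants and a clearer indication of where each moment condition in Assumption~\ref{assm:hockeystick} enters. A minor caveat shared by both proofs: the claim $\Var[\bar\zeta_M]=\cO(M^{-1})$ (equivalently, the paper's ``first term is $\cO(M^{-1})$'') uses $\E[(\mu_i-\xi)^2I(\mu_i>\xi)]<\infty$, which is tacitly assumed rather than derived from the stated envelopes.
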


\subsection{Smooth function of the conditional mean}\label{subsec:smooth}
To analyze the case when $\zeta$ is a smooth function of $\mu_i$, we make  the following assumption.

\begin{assumption}\label{assm:continuous.risk.measure}
	The continuous function, $\zeta:\real\to\real$, is twice differentiable everywhere with bounded second derivative $\zeta^{\prime\prime}$. Also, $\E[(\zeta(\mu_i))^2], \E[(\zeta^\prime(\mu_i)\epsilon_i)^2]$ and $\E[\epsilon_i^4]$ are bounded.
\end{assumption}

Similar to the indicator and the hockey stick functions, the following theorem shows that the MSE convergence rate of $\widetilde{\zeta}$ is $\cO(M^{-1})+\cO(N^{-1})$ for $\zeta$ that satisfies Assumption~\ref{assm:continuous.risk.measure}.

\begin{theorem}\label{thm:smooth.func}
	Suppose $\zeta$ satisfies Assumption~\ref{assm:continuous.risk.measure} and $\widetilde{\zeta} = \sum_{i=1}^M \zeta(\estmuSN_i^\star)/M$. Under Assumptions~\ref{assm:MCbasic} and~\ref{assm:boundedmoments}, $\E[\widetilde{\zeta}-\zeta(\mu_i)] = \cO(N^{-1})$ and $\Var[\widetilde{\zeta}] = \cO(M^{-1})+\cO(N^{-1})$.
\end{theorem}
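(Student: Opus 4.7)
The plan is to handle bias and variance separately, both via a second-order Taylor expansion of $\zeta$ around $\mu_i$, leveraging the uniform conditional moment bounds of order $\cO(N^{-1})$ from Theorem~\ref{thm:convergencerate} and the boundedness conditions in Assumption~\ref{assm:continuous.risk.measure}. This should be structurally simpler than the indicator and hockey-stick cases treated above, since we no longer need to pass through the density of $\mu_i$ near a threshold and can instead absorb the second-order behavior directly into $\|\zeta''\|_\infty$.

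For the bias, expand $\zeta(\tmustar_i) = \zeta(\mu_i) + \zeta'(\mu_i)(\tmustar_i - \mu_i) + \tfrac{1}{2}\zeta''(\check{\mu}_i)(\tmustar_i - \mu_i)^2$ for some $\check{\mu}_i$ between $\tmustar_i$ and $\mu_i$. Taking expectations, the first-order term equals $\E[\zeta'(\mu_i)\cdot\E[\tmustar_i - \mu_i \mid \btheta_1,\ldots,\btheta_M]]$ by the tower property, which is $\cO(N^{-1})$ because Theorem~\ref{thm:convergencerate} bounds the inner conditional bias uniformly. The second-order term is bounded by $\tfrac{1}{2}\|\zeta''\|_\infty\cdot \E[(\tmustar_i - \mu_i)^2]$, which is also $\cO(N^{-1})$ by Theorem~\ref{thm:convergencerate}.

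For the variance, decompose
\[
\Var[\widetilde{\zeta}] = \frac{1}{M^2}\sum_{i=1}^M \Var[\zeta(\tmustar_i)] + \frac{1}{M^2}\sum_{i\neq j}\Cov[\zeta(\tmustar_i),\zeta(\tmustar_j)].
\]
Using the same Taylor expansion in the diagonal sum, $\Var[\zeta(\tmustar_i)]$ is uniformly bounded in $N$ by a constant involving $\E[\zeta(\mu_i)^2]$, $\E[(\zeta'(\mu_i)\epsilon_i)^2]/N$, and $\|\zeta''\|_\infty^2\E[\epsilon_i^4]/N^2$, so this sum contributes $\cO(M^{-1})$. For the off-diagonal part, write $\zeta(\tmustar_i) = \zeta(\mu_i) + \Delta_i$ with $\Delta_i = \zeta(\tmustar_i) - \zeta(\mu_i)$, giving
\[
\Cov[\zeta(\tmustar_i),\zeta(\tmustar_j)] = \Cov[\zeta(\mu_i),\zeta(\mu_j)] + \Cov[\zeta(\mu_i),\Delta_j] + \Cov[\Delta_i,\zeta(\mu_j)] + \Cov[\Delta_i,\Delta_j].
\]
The first term vanishes by i.i.d.\ of the outer scenarios. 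For the two middle terms, conditioning on $\btheta_1,\ldots,\btheta_M$ makes $\zeta(\mu_i)$ deterministic, and the Taylor expansion together with the conditional bias bound from Theorem~\ref{thm:convergencerate} gives $\E[\Delta_j\mid \btheta_1,\ldots,\btheta_M] = \cO(N^{-1})$ pointwise, which yields an $\cO(N^{-1})$ bound after the outer expectation.

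The main obstacle is the $\Cov[\Delta_i,\Delta_j]$ term, which carries the price of pooling: the shared inner replications induce correlation between $\tmustar_i$ and $\tmustar_j$. Substituting the Taylor expansion into both $\Delta_i$ and $\Delta_j$, the leading contribution is $\E[\zeta'(\mu_i)\zeta'(\mu_j)(\tmustar_i-\mu_i)(\tmustar_j-\mu_j)]$. Writing $\tmustar_i - \mu_i = \epsilon_i/\sqrt{N}$ and applying Cauchy--Schwarz bounds its absolute value by
\[
\frac{1}{N}\sqrt{\E[(\zeta'(\mu_i)\epsilon_i)^2]\cdot\E[(\zeta'(\mu_j)\epsilon_j)^2]} = \cO(N^{-1})
\]
via Assumption~\ref{assm:continuous.risk.measure}. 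The mixed Taylor-remainder cross terms involve quantities like $\E[\zeta'(\mu_i)(\tmustar_i-\mu_i)\cdot\zeta''(\check{\mu}_j)(\tmustar_j-\mu_j)^2]$, which Cauchy--Schwarz bounds by $\cO(N^{-3/2})$ using $\|\zeta''\|_\infty$ and the $\E[\epsilon_i^4]$ moment, and the pure-quadratic remainder is $\cO(N^{-2})$. Summing $\cO(N^{-1})$ over the $M(M-1)$ off-diagonal pairs and dividing by $M^2$ gives the final $\cO(N^{-1})$ contribution, so that $\Var[\widetilde{\zeta}] = \cO(M^{-1}) + \cO(N^{-1})$.
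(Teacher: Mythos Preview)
Your proposal is correct and follows essentially the same route as the paper's proof: both use a second-order Taylor expansion of $\zeta$ at $\mu_i$, handle the bias via the uniform conditional bias and variance bounds of Theorem~\ref{thm:convergencerate}, and decompose $\Var[\widetilde{\zeta}]$ into a diagonal $\cO(M^{-1})$ sum plus off-diagonal covariances shown to be $\cO(N^{-1})$. Your organization of the covariance via $\zeta(\tmustar_i)=\zeta(\mu_i)+\Delta_i$ and the explicit Cauchy--Schwarz bound on the leading cross term $\E[\zeta'(\mu_i)\zeta'(\mu_j)\epsilon_i\epsilon_j]/N$ is an equivalent rearrangement of the paper's direct expansion of the product $\zeta(\tmustar_i)\zeta(\tmustar_j)$.
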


The proof of Theorem~\ref{thm:smooth.func} can be found in~\ref{app:proof.smooth}.

\subsection{Quantile of the conditional mean}\label{subsec:quantile}
Let $\widetilde{q}_\alpha = \tmustar_{(\lceil M\alpha\rceil)}$.
In this section, we prove weak consistency of $\widetilde{q}_\alpha$ as $M$ and $N$ increase.
Recall that in SNS, $q_\alpha$ is estimated by the $\lceil M\alpha\rceil$th order statistic of $M$ independent conditional mean estimators.
In our design, $\widetilde{q}_\alpha$ is an order statistic of the correlated estimators, $\tmustar_1,\ldots,\tmustar_M$.
Consistency of an empirical quantile estimator constructed from dependent outputs has been studied~\citep{sen1972,heidelberger1984} under the assumption that the output sequence has a strong mixing property, which ensures that pairwise correlation between distant outputs in the sequence vanishes.
Our pooled LR estimators do not have this property.
Instead, their pairwise correlation decreases as $N$ increases.

To show weak consistency of $\widetilde{q}_\alpha$, we need the following intermediate result. Lemma~\ref{lem:uniform.ecdf} states that $\Phi_{M,N}(\cdot)$ is uniformly weakly consistent to $\Phi(\cdot)$; see~\ref{app:lemmaproof} for its proof.
Theorem~\ref{thm:quantile} is the main result of this section.

\begin{lemma}\label{lem:uniform.ecdf}
	Under Assumptions~\ref{assm:MCbasic}--\ref{assm:indicator}, $\sup_{\xi\in\real}|\Phi_{M,N}(\xi)-\Phi(\xi)| = \cO_p(M^{-1/2})+\cO_p(N^{-1/2})$.
\end{lemma}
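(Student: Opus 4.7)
The plan is to split the uniform error using the intermediate empirical cdf of the true conditional means and then apply different tools to each piece. Define $\Phi_M(\xi) \equiv M^{-1}\sum_{i=1}^M I(\mu_i\leq \xi)$ and write
\[
\sup_{\xi\in\mathbb{R}}|\Phi_{M,N}(\xi)-\Phi(\xi)| \leq \sup_{\xi}|\Phi_{M,N}(\xi)-\Phi_M(\xi)| + \sup_{\xi}|\Phi_M(\xi)-\Phi(\xi)|.
\]
The second term is the classical Glivenko--Cantelli quantity for the i.i.d.\ sample $\mu_1,\ldots,\mu_M$ drawn from the (absolutely continuous) cdf $\Phi$. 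Invoking the Dvoretzky--Kiefer--Wolfowitz inequality immediately gives $\sup_\xi|\Phi_M(\xi)-\Phi(\xi)| = \cO_p(M^{-1/2})$, absorbing the $\cO_p(M^{-1/2})$ contribution in the stated rate.

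For the first term, let $D_M(\xi) = \Phi_{M,N}(\xi)-\Phi_M(\xi) = M^{-1}\sum_{i=1}^M[I(\tmustar_i\leq \xi)-I(\mu_i\leq \xi)]$. The $i$-th summand is nonzero only when $\xi$ lies strictly between $\mu_i$ and $\tmustar_i$, so for any threshold $\eta>0$,
\[
|D_M(\xi)| \leq \big[\Phi_M(\xi+\eta)-\Phi_M(\xi-\eta)\big] + \frac{1}{M}\sum_{i=1}^M I\!\left(|\tmustar_i-\mu_i|>\eta\right).
\]
The Lipschitz property of $\Phi$ (density $\phi$ bounded by some $L$, per Assumption~\ref{assm:indicator}), combined with the DKW bound from the previous paragraph, yields $\sup_\xi[\Phi_M(\xi+\eta)-\Phi_M(\xi-\eta)] \leq 2L\eta + 2\sup_\xi|\Phi_M-\Phi| = 2L\eta + \cO_p(M^{-1/2})$. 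For the tail term, Theorem~\ref{thm:convergencerate} gives $\Var[\tmustar_i-\mu_i]=\cO(N^{-1})$ uniformly, and Markov's inequality then bounds $\Pr(|\tmustar_i-\mu_i|>\eta) \leq C/(N\eta^2)$, so the sample average has expectation of order $(N\eta^2)^{-1}$.

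Balancing the two pieces via an appropriate choice of $\eta$ then produces the target $\cO_p(N^{-1/2})$ rate. The main obstacle is exactly this balancing: if we only use the second-moment Markov bound $\cO_p(1/(N\eta^2))$, the optimal $\eta\sim N^{-1/3}$ yields only $\cO_p(N^{-1/3})$, which is suboptimal. To recover $\cO_p(N^{-1/2})$ one needs sharper control on $\sum_i I(|\tmustar_i-\mu_i|>\eta)$ at the critical scale $\eta = \Theta(N^{-1/2})$. My plan would be to strengthen the tail bound by exploiting higher-order moments of $\epsilon_i$ already used in Assumption~\ref{assm:hockeystick} together with a chaining argument on a quantile-based grid $\{\xi_k:\Phi(\xi_k)=k/K\}$ of $\Phi$: because $\Phi_{M,N}$ and $\Phi$ are both monotone, $\sup_\xi|\Phi_{M,N}(\xi)-\Phi(\xi)|\leq \max_k|\Phi_{M,N}(\xi_k)-\Phi(\xi_k)| + 1/K$, and the maximum over $K$ grid points can be controlled via a maximal inequality using the variance bound $\Var[\Phi_{M,N}(\xi_k)]=\cO(M^{-1})+\cO(N^{-1})$ established in Theorem~\ref{thm:indicator.mse}. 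Choosing $K$ to match the local oscillation of $\Phi_{M,N}$ (which is controlled by the DKW bound plus the Lipschitz constant of $\Phi$) will complete the proof; verifying that the resulting tail probabilities yield tightness at scale $M^{-1/2}+N^{-1/2}$ is the delicate technical step.
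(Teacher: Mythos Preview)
Your final approach---the quantile grid, monotonicity, and the pointwise variance bound from Theorem~\ref{thm:indicator.mse}---is exactly the paper's proof. The paper skips your initial detour entirely: it applies Chebyshev to the MSE bound of Theorem~\ref{thm:indicator.mse} to get $|\Phi_{M,N}(\xi)-\Phi(\xi)|=\cO_p(M^{-1/2})+\cO_p(N^{-1/2})$ at each fixed $\xi$, then invokes the same monotonicity inequality you wrote,
\[
\sup_{\xi}|\Phi_{M,N}(\xi)-\Phi(\xi)|\leq \max_{1\le j\le J}|\Phi_{M,N}(\xi_j)-\Phi(\xi_j)|+1/J,
\]
and finishes by taking $J\geq \max\{M,N\}$ so that $1/J$ is absorbed into the target rate.

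Your opening decomposition via $\Phi_M$ and the $\eta$-threshold bound is an unnecessary detour: you correctly diagnose it as yielding only $\cO_p(N^{-1/3})$, and the paper avoids it by working with $\Phi_{M,N}-\Phi$ directly rather than passing through $\Phi_{M,N}-\Phi_M$. The DKW step is not needed because Theorem~\ref{thm:indicator.mse} already controls both the outer-sampling variance and the inner-simulation contribution simultaneously. Your residual concern---that the maximum over a grid whose size grows with $M,N$ needs a maximal inequality, not merely a pointwise rate---is legitimate; the paper asserts this step without elaboration, so in that respect your proposal is more candid about the remaining work than the paper's own proof.
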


\begin{theorem}\label{thm:quantile}
	Suppose Assumptions~\ref{assm:MCbasic}--\ref{assm:indicator} hold and $\phi(q_\alpha)>0$ for given $0<\alpha<1$. Then, $|\widetilde{q}_\alpha - q_\alpha|= \cO_p(M^{-1/2})+\cO_p(N^{-1/2})$.
\end{theorem}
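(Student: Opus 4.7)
The plan is to combine Lemma~\ref{lem:uniform.ecdf} with a standard quantile-inversion argument that exploits the positivity of $\phi$ at $q_\alpha$. Write $r_{M,N} \equiv M^{-1/2}+N^{-1/2}$ so that $\sup_{\xi\in\mathbb{R}}|\Phi_{M,N}(\xi)-\Phi(\xi)|=\cO_p(r_{M,N})$. Fix any $\epsilon>0$; it suffices to exhibit a constant $c>0$ (independent of $M,N$) such that $\mathrm{Pr}\{|\widetilde{q}_\alpha-q_\alpha|>c\,r_{M,N}\}<\epsilon$ for all sufficiently large $M$ and $N$.

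\textbf{Upper tail.} For a constant $c>0$ to be chosen, set $q_+\equiv q_\alpha+c\,r_{M,N}$. Because $\widetilde{q}_\alpha$ is the $\lceil M\alpha\rceil$-th order statistic of $\{\tmustar_i\}_{i=1}^M$, the event $\{\widetilde{q}_\alpha>q_+\}$ implies $\Phi_{M,N}(q_+) < \lceil M\alpha\rceil/M = \alpha + \cO(M^{-1})$. Decompose
$$\Phi_{M,N}(q_+)-\alpha = [\Phi_{M,N}(q_+)-\Phi(q_+)] + [\Phi(q_+)-\Phi(q_\alpha)].$$
By Lemma~\ref{lem:uniform.ecdf}, the first bracket is $\cO_p(r_{M,N})$. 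A first-order Taylor expansion of $\Phi$ at $q_\alpha$, valid because $\phi$ is continuous (Assumption~\ref{assm:indicator}), shows the second bracket equals $\phi(q_\alpha)\,c\,r_{M,N}+o(r_{M,N})$. Since $\phi(q_\alpha)>0$, choosing $c$ large enough makes the deterministic positive term $\phi(q_\alpha)\,c\,r_{M,N}$ strictly exceed the $\cO_p(r_{M,N})$ and $\cO(M^{-1})$ contributions with probability at least $1-\epsilon/2$ for all large $M,N$. This rules out $\widetilde{q}_\alpha>q_+$ with the same probability.

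\textbf{Lower tail.} An entirely symmetric argument with $q_-\equiv q_\alpha-c\,r_{M,N}$ uses that $\{\widetilde{q}_\alpha<q_-\}$ implies $\Phi_{M,N}(q_-) \geq \lceil M\alpha\rceil/M - \cO(M^{-1}) = \alpha + \cO(M^{-1})$, and the same Taylor/uniform-convergence decomposition yields $\Phi(q_-)-\alpha=-\phi(q_\alpha)\,c\,r_{M,N}+o(r_{M,N})$, so that for $c$ large enough $\mathrm{Pr}\{\widetilde{q}_\alpha<q_-\}<\epsilon/2$. Union-bounding gives $\mathrm{Pr}\{|\widetilde{q}_\alpha-q_\alpha|>c\,r_{M,N}\}<\epsilon$, which is precisely the claim $|\widetilde{q}_\alpha-q_\alpha|=\cO_p(M^{-1/2})+\cO_p(N^{-1/2})$.

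The main obstacle has already been absorbed into Lemma~\ref{lem:uniform.ecdf}, which controls the pairwise dependence among the pooled estimators $\{\tmustar_i\}$ (standard strong-mixing arguments from \citet{sen1972} and \citet{heidelberger1984} do not directly apply, as noted in the paper). Given that lemma, the remaining work is routine but requires bookkeeping to confirm (i) the rounding term $\cO(M^{-1})$ in $\lceil M\alpha\rceil/M$ is dominated by $r_{M,N}$, and (ii) the $o(r_{M,N})$ remainder in the Taylor expansion is genuinely negligible once $c$ is chosen large relative to the constants controlling the $\cO_p(r_{M,N})$ rate, which is where the assumption $\phi(q_\alpha)>0$ is essential: it ensures a CDF-level deviation of order $r_{M,N}$ translates back into an $\xi$-level deviation of the same order.
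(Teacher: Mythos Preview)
Your proposal is correct and follows essentially the same approach as the paper: both arguments rest on Lemma~\ref{lem:uniform.ecdf} for the uniform rate on $\Phi_{M,N}-\Phi$ and then invert using $\phi(q_\alpha)>0$. The only cosmetic difference is that the paper evaluates the uniform bound at the random point $\widetilde{q}_\alpha$ and Taylor-expands $\Phi^{-1}$ around $\alpha$, whereas you evaluate at the deterministic points $q_\alpha\pm c\,r_{M,N}$ and Taylor-expand $\Phi$ around $q_\alpha$; these are equivalent standard quantile-inversion devices.
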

\begin{proof}
	For each $M$, $|\Phi_{M,N}(\widetilde{q}_\alpha) - \alpha| \leq 1/M$.
	Also, Lemma~\ref{lem:uniform.ecdf} implies $|\Phi_{M,N}(\widetilde{q}_\alpha)-\Phi(\widetilde{q}_\alpha)|\leq \sup_{\xi\in\real}|\Phi_{M,N}(\xi)-\Phi(\xi)| = \cO_p(M^{-1/2})+\cO_p(N^{-1/2})$. Therefore, $|\Phi(\widetilde{q}_\alpha)-\alpha| = \cO_p(M^{-1/2})+\cO_p(N^{-1/2})$.
	Then, for sufficiently large $M$ and $N$, there exists $U^\star \in (\Phi(\widetilde{q}_\alpha), \alpha)$ such that $\phi(\Phi^{-1}(U^\star)) >0$ and
	$\Phi^{-1}(\Phi(\widetilde{q}_\alpha)) = \Phi^{-1}(\alpha) + \frac{1}{\phi(\Phi^{-1}(U^\star))}(\Phi(\widetilde{q}_\alpha)-\alpha)$
	with probability arbitrarily close to $1$.
	Because $\phi$ is bounded in a neighborhood of $q_\alpha$, $|\Phi^{-1}(\Phi(\widetilde{q}_\alpha)) - \Phi^{-1}(\alpha)|=|\widetilde{q}_\alpha - q_\alpha| = \cO_p(M^{-1/2}) + \cO_p(N^{-1/2})$.
\end{proof}

\section{Growth Rate of Simulation Budget in the Optimal Design}\label{sec:opt.obj}
Section~\ref{sec:asymptotics} suggests that selecting $N=\Theta(M)$ leads to the most efficient experiment design.
In general, it is difficult to derive an analytical expression 
for the minimized simulation budget as a function of $M$ and $N$.
In this section, we study how quickly the optimal budget grows in $M$ asymptotically when we set $N=M$ and the input distribution of the inner simulation belongs to an exponential family.
\begin{definition}\label{def:exp.fam}
	A probability distribution function, $h(\bx;\btheta)$, is said to belong to an \emph{exponential family} in the canonical form, if it can be written as
	\begin{equation}\label{eq:expon.fam}
		h(\bx;\btheta) = B(\bx)\exp(\btheta^\top T(\bx) - A(\btheta)),
	\end{equation}
	where $B(\bx)$, $T(\bx)$, $\btheta$, and $A(\btheta) \equiv \ln \left(\int_{\cX} B(\bx) \exp(\btheta^\top T(\bx))d\bx\right)$ are referred to as the {base measure}, {sufficient statistic}, {natural parameter}, and {log-partition function}, respectively.
	The {natural parameter space} is defined as $\bar{\bTheta} \equiv\left\lbrace \btheta:\int_{\cX} B(\bx) \exp(\btheta^\top T(\bx))d\bx<\infty\right\rbrace$.
	An exponential family is called \emph{regular} if $\bar{\bTheta}$ is a non-empty open set in $\real^p$.
\end{definition}
\begin{lemma}\label{lem:exp.EW2}
	Suppose $h(\bx;\btheta)$ belongs to an exponential family with natural parameter space~$\bar{\bTheta}$.
	Then, for any $\btheta_i,\btheta_j\in\bar{\bTheta}$, $\E_{\btheta_j}[W_{ij}^2] = \exp(A(\btheta_j) -2A(\btheta_i)+ A(2\btheta_i-\btheta_j))$ if $2\btheta_i-\btheta_j \in \bar{\bTheta}$ or $\infty$ otherwise.
\end{lemma}
\begin{proof}
	Substituting~\eqref{eq:expon.fam} into the definition of likelihood ratio $W_{ij}$ we have \small
	\begin{align*}
		\E_{\btheta_j}[W_{ij}^2] & = \int_\cX \left(\frac{h(\bx;\btheta_i)}{h(\bx;\btheta_j)}\right)^2h(\bx;\btheta_j)d\bx = \exp(A(\btheta_j) -2A(\btheta_i))\int_\cX B(\bx)\exp((2\btheta_i-\btheta_j)^\top T(\bx))d\bx      \\
		& = \exp(A(\btheta_j) -2A(\btheta_i)+A(2\btheta_i-\btheta_j))\int_\cX \underbrace{B(x)\exp((2\btheta_i-\btheta_j)^\top T(\bx) - A(2\btheta_i-\btheta_j))}_{=h(\bx;2\btheta_i-\btheta_j)}d\bx.
	\end{align*} \normalsize
	Note that $\int_\cX h(\bx;2\btheta_i-\btheta_j) d\bx$ equals one, if $2\btheta_i-\btheta_j \in \bar{\bTheta}$, and diverges, otherwise.
\end{proof}

Lemma~\ref{lem:exp.EW2} implies that $(\E_{\btheta_j}[W_{ij}^2])^{-1}=0$, if $2\btheta_i-\btheta_j \not\in \bar{\bTheta}$.
In this case, the optimal design does not pool replications from $h(\bx;\btheta_j)$ to estimate $\mu_i$ as $\ESS_{ij} = 0$.
\begin{assumption}\label{assm:exp.fam}
	The following holds for $\bTheta$ and $h(\mathbf{x};\btheta)$:
	\begin{enumerate}[label=(\roman*), itemsep=0pt, topsep=0pt]
		\item $h(\mathbf{x};\btheta)$ belongs to a regular exponential family with natural parameter space $\bar{\bTheta}$.\label{assm:exp.fam1}
		\item $\bTheta$ is a closed and bounded (or, equivalently, compact) subset of $\bar{\bTheta}$.\label{assm:exp.fam2}
	\end{enumerate}
\end{assumption}
Assumption~\ref{assm:exp.fam}.\ref{assm:exp.fam1} includes many common choices of input models such as exponential, normal, and log-normal distributions. 
Assumption~\ref{assm:exp.fam}.\ref{assm:exp.fam2} is not restrictive as the bounds for $\bTheta$ can be chosen arbitrarily large. Under Assumption~\ref{assm:exp.fam}, we first show the following lemma before stating the main theorem of this section.

\begin{lemma} \label{lem:subcover}
	Suppose Assumption~\ref{assm:exp.fam} holds. Then, there exists finite set $\bTheta_{M^*}=\{\btheta_1,\ldots,\btheta_{M^*}\}$ such that for each $\theta_i\in\bTheta$, $n_{ij}^e>1/2$ for some $\btheta_j\in\bTheta_{M^*}$.
\end{lemma}
\begin{proof}
	From~\eqref{eq:expon.fam}, the moment generating function of the sufficient statistic, $T(\bx)$, of $h(\bx;\btheta)$ can be derived as $M_T(\bm{u})=\E[e^{\bm{u}^\top T(\bx)}]=e^{A(\btheta + \bm{u}) - A(\btheta)}$ for $\bm{u} \in \real^p$. Because $\bar{\bTheta}$ is a non-empty open set, $M_T(\mathbf{u})$ is finite for any $\btheta\in\bar{\bTheta}$ and sufficiently small $\bm{u}$. This implies that $\E_{\btheta}[T(\bx)]< \infty$ for any $\btheta\in \bar{\bTheta}$. From Definition~\ref{def:exp.fam}, we have \small
	\[
	\nabla_{\btheta} A(\btheta) = \frac{\nabla_{\btheta}\left(\int_{\cX} B(\bx) \exp(\btheta^\top T(\bx))d\bx\right)}{\int_{\cX} B(\bx) \exp(\btheta^\top T(\bx))d\bx} = \int_{\cX} T(\bx)B(\bx) \exp(\btheta^\top T(\bx)-A(\btheta))d\bx = \E_{\btheta}[T(\bx)]<\infty.
	\] \normalsize
	Therefore, $A(\btheta)$ is continuous in $\btheta\in\bar\bTheta$ as its gradient exists everywhere in $\bar\bTheta$.

	Consider any $\btheta_i,\btheta_j \in {\bTheta}$. By Lemma~\ref{lem:exp.EW2} we have $\E_{\btheta_j}[W_{jj}^2] = 1$ and that $\E_{\btheta_j}[W_{ij}^2]$ is continuous in $\btheta_i$ because $A(\cdot)$ is continuous.
	Then, there exists open neighborhood $\mathcal{N}(\btheta_j)$ of $\btheta_j$ such that $\E_{\btheta_j}[W_{ij}^2] < 2$ for all $\btheta_i\in \mathcal{N}(\btheta_j)$.
	Equivalently, $\ESS_{ij} = 1/\E_{\btheta_j}[W_{ij}^2] > 1/2$ for all $\btheta_i\in \mathcal{N}(\btheta_j)$.
	We can construct $\mathcal{N}(\btheta_j)$ for each $\btheta_j\in {\bTheta}$ and the collection of them is an open cover of $\bTheta$. Then, there exists finite subcover $\bTheta_{M^*}=\{\btheta_1,\ldots,\btheta_{M^*}\}$ of the open cover by the compactness of $\bTheta$. Consequently, for any $\theta_i\in\bTheta,$ there exists $\theta_j\in\bTheta_{M^*}$ such that $\btheta_i\in\mathcal{N}(\btheta_j)$ and thus, the conclusion follows.
\end{proof}

Lemma~\ref{lem:subcover} implies that by simulating $2M$ inner replications at all outer scenarios in $\bTheta_{M^*}$, we can guarantee that all $\btheta_i\in\bTheta$ have ESS no less than $M$. Building upon this insight, Theorem~\ref{thm:linear.growth.obj.val} shows the main result of this section.

\begin{theorem}\label{thm:linear.growth.obj.val}
	Suppose Assumption~\ref{assm:exp.fam} holds. Furthermore, for $M>M^*$, suppose we choose the first $M^*$ outer scenarios to be those in $\bTheta_{M^*}$ and let $N=M$. Let $\Gamma_M$ be the optimal objective function value of~\eqref{prob:LP} with $M=N$.
	Then, 
	$\Gamma_{M} = \cO(M)$.
\end{theorem}
\begin{proof}
	As discussed, by Lemma~\ref{lem:subcover}, $N_{j}=2M$, $j=1,\ldots,M^*$, is a feasible solution to~\eqref{prob:LP} with the resulting objective function value of
	$2MM^*$.
	Therefore, $\Gamma_M \leq 2MM^*$ for all $M >M^*$, which implies $\Gamma_M = \cO(M)$.
\end{proof}

Recall that the MSE convergence rates of the nested statistic estimators studied in Sections~\ref{subsec:indicator}--\ref{subsec:smooth} are $\cO(M^{-1})$. Therefore, Theorem~\ref{thm:linear.growth.obj.val} stipulates that the MSEs converge in $\cO(\Gamma^{-1})$ given simulation budget $\Gamma$ when Assumption~\ref{assm:exp.fam} holds.

The effect of fixing the first $M^*$ outer scenarios as described in Theorem~\ref{thm:linear.growth.obj.val} is negligible as $M$ increases. In Section~\ref{subsec:IU.inventory}, we empirically observe that the minimized simulation budget indeed grows linearly in $M$ when $N=M$ even if all outer scenarios are chosen at random.


In the proof for Theorem~\ref{thm:linear.growth.obj.val}, we postulate the existence of $\bTheta_{M^*}$.
In~\ref{app:examples}, we provide two examples to illustrate how $\bTheta_{M^*}$ can be constructed.

\section{Numerical Studies}\label{sec:expr}
In this section, we present three numerical examples to demonstrate the performance of the proposed optimal design.
The following experiment designs are compared:
\begin{itemize}[leftmargin=*, itemsep=0pt, topsep=0pt]
	\item \emph{Optimal design}: Given the outer scenarios, the optimal sampling and pooling decisions are obtained by solving~\eqref{prob:LP} with $N=M$. Let $\Gamma$ denote the minimized simulation budget in our optimal design.
	We emphasize again that $N$ is only a benchmark number of inner replications for setting the ESS constraints in~\eqref{prob:LP}. The optimal design does not run $N$ inner replications in each scenario.
	
	\item \emph{SNS}$^{+}$: Standard nested simulation design with the same $M$ outer scenarios as the optimal design and  {$N=M$, which results in the total budget of $MN=M^2$.
		{SNS}$^{+}$ sets a benchmark for the precision requirement for the optimal design as the latter shows the same MSE convergence rate as the standard nested simulation when $N$ is set to $M$.}
	
	\item \emph{SNS}: Standard nested simulation design that consumes the same simulation budget $\Gamma$ as the optimal design.
	Following the recommendations in~\cite{gordy2010}, we set $M = \lceil\Gamma^{2/3}\rceil$ and $N = \lceil\Gamma^{1/3}\rceil$, respectively.
	The SNS design's simulation budget may be slightly higher than the optimal design's due to the ceiling function.
	
	\item \emph{Regression}: Pooling based on a linear regression model; as recommended by~\cite{broadie2015}, we sample $\Gamma$ initial design points and run one replication at each to fit the model. We chose different basis functions to suit each example. The fitted model is evaluated at the same $M$ outer scenarios used in the optimal design to compute the performance measures.
\end{itemize}

The first two examples are ERM problems whose objectives are to evaluate the risk of the future portfolio values due to the price fluctuation of the underlying assets.
The first example has two options written on one underlying asset while the second example considers a portfolio with 300 options written on 10 assets.
The third example demonstrates Bayesian input model uncertainty quantification applied to a multi-product newsvendor problem.
The experiment designs are evaluated by the coverage probabilities and widths of the credible intervals (CrIs) of the expected profit.

The expression for $\mu(\btheta)$ is known in all three examples, which facilitates {performance evaluation} of the experiment designs in comparison. All source codes are available at \url{https://github.com/INFORMSJoC/2022.0392/}~\citep{GitHub}.

\subsection{Single-Asset Enterprise Risk Management (ERM) Problem}\label{subsec:ERM}
In this section, we consider a straddle option portfolio that consists of a call option and a put option with the same underlying stock, strike, and maturity.
Let the underlying stock price at $t=0$ be $S_0=\$100$.
We assume that the stock is non-dividend-paying and follows the Black-Scholes model with $\eta=5\%$ annualized expected rate of return and $\sigma=30\%$ annualized volatility.
The annualized risk-free rate is $r=2\%$.
The common maturity of both options is $T=2$ years and the common strike price is $K=\$110$.

We are interested in the value of the portfolio in $\tau=1/4$ year from now.
The future portfolio value can be evaluated via nested simulation by first simulating the stock price at $\tau$, $S_\tau$, then computing the expected payoff of the stock at the maturity given $S_\tau$ by running inner replications. Here, the outer scenario is one-dimensional $\theta = S_\tau$, and $\mu(\theta)$ corresponds to the conditional expected payoff given $S_\tau$.
From the Black-Scholes model, the outer scenarios are simulated under the real-world measure, i.e., $S_\tau|S_0 =S_0 e^{Z_\tau}$, where the rate of return $Z_\tau$, is distributed as $\mathcal{N}((\eta-\frac{1}{2}\sigma^2)\tau, \sigma^2 \tau)$.
Thus, $\theta = S_\tau$ has a log-normal distribution whose density function is shown in Figure~\ref{subfig:OutDist}.
In this example, we choose $M$ equally-spaced quantiles of the log-normal distribution as the outer scenarios.

Given any scenario $\theta = S_\tau$, the input random variable for the inner replication is the stock price at maturity, $X = S_T$.
From the Black-Scholes model, the inner simulation is conducted under the risk-neutral measure, i.e., $S_T|S_\tau=S_\tau e^{Z_T}$, where $Z_T\sim\mathcal{N}((r-\frac{1}{2}\sigma^2)(T-\tau),\sigma^2(T-\tau))$.
The simulation model $g$ computes the discounted payoff of the straddle option from $X$; $g(X)=e^{-r(T-t)}[(K-X)_++(X-K)_+]$, where $(x)_+=\max\{x,0\}$. Thus, $\mu(\theta) = \E_{\theta}[g(X)] = \E[g(X)|S_\tau]$.
The analytical expression for $\mu(\theta)$ can be derived from the Black-Scholes model without simulation.
Figure~\ref{subfig:ConditionalMean} depicts $\mu(\theta)$, which shows that the portfolio value is high when $S_\tau$ takes extreme values.
We consider a financial institution that offers the straddle strategies to investors, i.e., a short position.
So the company suffers large losses when $\mu(\theta)$ is large, or when $S_\tau$ takes extreme values.
For the regression approach, the weighted Laguerre polynomials up to order $2$ are adopted as the basis functions, which is a common choice in pricing American options~\citep{longstaff2001valuing}.

\begin{figure}
	\centering
	\begin{subfigure}[b]{0.45\textwidth}
		\includegraphics[width=\textwidth]{./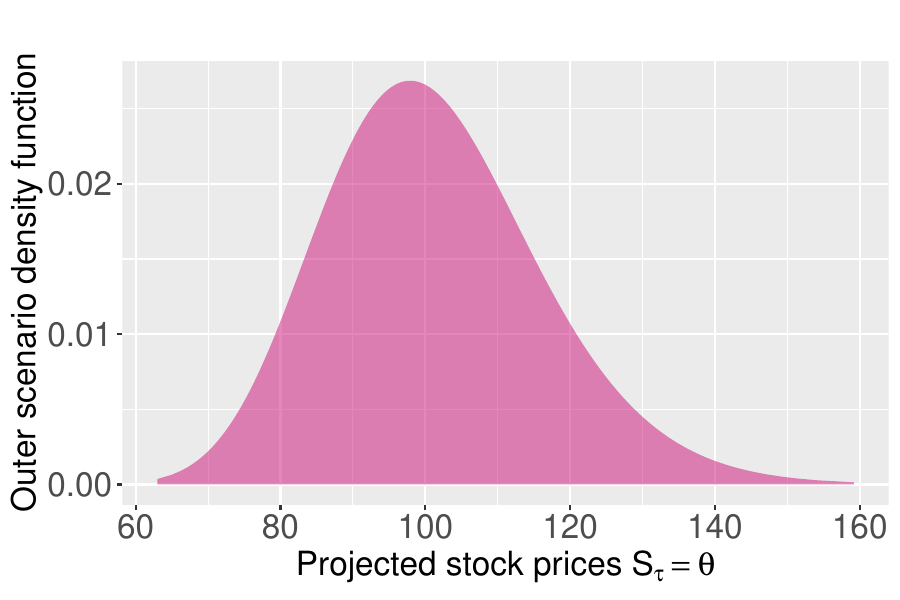}
		\caption{pdf of outer scenarios $S_\tau = \theta$.}\label{subfig:OutDist}
	\end{subfigure}
	\hfill
	\begin{subfigure}[b]{0.45\textwidth}
		\includegraphics[width=\textwidth]{./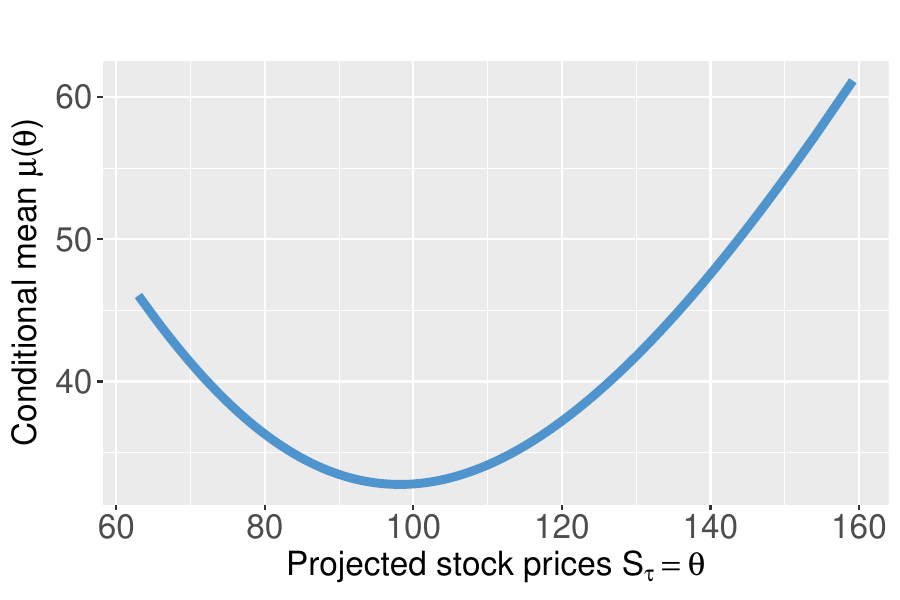}
		\caption{Conditional expectation $\mu(\theta)$.}\label{subfig:ConditionalMean}
	\end{subfigure}
	\caption{Problem settings 1D ERM example: (a) shows the pdf of $\theta$ and (b) plots $\mu(\theta)$ against $\theta$.}
\end{figure}

Firstly, we consider a fixed number of outer scenarios, i.e., $M=1{,}000$ equally-spaced quantiles from the outer scenario's log-normal distribution, and compare the estimation accuracies for different nested simulation designs.
We set $N = M = 1{,}000$ when solving~\eqref{prob:LP} for the optimal design.
The resulting simulation budget in the optimal design is $\Gamma = 2{,}148$ replications.
The optimal sampling decision, $\{c_j^\star\}$, allocates $29\%$, $21\%$, $21\%$, and $29\%$ of the simulation budget $\Gamma$ to only 4 scenarios, i.e., $\theta=70.63, 71.01, 141.18$, and $141.94$. These points are near the two tail ends of distribution of $\theta$, which can be explained by the ESS formula.
In this example, the inner simulation random variable $X=S_T|S_\tau$ for any outer scenario $\theta=S_\tau$ follows a log-normal distribution with the common variance $\sigma^2(T-\tau)$ with mean $m_\theta=\ln \theta + (r-\frac{1}{2}\sigma^2)(T-\tau)$.
Thus, for any $\theta_i$ and $\theta_j$, the LR is $W_{ij}(X)=\exp\left(\frac{(\ln X - m_{\theta_j})^2-(\ln X - m_{\theta_i})^2}{2\sigma^2(T-\tau)}\right)$ and its second moment is $\E_{\theta_j}[W_{ij}^2] = \exp\left(\frac{(m_{\theta_i}-m_{\theta_j})^2}{\sigma^2}\right)=\exp\left(\frac{(\ln \theta_i-\ln \theta_j)^2}{\sigma^2}\right)$.
Consequently, the ESS of using one sample from $\theta_j$ to estimate the conditional mean for scenario $\theta_i$ is inversely proportional to $\exp((\ln\theta_i-\ln \theta_j)^2)$, which indicates that the ESS falls off quickly when $\theta_i\neq\theta_j$. Therefore, the $\theta$s on the tails benefit the most by pooling from nearby $\theta$s, whereas the $\theta$s in the middle can achieve the desired ESS by pooling from both tails.

For each of the $1{,}000$ $\theta$'s, we run $10{,}000$ macro replications for all three simulation designs to estimate the corresponding $\mu(\theta)$.
For each simulation design, the $2.5\%$ and $97.5\%$ quantiles of the estimated $\mu(\theta)$ from these $10{,}000$ macro runs at all $\theta$'s are connected to form a 95\% confidence band.
Three resulting confidence bands, one for each simulation design, are depicted in Figure~\ref{fig:95ConfBands}.
Note that SNS is omitted from Figure~\ref{fig:95ConfBands} as its confidence band is too wide to be compared {in} the same plot.
Figure~\ref{subfig:95ConfBandsZoomed} is a zoomed in version of Figure~\ref{subfig:95ConfBandsAll} near the mode of the outer scenario distribution.
The black curve in these figures shows the true option portfolio value $\mu(\theta)$ calculated from closed-form option pricing formulas in the Black-Scholes model.
We see in Figure~\ref{subfig:95ConfBandsAll} that the optimal design's confidence band is almost indistinguishable from that of the SNS$^+$ design.
This demonstrates that the precision requirement~\eqref{eq:var.constraint} in our optimization formulation is effective despite approximation~\eqref{eq:ESS}.
The small approximation error~\eqref{eq:ESS.liu} shows up in the zoomed Figure~\ref{subfig:95ConfBandsZoomed}, where we see the optimal design's confidence band is slightly wider than the SNS$^+$ design's.
We note that the optimal design achieves similar precision as the SNS$^+$ design with significant smaller simulation budget: The optimal design's simulation budget $2$,$148$ is 465 times smaller than that of the SNS$^+$ design ($MN = 10^6$).
We also see in Figure~\ref{subfig:95ConfBandsAll} that the regression-based design's confidence band is significantly wider than the optimal design's and the SNS$^+$ design's when $\theta=S_\tau$ takes extreme values.
In general, users do not know the relationship between $\mu(\theta)$ and $\theta$ prior to simulation. So, particularly low precision in any region could significantly impact tail risk measure estimation thus is undesirable.

\begin{figure}[h!]
	\centering
	\begin{subfigure}[b]{0.45\textwidth}
		\includegraphics[width=\textwidth]{./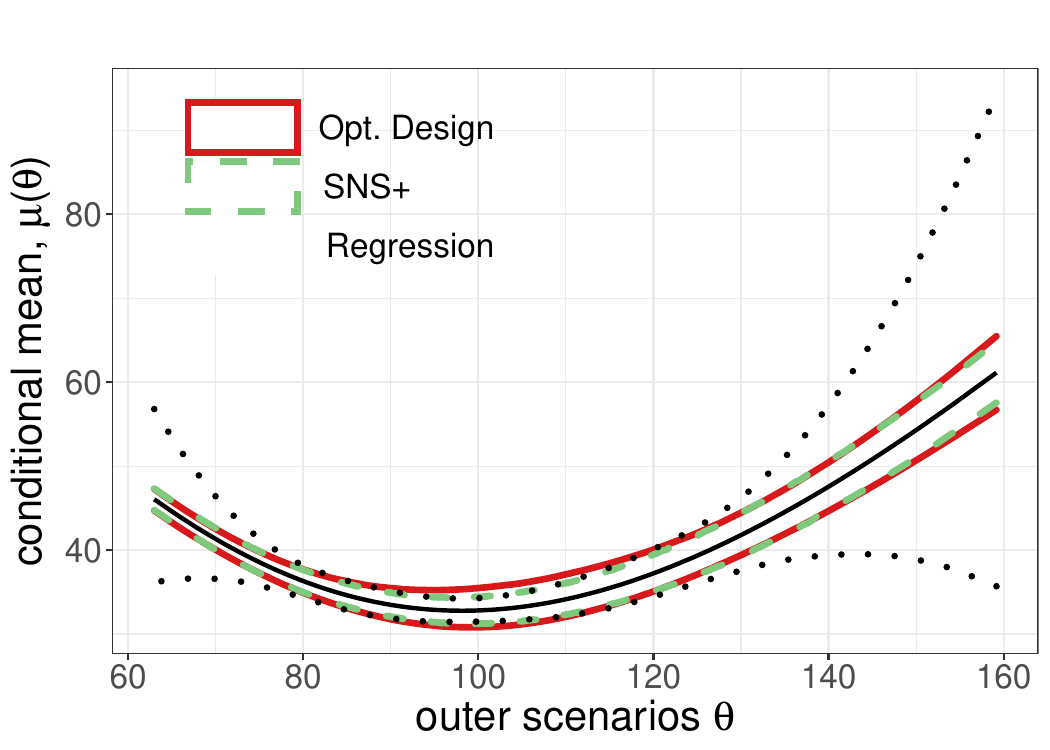}
		\caption{\small}\label{subfig:95ConfBandsAll}
	\end{subfigure}
	\hfill
	\begin{subfigure}[b]{0.45\textwidth}
		\includegraphics[width=\textwidth]{./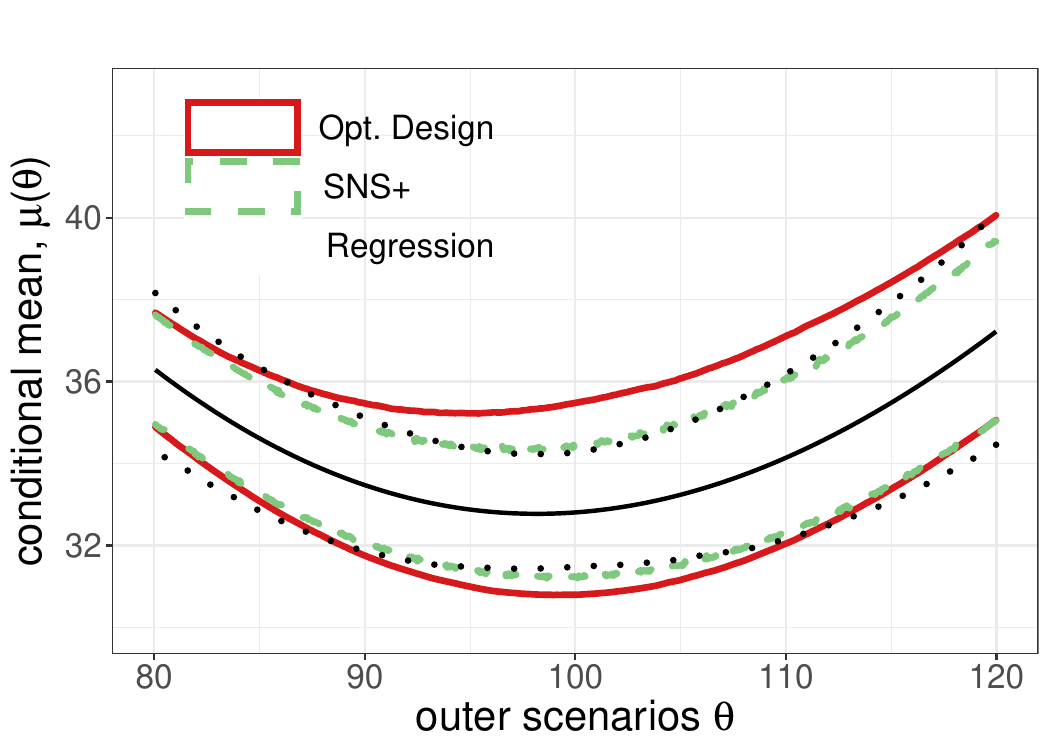}
		\caption{\small}\label{subfig:95ConfBandsZoomed}
	\end{subfigure}
	\caption{(a) shows the 95\% confidence bands for different experiment designs based on 10{,}000 macro runs; (b) is a zoomed-in version of (a) around the median of $\theta$.}\label{fig:95ConfBands}
\end{figure}

Next, we consider different numbers of outer scenarios $M$ and compare the MSEs of the portfolio risk measures computed from all four designs.
We set confidence level $\alpha = 0.99$ to emulate tail risk estimation.
The four nested statistics we consider are: (i) the $\alpha$-quantile of $\mu(\theta)$, (ii) indicator risk measure $\E[I(\mu(\theta)>49)]$, (iii) hockey stick risk measure $\E[(\mu(\theta)-49)_+]$, and (iv) squared tail risk measure $\E[(\mu(\theta)-49)_+^2]$.
The latter three measures emulate probability of large losses, expected excess loss, and expected squared excess loss (similar to semivariance, but measures variability in the tail), respectively. All are common in practical ERM problems.
These statistics cannot be calculated analytically, thus were estimated via MC simulation by sampling $10^8$ $\theta$s and computing $\mu(\theta)$ at each $\theta$ from its analytical expression, which are then used to compute the four nested statistics of interest. Based on these $10^8$ conditional means, the $99\%$-quantile is $\mu_\alpha\approx48.916$, which is why $49$ is chosen as the threshold in the other nested statistics.

\begin{table}
	\centering
	\caption{Single-asset ERM example: MSEs of the nested simulation statistics computed from $1{,}000$ macro runs of the four experiment designs. All methods use the same simulation budget except for SNS$^+$.}\label{tab:ERM}
	
	\resizebox{0.9\textwidth}{!}{%
		\begin{tabu}{c|cccc|cccc}\toprule
			\multicolumn{1}{c|}{\multirow{2}[0]{*}{$M$}} & \multicolumn{4}{c|}{Quantile}                             & \multicolumn{4}{c}{Indicator Function $\zeta(\cdot)$}                                                                          \\\cline{2-9}
			& Opt. Design                                               & SNS                                                   & SNS$^+$  & Regression & Opt. Design & SNS      & SNS$^+$  & Regression \\\cline{1-9}
			128                                          & 23                                                        & 259                                                   & 6.52     & 102        & 8.38E-04    & 8.77E-03 & 4.43E-05 & 5.66E-04   \\
			512                                          & 5.45                                                      & 146                                                   & 1.31     & 37.6       & 1.21E-04    & 3.58E-03 & 4.84E-06 & 1.27E-04   \\
			1024                                         & 2.52                                                      & 109                                                   & 0.503    & 20         & 2.54E-05    & 2.06E-03 & 1.71E-06 & 6.72E-05   \\
			2048                                         & 1.16                                                      & 81.2                                                  & 0.195    & 10.2       & 9.01E-06    & 1.29E-03 & 6.38E-07 & 3.64E-05   \\
			4096                                         & 0.517                                                     & 52.9                                                  & 0.059    & 4.93       & 2.40E-06    & 6.66E-04 & 2.46E-07 & 1.88E-05   \\
			\midrule
			\multicolumn{1}{c|}{\multirow{2}[0]{*}{$M$}} & \multicolumn{4}{c|}{Hockey Stick Function $\zeta(\cdot)$} & \multicolumn{4}{c}{Square Function $\zeta(\cdot)$}                                                                             \\\cline{2-9} & Opt. Design & SNS & SNS$^+$ & Regression & Opt. Design & SNS & SNS$^+$ & Regression\\\cline{1-9}
			128                                          & 5.34E-02                                                  & 8.89E-01                                              & 1.38E-03 & 9.01E-02   & 22          & 541      & 0.229    & 126        \\
			512                                          & 2.29E-03                                                  & 2.28E-01                                              & 2.80E-04 & 1.53E-02   & 0.185       & 81       & 0.108    & 10.6       \\
			1024                                         & 4.81E-04                                                  & 1.09E-01                                              & 1.06E-04 & 6.64E-03   & 0.0875      & 30.3     & 0.0585   & 3.29       \\
			2048                                         & 2.26E-04                                                  & 5.97E-02                                              & 3.87E-05 & 3.15E-03   & 0.0456      & 14.1     & 0.0281   & 1.25       \\
			4096                                         & 8.51E-05                                                  & 2.73E-02                                              & 1.41E-05 & 1.58E-03   & 0.0216      & 5.5      & 0.0125   & 0.543      \\
			\bottomrule
		\end{tabu}
	}
\end{table}

Table~\ref{tab:ERM} presents the MSE of the nested statistics computed from the four experiment designs by running  {$10{,}000$} macro runs.  {Excluding when $M=128$,} the optimal design's MSEs for all nested statistics are lower than those of SNS and the regression approach. Moreover, the optimal design's MSEs are within the same orders of magnitude as those of SNS+ even though the latter requires a much larger simulation budget;  {the optimal design's simulation budget is about $247$ and $1{,}760$ times smaller than that of SNS$^+$ when $M=512$ and $M=4{,}096$, respectively.}
Further, for $M\geq 512,$ the MSE of each risk measure computed from the optimal design also shrinks by approximately a half when $M$ increases by a factor of two, which is consistent with our asymptotic results in Section~\ref{sec:asymptotics}.

{However, when $M=128$, we observe small-sample performance degradation of the optimal design; the MSE of the probability (indicator $\zeta$) is larger for the optimal design than the regression method. Also, when $M$ shrinks to $128$ from $512$, the MSE of the estimated nested statistics increase by more than $4$ times; notably, $119$ times in the square function case. Since the performance guarantee for the optimal design (and all other benchmarks) is shown asymptotically, such degradation for small $M$ is not surprising. Nevertheless, because the optimal design is formulated to minimize the simulation budget, when $M$ is small and thus the support of the outer scenario is not covered well by the size-$M$ sample, then the optimal design may not be as effective as when $M$ is large. }

\subsection{Multi-Assets Enterprise Risk Management (ERM) Problem}\label{subsec:MultiAssetERM}
In this section, we consider a more realistic ERM problem for an option portfolio consists of 10 underlying assets and 300 options.
The underlying assets are 10 non-dividend-paying stocks that follow the multidimensional Black-Scholes model and their respective expected annualized returns are $5\%, 6\%, \ldots, 14\%$, and the respective expected annualized volatilities are $30\%, 32\%,\ldots,48\%$.
The correlations among stock prices are randomly generated using the R library, \texttt{randcorr}.
All stocks have an initial price $S_0=\$100$ at time~$0$ and the risk-free rate is 2\%.
The following 30 options are written on each of the 10 stocks:
\begin{itemize}[leftmargin=*, itemsep=0pt, topsep=0pt]
	\item 10 European options: 5 puts with strike prices $\{85, 90, 95, 100, 105\}$ and 5 calls with strike prices $\{110, 115, 120, 125, 130\}$.
	
	\item 10 fixed-strike geometric Asian options: 5 puts with strike prices $\{85, 90, 95, 100, 105\}$ and 5 calls with strike prices $\{110, 115, 120, 125, 130\}$.
	
	\item 10 down-and-out put barrier options with strike prices $\{85,90,\ldots,130\}$ and lower barriers $\{75,80,\ldots,120\}$. The same type of barrier options are studied in~\cite{broadie2015} to showcase the performances of the regression-based experiment design.
\end{itemize}
All options have maturity $T=2$; all time units are in years.
Stock prices are simulated at discrete time steps, i.e., $h=1/52$ (approx. weekly), and are used to calculate the average prices in Asian options' payoffs.
Also, conditioning on the stock prices at two consecutive times, the minimum price between the two times is simulated via the Brownian bridge~\citep[Chapter 6.4]{glassermanbook}.
Nested simulation is applied to estimate the portfolio's values at time~$\tau=4/52$ (approx. one month) under different scenarios.
The Asian and barrier options are partial-time options, i.e., the Asian options' average prices are calculated between $\tau$ and $T$ and the barrier options can be knocked out only between $\tau$ and $T$.
These settings are similar to those in the numerical examples in~\cite{broadie2015}.

The portfolio value can be decomposed into 10 groups of 3 options written on the same stock.
Then, one can estimate the value of each group using only the simulated prices of the corresponding underlying stock.
This decomposition scheme is inspired by Section~5 in~\cite{hong2017} and is implemented for all experiment designs we compare.

Compared to the single-asset example in Section~\ref{subsec:ERM}, this multi-asset example has more stocks, options, and time steps in both outer- and inner-level simulations.
The proposed optimal nested simulation design can be adapted to these additional complexities:
\begin{enumerate} [leftmargin=*, itemsep=0pt, topsep=0pt]
	\item Exploiting the decomposition of the portfolio, the optimal design can be applied for each underlying stock separately, allowing different simulation budget for each.
	\item As shown in~\ref{app:simplifiedLR}, the LRs are calculated using the one-step transition densities $S_{\tau+h}|S_\tau$ for different scenarios $S_\tau$.
	The ESS is also calculated from them.
	\item The asymptotic properties shown in Section~\ref{sec:asymptotics} are not affected by the portfolio decomposition.
	That is, if they apply to each group of options written on the same stock, they apply to the portfolio value.
\end{enumerate}

We run $K=1{,}000$ macro runs to assess the performances of different experiment designs.
For SNS$^+$, we simulate $M=1{,}000$ outer scenarios and $N=1{,}000$ inner replications at each scenario.
Unlike in Section~\ref{subsec:ERM}, the outer scenarios in different macro runs vary, but the same scenarios are used in the optimal design and SNS$^+$ design in each macro run.
For SNS, we set $M$ and $N$ to be $\lceil\Gamma^{2/3}\rceil$ and $\lceil\Gamma^{1/3}\rceil$, respectively, where $\Gamma$ is the average optimal simulation budget for the 10 stocks.
For the regression-based design, for each stock, we set the number of outer scenarios to be the average of the optimal design's simulation budgets for the 10 stocks, then simulate one inner replication per scenario.
We fit the portfolio value to the weighted Laguerre polynomials up to degree 3 of the 10 underlying stocks' prices, which has 31 explanatory variables including the intercept.
The fitted regression model is used to estimate $\mu(\theta)$ for the same $1{,}000$ outer scenarios as those in the optimal design and SNS$^+$ design.
\begin{table}[tbp]
	\centering
	\caption{Multi-asset ERM example: Average number of outer scenarios, simulation budgets, and CPU time (in seconds) per macro run. The third row shows the number of outer scenarios at which inner replications are made. Speedup factors relative to SNS$^+$ are shown in parentheses.}\label{tab:ERM2Runtime}
	\begin{tabular}{rcccc}\toprule
		& SNS$^+$ & Optimal Design            & SNS                       & Regression                \\\cmidrule{1-5}
		Simulation Budget  & $10^6$  & 8200.8 (122x$\downarrow$) & 8525.7 (117x$\downarrow$) & 8200.8 (122x$\downarrow$) \\
		CPU time (seconds) & 997     & 25.5 (39.16x$\downarrow$) & 28.2 (35.42x$\downarrow$) & 296 (3.37x$\downarrow$)   \\
		\# Outer Scenarios & $10^3$  & 21.9 (46x$\downarrow$)    & 407.3 (2.5x$\downarrow$)  & 8200.8 (8.2x$\uparrow$)   \\
		\bottomrule
	\end{tabular}
\end{table}

Experiments were run on Intel Xeon Gold 6334 8-core 3.6 GHz (Ice Lake) CPUs. Table~\ref{tab:ERM2Runtime} shows the average numbers of outer scenarios, simulation budgets, CPU time, and the corresponding speedup relative to SNS$^+$.
Notice that the optimal design's simulation budget is over 120 times smaller than SNS$^+$'s.
The SNS has slightly larger simulation budget due to rounding.
The CPU time for the optimal design is nearly $1/40$ of the SNS's, which includes the time for solving the LP, LR calculation, and simulation.
For this example, the optimal design's CPU time is evenly split between solving the LP (13.5 seconds), and simulation and LR calculation (12 seconds).
The SNS's CPU time is exclusively for simulation and more than double the simulation and LR calculation time for the optimal design.

To compare the estimation error of all experiment designs, we compute the average MSE (AMSE) of the conditional means in addition to the MSEs of the nested statistics.
\begin{equation*}
	\mbox{AMSE}(\mu) \equiv \frac{1}{K} \sum\nolimits_{k=1}^{K} \frac{1}{M}\sum\nolimits_{m=1}^{M} (\hat{\mu}_{m,k} - \mu_{m,k})^2,
\end{equation*}
where $\hat{\mu}_{m,k}$ is the estimated conditional mean in the $m$th scenario of the $k$th macro run and $\mu_{m,k}$ is the corresponding true conditional mean.

\begin{table}[tbp]
	\centering
	\caption{Multi-asset ERM example: MSEs of nested simulation statistics computed from 1,000 macro runs of the four experiment designs. Numbers in parentheses are the MSE inflation factors compared to the SNS$^+$ design.}\label{tab:ERM2MSEs}
	\begin{tabular}{lccccc}\toprule
		& SNS$^+$  & Opt. Design      & SNS             & Regression       \\\cmidrule{1-5}
		AMSE$(\mu)$                     & 763      & 740 (0.97x)      & 36628 (48x)     & 3052 (4.1x)      \\
		MSE.Quantile                    & 622      & 764 (1.2x)       & 100637 (162x)   & 11298	(18x)      \\
		MSE.Indicator $\zeta(\cdot)$    & 7.23E-06 & 8.51E-06 (1.18x) & 9.22E-03(1275x) & 1.44E-04	(20x)   \\
		MSE.Hockey Stick $\zeta(\cdot)$ & 4.81E-02 & 5.36E-02	(1.12x) & 144 (3004x)     & 2.4 (50x)        \\
		MSE.Square $\zeta(\cdot)$       & 1933     & 1993	(1.03x)     & 7845487	(4060x) & 231{,}413 (120x) \\
		\bottomrule
	\end{tabular}
\end{table}

Table~\ref{tab:ERM2MSEs} summarizes the AMSE and MSEs for different experiment designs.
Notice that the optimal design's AMSE and MSEs closely match those of SNS$^+$. This demonstrates that the ESS constraints in~\eqref{eq:var.constraint} work as intended.
Considering the computational cost comparison in Table~\ref{tab:ERM2Runtime}, the optimal design is extremely efficient.
While SNS has a similar CPU time as the optimal design, its AMSE and MSEs are orders of magnitudes larger than those of the optimal design.
The regression has higher AMSE and MSEs than both SNS$^+$ and the optimal design; although the AMSE is only mildly (4.1x) higher, the MSEs of the nested statistics are orders of magnitude higher.
This implies that the regression-based method has reasonable precision within the central region of the outer scenarios but performs poorly at outer scenarios with extreme conditional means.
This observation is consistent with that made in Section~\ref{subsec:ERM}.

In summary, the multi-asset ERM example demonstrates the applicability of the optimal design in practice and that it outperforms not only SNS but also the regression approach.

\subsection{Input Uncertainty Quantification for Multi-Product Newsvendor Problem}\label{subsec:IU.inventory}
In this section, we consider a single-stage newsvendor problem with ten products. We assume that the $\ell$th product's demand, $X_\ell$, follows a Poisson distribution. All product demands are independent. For the $\ell$th product, let $c_\ell$ and $p_\ell$ be the unit cost and sale price, respectively. The stocking policy, $\{k_1,\ldots,k_{10}\}$, is a vector representing the stocking levels of the ten products.
We chose $p_\ell = 7 + 3\ell$, $c_\ell = 2$, and $k_\ell = 9+\ell$ for all $\ell$ for the experiment.
Given these inputs, the simulator computes the total profit, $g(\bm{X})=\sum_{\ell=1}^{10}\left\{p_\ell \min(X_\ell,k_\ell)-c_\ell k_\ell\right\}$.

Unknown to us, the mean demand of the $\ell$th product is $\vartheta_\ell^c = 5+\ell$. We have {$50+5\ell$} i.i.d.\ realizations from Poisson($\vartheta_\ell^c)$ to estimate $\vartheta_\ell^c$ for the simulation study.
Taking the Bayesian view, we model the unknown parameter $\vartheta_\ell$ as a random variable with a prior distribution and update it with the observations from Poisson($\vartheta_\ell^c)$. To exploit conjugacy, the Gamma prior with rate $0.001$ and shape $0.001$ is adopted for each $\vartheta_\ell$. Then, the posterior distribution of $\vartheta_\ell$ is still Gamma with rate {$0.001 + 50+5\ell$} and shape $0.001$ plus the sum of observed demands of the $\ell$th product.
Let $\btheta = \{\vartheta_1,\ldots,\vartheta_{10}\}$ be a parameter vector sampled from the joint posterior distribution. The expected profit given $\btheta$, $\mu(\btheta)=\E_{\btheta}[g(\bm{X})]$, is a random variable whose distribution is induced by the posterior of $\btheta$.

To measure uncertainty in the model, we construct a $1-\alpha$ credible interval (CrI) for $\mu(\btheta)$ via nested simulation. 
The analytical expression for $\mu(\btheta)$ can be derived easily using the Poisson distribution function. Thus, a CrI can be constructed by sampling $\btheta_1,\ldots,\btheta_M$ from the posterior of $\btheta$ and computing the empirical $\alpha/2$ and $1-\alpha/2$ quantiles from $\mu(\btheta_1),\mu(\btheta_2),\ldots,\mu(\btheta_M)$; this interval is referred to as the \emph{oracle} CrI in the following and used as a benchmark to compare the performances of the algorithms.

Table~\ref{tab:CrI} compares the CrIs constructed by the four experiment designs as well as the oracle CrI from $1{,}000$ macro-runs.
Three different target coverage probabilities, $1-\alpha = 0.9, 0.95$, and $0.99$, are tested.
For each macro-run, a new set of real-world demands are sampled from the true demand distributions and the joint posterior of $\btheta$ is updated conditional on the data.
The oracle CrI is constructed from $M=1{,}000$ $\btheta$s sampled from its posterior. The optimal design and the regression use the same $1{,}000$ $\btheta$s as outer scenarios to construct CrIs. The average of the simulation budget used by the optimal design across $1{,}000$ macro-runs is $1{,}471$ (with standard error $1.1$), which is significantly less than $MN=10^6$ for SNS$^+$. For the regression, polynomial basis functions up to order $2$ were {adopted} without cross-terms reflecting that all product demands are independent.

The empirical coverage probabilities and the widths of CrIs in Table~\ref{tab:CrI} are averaged over $1{,}000$ macro runs.
{The former is computed via Monte Carlo simulation: a million $\btheta$s were drawn from the posterior distribution of $\btheta$ and $\mu(\btheta)$s were computed using the analytical expression from which the coverage probability is estimated.
	For all algorithms, the same Monte Carlo sample was used.}

\begin{table}[h!]
	\caption{The estimated coverage probabilities and the widths of CrIs constructed by the oracle, optimal design, SNS, SNS$^+$, and regression from $1{,}000$ macro-runs. All methods use the same simulation budget except for SNS$^+$. The standard errors are in parentheses.}\label{tab:CrI}
	\centering
	\resizebox{0.9\textwidth}{!}{%
		\begin{tabular}{@{}lrrrcrrr@{}}\toprule
			& \multicolumn{3}{c}{Empirical coverage} & \phantom{abc} & \multicolumn{3}{c}{Width}                                                    \\\cmidrule{2-4} \cmidrule{6-8}
			Target $1-\alpha$ & $0.9$                                  & $0.95$        & $0.99$                    &  & $0.9$         & $0.95$        & $0.99$        \\\midrule
			Oracle            & 0.898(3E-04)                           & 0.948(2E-04)  & 0.988(1E-04)              &  & 81.20(2E-03)  & 96.55(3E-03)  & 125.71(5E-03) \\
			Opt. Design       & 0.886(7E-04)                           & 0.940(5E-04)  & 0.985(3E-04)              &  & 81.01(4E-03)  & 96.37(5E-03)  & 125.40(7E-03) \\
			SNS               & 1.000(2E-06)                           & 1.000(3E-07)  & 1.000(1E-08)              &  & 235.38(2E-02) & 277.18(2E-02) & 348.41(3E-02) \\
			SNS$^+$           & 0.912(3E-04)                           & 0.958(2E-04)  & 0.991(8E-05)              &  & 84.80(2E-03)  & 100.87(3E-03) & 131.37(5E-03) \\
			Regression        & 0.972(8E-04)                           & 0.991(4E-04)  & 0.999(1E-04)              &  & 120.28(2E-02) & 146.35(2E-02) & 200.55(3E-02) \\
			\bottomrule
		\end{tabular}
	}
\end{table}

Table~\ref{tab:CrI} shows that the CrIs constructed by the oracle and the optimal design are very close in both coverage and width across for all choices for $\alpha$, although the latter shows a slight undercoverage compared to the former. The undercoverage is caused by that the optimal design interpolates the simulation outputs run at the sampling outer scenarios, however, it is alleviated as $M$ grows.
SNS clearly exhibits overcoverage and wide CrIs. This is because the small inner sample size, $N$, makes $\Var[\estmuMC_i]$ large, which inflates the CrI. Notice that SNS$^+$ still overcovers and slightly wider CrIs than the oracle indicating that the inflation of CrI from MC error of $\estmuMC_i$ persists even with $N=1{,}000$. The optimal design and SNS$^+$ show comparable performances across all $\alpha$s, however, the former costs only $1/670$ of the simulation replications of the latter on average. The regression method shows clear overcoverage across all $\alpha$s compared to the optimal design. In particular, the difference between the CrI widths from the two methods is wider for smaller $\alpha$, which coincides with the observations from the ERM examples; the regression tends to work poorly at predicting $\mu(\btheta)$ for extreme $\btheta$s. On the other hand, the optimal nested simulation design does not suffer from this by allocating more replications to the extreme outer scenarios so that they achieve the same target ESS.
\begin{figure}[h!]
	\centering
	\begin{subfigure}[b]{0.45\textwidth}
		\includegraphics[width=\textwidth]{./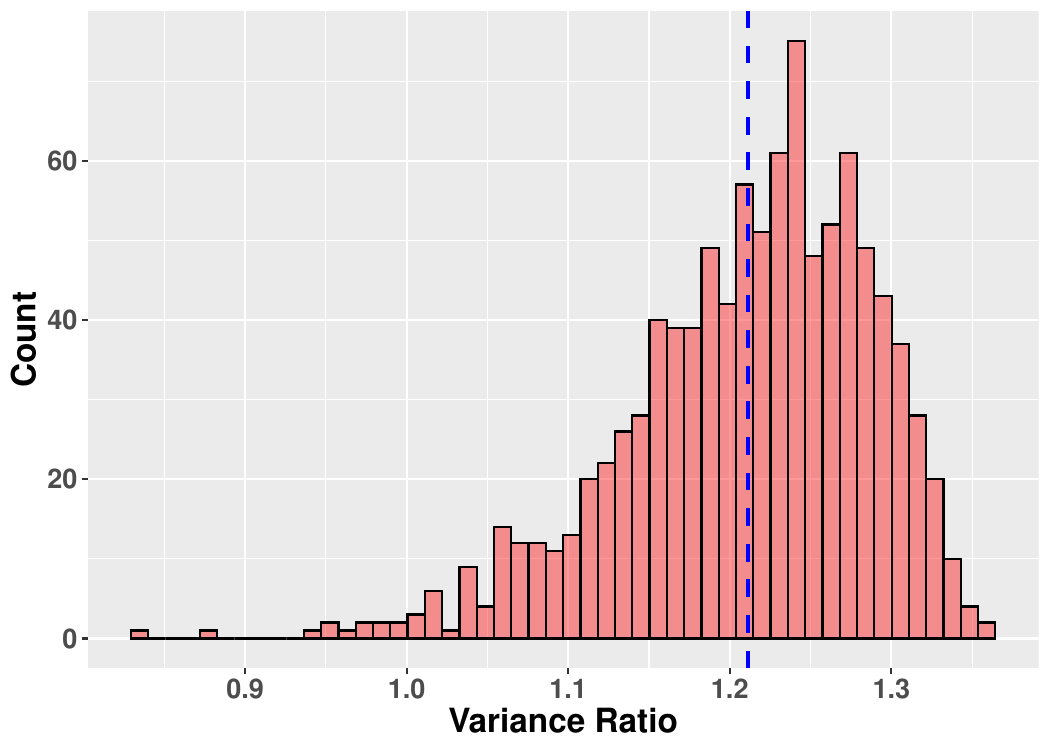}
		\caption{\small}\label{subfig:VarRatio}
	\end{subfigure}
	\hfill
	\begin{subfigure}[b]{0.45\textwidth}
		\includegraphics[width=\textwidth]{./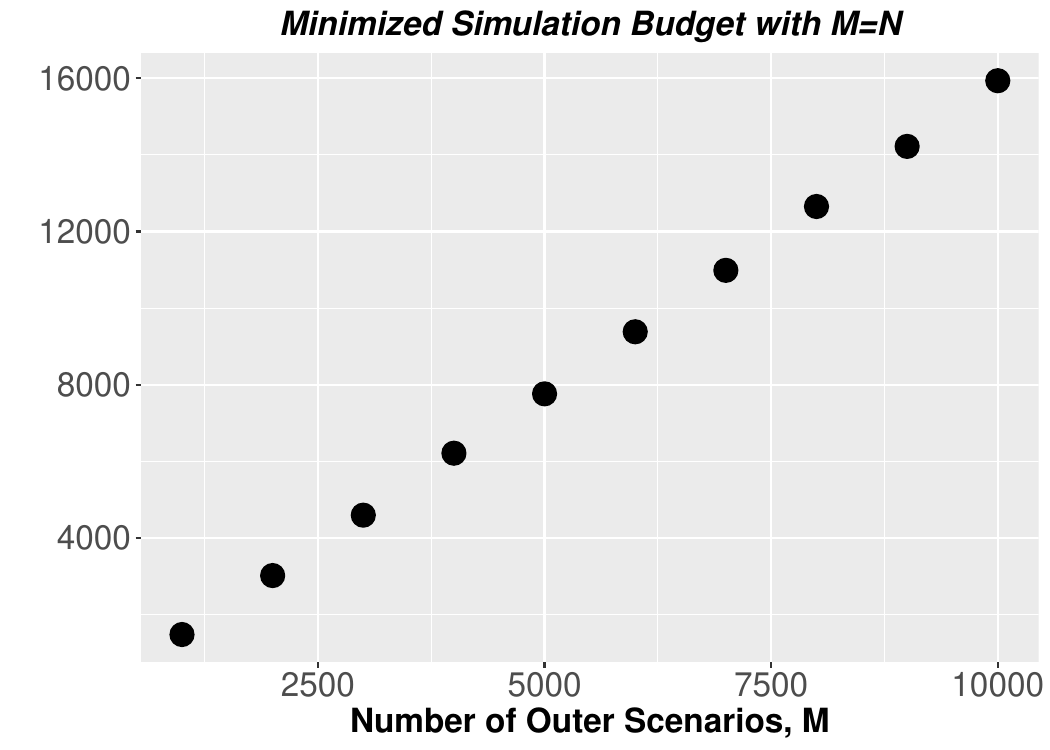}
		\caption{\small}\label{subfig:SimBudget}
	\end{subfigure}
	\caption{(a) is a histogram of ratios between the variances of the $\mu(\theta)$ estimates by the optimal design and by SNS.
		(b) plots the number of outer scenarios $M$ against minimized simulation budget, assuming a constant target inner simulation $N=1000$.}
\end{figure}

We also examine how well our optimal design achieves the target variance of the conditional means, as required by the constraint~\eqref{eq:var.constraint}, which illustrates effectiveness of the ESS expression in approximating the variance.
Figure~\ref{subfig:VarRatio} shows the histogram of the ratios between the variance of the MC estimator, $\Var[\bar{\mu}(\btheta_i)]$, and the variance of the pooled self-normalized LR estimator, $\Var[\widetilde{\mu}(\btheta_i)]$, of the conditional means for a fixed set of 1{,}000 scenarios.
Each MC estimator has $N=1{,}000$ inner replications and the optimal design has the same target $N=1{,}000$. The variance estimates are obtained from $1{,}000$ macro runs of both designs given the same outer scenarios.
Figure~\ref{subfig:VarRatio} shows that the variance ratios are close to 1, with average 1.21 and maximum 1.38.
This shows the ESS expression slightly underestimates the variance of the self-normalized LR estimator, which is a price we pay to approximate the variances without having to simulate the model at all.

Lastly, Figure~\ref{subfig:SimBudget} shows the growth of optimal simulation budget as $M$ increases. In all designs, $N=M$ is chosen for $M$ ranging from $1{,}000$ to $10{,}000$ and the plotted budgets are averaged over $100$ macro runs.
Figure~\ref{subfig:SimBudget} depicts that the minimized simulation budget increases approximately linearly in $M$.
In comparison, the standard nested simulation's budget would be $MN$, so it would increase with $M^2$ when $N=M$.
This comparison once again demonstrates efficiency of our experiment design.

\section{Conclusions}\label{sec:conclusion}
We study an efficient nested simulation experiment design when the set of outer scenarios is given to the experimenter. Using the LR method, our design makes both sampling and pooling decision for the nested simulation. We minimize the total simulation budget while ensuring the (approximate) variance of the conditional mean estimator at each outer scenario is of the same level as the MC estimator's from the SNS design with user-provided $N$.
Our asymptotic analysis shows that the estimation errors of the studied nested statistics achieve the fastest convergence rate when $M=\Theta(N)$.
{When the inner simulation's input distribution belongs in an exponential family, we show that the simulation budget of the optimal design grows in $O(M)$ by setting $N=M$.
	Compared to a state-of-the-art regression approach with the same simulation budget, the empirical analyses show that our optimal design delivers significantly smaller estimation error.}

In this paper, we focused on the setting where the vector of all inputs generated within each inner replication has a known and fixed dimension across outer scenarios.
In future research, we will extend the proposed scheme to a more general class of stochastic simulation models.
One example is when the dimension is unknown and random, which is a typical characteristic of a queuing-type simulation model.
For these cases, computing the second moment of the LR without running the simulation model is a challenge, since the joint density of the inputs is unknown a priori. A two-stage, or sequential experiment design may be considered.

\section*{Acknowledgments}
This work was funded by the United States National Science Foundation Grant~CMMI-2045400 and by the Discovery Grants from the Natural Sciences and Engineering Research Council of Canada (NSERC) Grant RGPIN--2018--03755. We thank the anonymous reviewers and editors for thorough and constructive reviews.

\bibliographystyle{chicago}
\bibliography{IUQ_GS}

\newpage
\appendix

\counterwithin{assumption}{section}

\section{Assumptions for Part (ii) of Lemma~\ref{lem:consistency}}\label{app:technicalassumptions}
Let $\overline{W_{ij}g(\bX)}\equiv\sum_{k=1}^{N_j} g(\bX_k){W}_{ij,k}/N_j$ and $\overline{W_{ij}}\equiv\sum_{k=1}^{N_j}{W}_{ij,k}/N_j$.

\begin{assumption}\label{assm:moment.conditions.for.LR}
	Given $\btheta_i,\btheta_j\in\bTheta$,
	\begin{enumerate}[label = (\roman*), itemsep=0pt, topsep=0pt]
		\item ${\E_{\btheta_j}[W_{ij}^4]<\infty}$ and $\E_{\btheta_j}[(W_{ij}g(\bX)-\mu_i)^4]<\infty$
		\item $\E_{\btheta_j}\left[\sup_{N_j}\sup_{B\in(\overline{W_{ij}}, 1)} \frac{N_j^2(\overline{W_{ij}}-1)^2(\overline{W_{ij}g(\bX)}-\mu_i)^2}{B^{4}}\right]<\infty$, and\\
		$\E_{\btheta_j}\left[\sup_{N_j}\sup_{A\in(\overline{W_{ij}g(\bX)},\mu_i), B\in(\overline{W_{ij}}, 1)} \frac{N_j^2A^2(\overline{W_{ij}}-1)^4}{B^{6}} \right]<\infty$
		\item  {$\E_{\btheta_j}\left[\sup_{N_j}\sup_{B,\tilde{B}\in(\overline{W_{ij}}, 1)} \frac{N_j^{3/2}(B-1)(\overline{W_{ij}}-1)(\overline{W_{ij}g(\bX)}-\mu_i)}{\tilde{B}^{3}}\right]<\infty$,\\
			$\E_{\btheta_j}\left[\sup_{N_j}\sup_{A\in(\overline{W_{ij}g(\bX)},\mu_i), B,\tilde{B}\in(\overline{W_{ij}}, 1)} \frac{N_j^{3/2}(A-\mu_i) (\overline{W_{ij}}-1)^2}{\tilde{B}^3} \right]<\infty$, and\\
			$\E_{\btheta_j}\left[\sup_{N_j}\sup_{A,\tilde{A}\in(\overline{W_{ij}g(\bX)},\mu_i), B,\tilde{B}\in(\overline{W_{ij}}, 1)} \frac{N_j^{3/2}\tilde{A}(B-1) (\overline{W_{ij}}-1)^2}{\tilde{B}^{4}} \right]<\infty$.}
	\end{enumerate}
\end{assumption}

Part (ii) of the moment conditions in Assumption~\ref{assm:moment.conditions.for.LR} may appear strong, but in practice it is likely to hold if Part (i) holds because $\sup_{A\in(\overline{W_{ij}g(\bX)},\mu_i)}A\stackrel{a.s.}{\to}\mu_i$ by Part (i) of Lemma~\ref{lem:consistency} and ${\sup_{B\in(\overline{W_{ij}}, 1)}B\stackrel{a.s.}{\to} 1}$ by the strong law of large numbers.  {Moreover, $\E_{\btheta_j}[\overline{W_{ij}}] = 1$ and $\E_{\btheta_j}[\overline{W_{ij}g(\bX)}] = \mu_i$, and the scaling factor, $N_j^2$, is justified by that $\overline{W_{ij}}$ and $\overline{W_{ij}g(\bX)}$ are sample average of $N_j$ i.i.d.~observations given $\btheta_j$ and $\btheta_i$. The $N_j^{3/2}$ scaling factor in Part (iii) is justified similarly.}

\section{Proof of Part (ii) of Lemma~\ref{lem:consistency}}\label{app:proof}
\begin{proof}
	By definition, $\estmuSN_{ij}$ is the ratio between $\overline{W_{ij}g(\bX)}$ and $\overline{W_{ij}}$. Applying the two-dimensional Taylor series expansion to $\estmuSN_{ij}$ at $\left(\E_{\btheta_j}[W_{ij}g(\bX)], \E_{\btheta_j}[W_{ij}]\right)^\top$,
	\begin{align*}
		\estmuSN_{ij} = &
		\frac{\E_{\btheta_j}[W_{ij} g(\bX)]}{\E_{\btheta_j}[W_{ij}]} +
		\begin{pmatrix}
			\tfrac{1}{\E_{\btheta_j}[W_{ij}]} \\
			-\tfrac{\E_{\btheta_j}[W_{ij} g(\bX)]}{(\E_{\btheta_j}[W_{ij}])^2}
		\end{pmatrix}^\top
		\begin{pmatrix}
			\overline{W_{ij}g(\bX)} -\E_{\btheta_j}[W_{ij} g(\bX)] \\
			\overline{W_{ij}} - \E_{\btheta_j}[W_{ij}]
		\end{pmatrix} \\
		& + \frac{1}{2}
		\begin{pmatrix}
			\overline{W_{ij}g(\bX)} -\E_{\btheta_j}[W_{ij} g(\bX)] \\
			\overline{W_{ij}} - \E_{\btheta_j}[W_{ij}]
		\end{pmatrix}^\top
		\begin{pmatrix}
			0      & -1/B^2 \\
			-1/B^2 & 2A/B^3
		\end{pmatrix}
		\begin{pmatrix}
			\overline{W_{ij}g(\bX)} -\E_{\btheta_j}[W_{ij} g(\bX)] \\
			\overline{W_{ij}} - \E_{\btheta_j}[W_{ij}]
		\end{pmatrix},
	\end{align*}
	where $A$ and $B$ are in between $\overline{W_{ij}g(\bX)} $ and $\E_{\btheta_j}[W_{ij} g(\bX)]$, and $\overline{W_{ij}}$ and $ \E_{\btheta_j}[W_{ij}]$, respectively. Because $\E_{\btheta_j}[W_{ij} g(\bX)] = \mu_i$ and $\E_{\btheta_j}[W_{ij}] = 1$, the expansion can be rewritten as
	\begin{equation}\label{eq:estmuSN_taylor}
		\estmuSN_{ij} = \overline{W_{ij}g(\bX)} - \mu_i(\overline{W_{ij}}-1) - \frac{1}{B^2}(\overline{W_{ij}}-1)(\overline{W_{ij}g(\bX)}-\mu_i) + \frac{A}{B^3}(\overline{W_{ij}}-1)^2.
	\end{equation}
	We first show that the variance of $\estmuSN_{ij}$ has the stated expression. From Assumption~\ref{assm:moment.conditions.for.LR}, the second moment of~\eqref{eq:estmuSN_taylor} is bounded. Then, by the dominated convergence theorem,~\eqref{eq:estmuSN_taylor} converges in mean square to
	\begin{equation}\label{eq:lim.in.meansq}
		\overline{W_{ij}g(\bX)} - \mu_i(\overline{W_{ij}}-1) - (\overline{W_{ij}}-1)(\overline{W_{ij}g(\bX)}-\mu_i) + \mu_i(\overline{W_{ij}}-1)^2.
	\end{equation}
	After some tedious algebra, the variance of~\eqref{eq:lim.in.meansq} can be shown to have the following form:
	\begin{align}
		{\E_{\btheta_j}[W_{ij}^2(g(\bX)-\mu_i)^2]}{N_j}^{-1} + \mathcal{R}_1 N_j^{-2} + \mathcal{R}_2 N_j^{-3},
	\end{align}
	where $\mathcal{R}_1$ and $\mathcal{R}_2$ are functions of moments of $W_{ij}$ and $g(\bX)$ bounded under Assumption~\ref{assm:moment.conditions.for.LR}. Thus, $\Var_{\btheta_j}[\estmuSN_{ij}] = \E_{\btheta_j}[W_{ij}^2(g(\bX)-\mu_i)^2]N_j^{-1} + o(N_j^{-1})$.
	
	{Next, we show the bias result. Subtracting~\eqref{eq:lim.in.meansq} from $\tilde{\mu}_{ij}$, we have
		\begin{align}
			& \left(1- \tfrac{1}{B^2}\right)(\overline{W_{ij}}-1)(\overline{W_{ij}g(\bX)}-\mu_i) + \left(\tfrac{A}{B^3}-\mu_i\right)(\overline{W_{ij}}-1)^2 \nonumber \\
			= & \tfrac{2}{\tilde{B}^3}(B-1)(\overline{W_{ij}}-1)(\overline{W_{ij}g(\bX)}-\mu_i)
			+\tfrac{1}{\tilde{B}^3} (A-\mu_i) (\overline{W_{ij}}-1)^2
			- \tfrac{3\tilde{A}}{\tilde{B}^4} (B-1)(\overline{W_{ij}}-1)^2,\label{eq:Taylor.for.bias}
		\end{align}
		where the equality follows from the two-dimensional Taylor series expansion with respect to $(A,B)$, and $\tilde{A} \in (A,\mu_i)$ and $\tilde{B} \in (B,1)$.
		Therefore, from Assumption~\ref{assm:moment.conditions.for.LR}, we have $\E[\eqref{eq:Taylor.for.bias}] =\cO(N_j^{-3/2})=o(N_j^{-1})$.
		Because} $\E_{\btheta_j}[(\overline{W_{ij}}-1)(\overline{W_{ij}g(\bX)}-\mu_i)]$ and $\E_{\btheta_j}[(\overline{W_{ij}}-1)^2]$ are the covariance between two sample means and variance of a sample mean, respectively, the expectation of~\eqref{eq:lim.in.meansq} becomes
	\begin{align*}
		& \mu_i - \E_{\btheta_j}[({W_{ij}}-1)({W_{ij}g(\bX)}-\mu_i)]/N_j + \mu_i\E_{\btheta_j}[({W_{ij}}-1)^2]/N_j                                            \\
		= & \mu_i - (\E_{\btheta_j}[{W_{ij}}^2g(\bX)] - \mu_i)/N_j + \mu_i(\E_{\btheta_j}[W_{ij}^2]-1)/N_j = \mu_i - \E_{\btheta_j}[W_{ij}^2(g(\bX)-\mu_i)]/N_j
	\end{align*}
	{Combining above with $\E[\eqref{eq:Taylor.for.bias}]$}, we get $\E_{\btheta_j}[\estmuSN_{ij}] - \mu_i =  - \E_{\btheta_j}[W_{ij}^2(g(\bX)-\mu_i)] N_j^{-1} + o(N_j^{-1})$.
\end{proof}

\section{Proof of Lemma~\ref{lem:uniform.ecdf}}\label{app:lemmaproof}
\begin{proof}
	From Theorem~\ref{thm:indicator.mse} and Chebyshev's inequality, $|\Phi_{M,N}(\xi)-\Phi(\xi)| = \cO_p(M^{-1/2})+\cO_p(N^{-1/2})  = \cO_p(\min(M,N)^{-1/2})$ for any $\xi\in\real$. To show the convergence rate holds uniformly, we proceed with an argument similar to the Glivenko-Cantelli theorem.
	Let $J$ be an arbitrary positive integer and $-\infty = \xi_0 <\xi_1<\cdots<\xi_J=\infty$ such that $\Phi(\xi_j) - \Phi(\xi_{j-1}) = 1/J$ for all $j=1,\ldots,J$.
	Then, there exists $j\in\{2,\ldots,J\}$ such that $\xi \in [\xi_{j-1},\xi_j]$.
	Note that
	$\Phi_{M,N}(\xi)-\Phi(\xi) \leq \Phi_{M,N}(\xi_j) - \Phi(\xi_{j-1}) = \Phi_{M,N}(\xi_j)-\Phi(\xi_j) + 1/J$, and
	$\Phi_{M,N}(\xi)-\Phi(\xi) \geq \Phi_{M,N}(\xi_{j-1}) - \Phi(\xi_{j}) = \Phi_{M,N}(\xi_{j-1})-\Phi(\xi_{j-1}) - 1/J$.
	Thus, $|\Phi_{M,N}(\xi)-\Phi(\xi)|\leq \max\{|\Phi_{M,N}(\xi_j)-\Phi(\xi_j)|,|\Phi_{M,N}(\xi_{j-1})-\Phi(\xi_{j-1})|\}+{1}/{J}$ and
	\begin{equation}\label{eq:sup.bound.ecdf}
		\sup\nolimits_{\xi\in\real}|\Phi_{M,N}(\xi)-\Phi(\xi)| \leq \max\nolimits_{1\leq j \leq J}\{|\Phi_{M,N}(\xi_j)-\Phi(\xi_j)|\}+1/J.
	\end{equation}
	{Let us fix $\varepsilon >0$ and $J>0$ to be constants. From the definition of $\cO_p$ in Section~\ref{sec:problem}, for each $j$, there exist $c_j>1/J$ and $L_j>0$ such that for all $M,N>L_j$,
		$\Pr\{|\Phi_{M,N}(\xi_j)-\Phi(\xi_j)|>(c_j-1/J){\min(M,N)^{-1/2}}\}<\varepsilon/J$, which in turn implies $\Pr\{|\Phi_{M,N}(\xi_j)-\Phi(\xi_j)|>c_j{\min(M,N)^{-1/2}}-1/J\}<\varepsilon/J$. Consequently, for all $M,N> \max_{1\leq j \leq J} L_j$, we have $\Pr\{\max_{1\leq j \leq J}|\Phi_{M,N}(\xi_j)-\Phi(\xi_j)| > (\max_{1\leq j \leq J}c_j){\min(M,N)^{-1/2}}-1/J\}< \varepsilon$. Combining this with~\eqref{eq:sup.bound.ecdf}, we have
		$\Pr\{\sup\nolimits_{\xi\in\real}|\Phi_{M,N}(\xi)-\Phi(\xi)| >(\max_{1\leq j \leq J}c_j){\min(M,N)^{-1/2}}\} < \varepsilon$ for all $M,N> \max_{1\leq j \leq J} L_j$, which implies $\sup\nolimits_{\xi\in\real}|\Phi_{M,N}(\xi)-\Phi(\xi)|=\cO_p(M^{-1/2})+\cO_p(N^{-1/2})$.} 
\end{proof}

\section{A terse introduction to bi-level optimization}\label{app:bi-levelopt}
A bi-level optimization problem contains another optimization problem within its constraints.
This hierarchical structure arises in various applications, such as transportation (the toll-setting problem~\citep{yang2005mathematical}), economics (e.g., the Stackelberg game~\citep{von1934marktform}), supply chain management~\citep{huang2001co}, energy market design~\citep{hobbs2000strategic} etc., where decisions made at one level affect the feasibility or the optimality of the decisions at another level.

A generic mathematical formulation for a bi-level optimization problem is:
\begin{align*}
	& \min_{x \in \mathcal{X}, y\in \mathcal{Y}} &  & F(x,y(x))                                                                                              \\
	& \st                                        &  & G_i(x,y) \leq 0, \quad i = 1,\ldots, I,                                                                \\
	&                                            &  & \begin{aligned}
		y \in\argmin_{y \in \mathcal{Y}(x)} \{ f(x, y): g_j(x,y) \leq 0, \quad \forall j=1,\ldots,J\},
	\end{aligned}
\end{align*}
where
\begin{itemize}[itemsep=0pt, topsep=0pt]
	\item $F: \mathcal{X}\times \mathcal{Y} \to \real$ is the objective function of the upper-level (master's) problem.
	
	\item $f: \mathcal{X}\times \mathcal{Y} \to \real$ is the objective function of the lower-level (follower's) problem.
	
	\item $x$ and $y$ represent the upper- and lower-level decision variables, respectively.
	
	\item $\mathcal{X}$ and $\mathcal{Y}$ represent the upper- and lower-level feasible solution spaces, respectively. The latter may depend on the upper-level decision, $x$: $\mathcal{Y} = \mathcal{Y}(x)$.
	
	\item $G_i: \mathcal{X}\times \mathcal{Y} \to \real$ and $g_j: \mathcal{X}\times \mathcal{Y} \to \real$ represent the upper- and lower-level constraints, respectively.
\end{itemize}

A distinctive feature of a bi-level optimization problem is that the lower-level problem, i.e., $\min_{y \in \mathcal{Y}(x)} \{ f(x, y): g_j(x,y) \leq 0, \quad \forall j=1,\ldots,J\}$ depends on the upper-level decision, $x$, but is solved only with respect to the lower-level decision variable, $y$.
In general, a bi-level optimization problem can have multiple lower-level problems.
For example, the bi-level optimization problem studied in this paper has $M$ lower-level problems, one for each target distribution $i$; see~\eqref{prob:bi-level} and~\eqref{prob:bi-levelsub}.

In general, solving bi-level optimization problems is more challenging than solving single-level optimization problems. In many cases, standard optimization techniques need to be modified or combined with other methods to handle the nested structure.
Several algorithms and methodologies have been proposed to solve bi-level optimization problems, including the Karush-Kuhn-Tucker (KKT) conditions approach~\citep{dempe2002foundations}, evolutionary algorithms~\citep{deb2009solving}, decomposition methods~\citep{ben1990computational}, etc.
Sometimes one can take advantage of some problem structures to simplify the bi-level optimization problems.
For example, the lower-level problems~\eqref{prob:bi-levelsub} have closed form solutions, thus the bi-level optimization problem~\eqref{prob:bi-level} is simplified as an LP~\eqref{prob:LP}.

\section{Proof of Theorem~\ref{thm:hockey.stick}} \label{app:proof.hockey}
\begin{proof}
	From the definition of $\zeta$, we have
	\begin{align}
		\E[\widetilde{\zeta}-\zeta(\mu_i)] & = \E[(\tmustar_i-\xi) I(\tmustar_i>\xi) - (\mu_i-\xi) I(\mu_i>\xi)] \nonumber                                                                                                                                                                                 \\
		& = \int_{-\infty}^\infty \int_{\xi-{\epsilon}/{\sqrt{N}}}^\infty \left(\mu+\tfrac{\epsilon}{\sqrt{N}}-\xi\right)f_i(\mu,\epsilon)d\mu d\epsilon -\int_{-\infty}^\infty \int_{\xi}^\infty (\mu-\xi) f_i(\mu,\epsilon)d\mu d\epsilon.\label{eq:hockeystick.bias}
	\end{align}
	Note that
	\begin{equation}\label{eq:sim.error.integration}
		\int_{-\infty}^\infty \int_{\xi}^\infty \tfrac{\epsilon}{\sqrt{N}} f_i(\mu,\epsilon)d\mu d\epsilon = \int_{\xi}^\infty \phi(\mu)\E\left[\left.\tfrac{\epsilon}{\sqrt{N}}\right|\mu_i = \mu\right] d\mu = \cO(N^{-1}),
	\end{equation}
	where the last equality holds because $\E\left[\left.\frac{\epsilon}{\sqrt{N}}\right|\mu_i = \mu\right] = \cO(N^{-1})$ uniformly for all $\mu$ as shown in the proof of Theorem~\ref{thm:indicator.mse}.
	Adding and subtracting~\eqref{eq:sim.error.integration} from both sides of~\eqref{eq:hockeystick.bias}, we have
	\[
	eqref{eq:hockeystick.bias}
	=\int_{-\infty}^\infty \int_{\xi-\frac{\epsilon}{\sqrt{N}}}^\xi \left(\mu+\frac{\epsilon}{\sqrt{N}}-\xi\right)f_i(\mu,\epsilon)d\mu d\epsilon
	+ \cO(N^{-1}).
	\]
	From the Taylor series expansion of $f_i(\mu,\epsilon)$ in~\eqref{eq:marginal.taylor},
	\begin{align}\label{eq:temp1}
		\int_{\xi-\frac{\epsilon}{\sqrt{N}}}^\xi \left(\mu+\tfrac{\epsilon}{\sqrt{N}}-\xi\right)f_i(\mu,\epsilon)d\mu = \int_{\xi-\frac{\epsilon}{\sqrt{N}}}^\xi \left(\mu+\tfrac{\epsilon}{\sqrt{N}}-\xi\right)
		\left\{f_i(\xi,\epsilon) + \tfrac{\partial f_i({\Check{\mu}},\epsilon)}{\partial \mu} (\mu-\xi) \right\} d\mu.
	\end{align}
	Note that $\int_{\xi-\frac{\epsilon}{\sqrt{N}}}^\xi \left(\mu+\tfrac{\epsilon}{\sqrt{N}}-\xi\right)f_i(\xi,\epsilon) d\mu = \left[ \frac{1}{2}(\mu-\xi)^2 + \frac{\epsilon}{\sqrt{N}}\mu\right]_{\xi-\frac{\epsilon}{\sqrt{N}}}^\xi f_i(\xi,\epsilon)= \frac{\epsilon^2}{2N}f_i(\xi,\epsilon)$ and
	\[
	\int_{\xi-\frac{\epsilon}{\sqrt{N}}}^\xi \left(\mu+\tfrac{\epsilon}{\sqrt{N}}-\xi\right)(\mu-\xi)\tfrac{\partial f_i({\Check{\mu}},\epsilon)}{\partial \mu} d\mu\leq \left|\left[\frac{1}{3}(\mu-\xi)^3 +\frac{\epsilon}{2\sqrt{N}}(\mu-\xi)^2\right]_{\xi-\frac{\epsilon}{\sqrt{N}}}^\xi\right|p_{1,M,N}(\epsilon)\leq \frac{|\epsilon^3|}{N^{3/2}}p_{1,M,N}(\epsilon).
	\]
	The lower bound can be obtained similarly. Therefore,~\eqref{eq:temp1} is lower/upper-bounded by
	$\frac{\epsilon^2}{2N} f_i(\xi,\epsilon)$ $ \mp \frac{|\epsilon^3|}{N^{3/2}}p_{1,M,N}(\epsilon)$.
	Integrating these bounds once again with respect to $\epsilon \in (-\infty, \infty)$, we have
	\begin{equation}\label{eq:expected.difference.hockey}
		\E[(\tmustar_i-\xi) I(\tmustar_i>\xi) - (\mu_i-\xi) I(\mu_i>\xi)] = \cO(N^{-1}).
	\end{equation}
	The variance of $\widetilde{\zeta}$ can be expanded as
	\begin{equation}\label{eq:var.hockey}
		\frac{1}{M^2}\sum_{i=1}^M \Var[(\tmustar_i-\xi) I(\tmustar_i>\xi)]
		+ \frac{1}{M^2} \sum_{i=1}^M \sum_{j=1, j \neq i}^M \Cov[(\tmustar_i-\xi) I(\tmustar_i>\xi),(\tmustar_j-\xi) I(\tmustar_j>\xi)].
	\end{equation}
	One can see that $ \Var[(\tmustar_i-\xi) I(\tmustar_i>\xi)]=\cO(1)$ because
	\begin{align*}
		\Var[(\tilde{\mu}_i^*-\xi)I(\tilde{\mu}_i^*>\xi)] & \leq
		\E[(\tilde{\mu}_i^*-\mu_i+\mu_i-\xi)^2]                                                                                                                                          \\
		& = \E[\E[(\tilde{\mu}_i^*-\mu_i)^2|\theta_i]] + 2(\E[{\mu}_i]-\xi)\E[\E[\tilde{\mu}_i^*-\mu_i|\theta_i]] + \E[(\mu_i-\xi)^2],
	\end{align*}
	where the first two terms are $\cO(N^{-1})$ from Theorem~\ref{thm:convergencerate} and the last term is a constant.
	Then, the first term of~\eqref{eq:var.hockey} is $\cO(M^{-1})$. Because $\mu_i$ and $\mu_j$ for arbitrary $i\neq j$ are independent, $\Cov[(\mu_i-\xi) I(\mu_i>\xi),(\mu_j-\xi) I(\mu_j>\xi)] = 0$. Thus, the covariance term in~\eqref{eq:var.hockey} is equal to
	\small
	\begin{align}
		& \Cov[(\tmustar_i-\xi) I(\tmustar_i>\xi),(\tmustar_j-\xi) I(\tmustar_j>\xi)] - \Cov[(\mu_i-\xi) I(\mu_i>\xi),(\mu_j-\xi) I(\mu_j>\xi)] \nonumber                     \\
		& = \E[(\tmustar_i-\xi)(\tmustar_j-\xi) I(\tmustar_i>\xi,\tmustar_j>\xi)]- \E[(\mu_i-\xi)(\mu_j-\xi) I(\mu_i>\xi,\mu_j>\xi)]\label{eq:cov.part1}                      \\
		& \;\;\;\; + \E[(\mu_i-\xi)I(\mu_i>\xi)]\E[(\mu_j-\xi) I(\mu_j>\xi)]- \E[(\tmustar_i-\xi)I(\tmustar_i>\xi)]\E[(\tmustar_j-\xi) I(\tmustar_j>\xi)]\label{eq:cov.part2}
	\end{align}
	\normalsize
	From~\eqref{eq:expected.difference.hockey},~\eqref{eq:cov.part2}$=\cO(N^{-1})$. We rewrite~\eqref{eq:cov.part1} as
	\small
	\begin{align}
		& \int_{-\infty}^\infty \int_{-\infty}^\infty \int_{\xi-\frac{\epsilon_i}{\sqrt{N}}}^\infty \int_{\xi-\frac{\epsilon_j}{\sqrt{N}}}^\infty \left(\mu_i+\tfrac{\epsilon_i}{\sqrt{N}}-\xi\right)\left(\mu_j+\tfrac{\epsilon_j}{\sqrt{N}}-\xi\right) f_{ij}(\mu_i,\mu_j,\epsilon_i,\epsilon_j)d\mu_j d\mu_i d\epsilon_j d\epsilon_i \nonumber                                            \\
		& - \int_{-\infty}^\infty \int_{-\infty}^\infty \int_{\xi}^\infty \int_{\xi}^\infty \left(\mu_i-\xi\right)\left(\mu_j-\xi\right) f_{ij}(\mu_i,\mu_j,\epsilon_i,\epsilon_j)d\mu_j d\mu_i d\epsilon_j d\epsilon_i \nonumber                                                                                                                                                            \\
		& = \int_{-\infty}^\infty \int_{-\infty}^\infty \int_{\xi-\frac{\epsilon_i}{\sqrt{N}}}^\infty \int_{\xi-\frac{\epsilon_j}{\sqrt{N}}}^\infty \left((\mu_i-\xi)(\mu_j-\xi) + \tfrac{\epsilon_i}{\sqrt{N}}(\mu_j-\xi) + \tfrac{\epsilon_j}{\sqrt{N}}(\mu_i-\xi) + \tfrac{\epsilon_i\epsilon_j}{N}\right) f_{ij}(\mu_i,\mu_j,\epsilon_i,\epsilon_j)d\mu_j d\mu_i d\epsilon_j d\epsilon_i
		\nonumber                                                                                                                                                                                                                                                                                                                                                                             \\
		& - \int_{-\infty}^\infty \int_{-\infty}^\infty \int_{\xi}^\infty \int_{\xi}^\infty \left(\mu_i-\xi\right)\left(\mu_j-\xi\right) f_{ij}(\mu_i,\mu_j,\epsilon_i,\epsilon_j)d\mu_j d\mu_i d\epsilon_j d\epsilon_i \nonumber                                                                                                                                                            \\
		& =\int_{-\infty}^\infty \int_{-\infty}^\infty \int_{\xi-\frac{\epsilon_i}{\sqrt{N}}}^\xi \int_{\xi-\frac{\epsilon_j}{\sqrt{N}}}^\xi \left(\mu_i-\xi\right)\left(\mu_j-\xi\right) f_{ij}(\mu_i,\mu_j,\epsilon_i,\epsilon_j)d\mu_j d\mu_i d\epsilon_j d\epsilon_i\label{eq:cov.part3}                                                                                                 \\
		& {
			+ 2\int_{-\infty}^\infty \int_{-\infty}^\infty
			\int_\xi^\infty \int_{\xi-\frac{\epsilon_i}{\sqrt{N}}}^\xi (\mu_i-\xi)(\mu_j-\xi)f_{ij}(\mu_i,\mu_j,\epsilon_i,\epsilon_j)d\mu_i d\mu_j d\epsilon_i d\epsilon_j}
		\label{eq:cov.part3-2}                                                                                                                                                                                                                                                                                                                                                                \\
		& + \int_{-\infty}^\infty \int_{-\infty}^\infty \int_{\xi-\frac{\epsilon_i}{\sqrt{N}}}^\infty \int_{\xi-\frac{\epsilon_j}{\sqrt{N}}}^\infty \left\{(\mu_i-\xi)\tfrac{\epsilon_j}{\sqrt{N}}+(\mu_j-\xi)\tfrac{\epsilon_i}{\sqrt{N}}+\tfrac{\epsilon_i\epsilon_j}{N}\right\} f_{ij}(\mu_i,\mu_j,\epsilon_i,\epsilon_j)d\mu_j d\mu_i d\epsilon_j d\epsilon_i.\label{eq:cov.part4}
	\end{align}
	\normalsize
	Using the Taylor expansion in~\eqref{eq:jointpdf.taylor}, the two inner integrals of~\eqref{eq:cov.part3} can be bounded from above and below by
	$\frac{\epsilon_i^2\epsilon_j^2}{N^2} f_{ij}(\xi,\xi,\epsilon_i,\epsilon_j) + \frac{|\epsilon_i^3\epsilon_j^2 \pm \epsilon_i^2\epsilon_j^3|}{N^{5/2}}p_{1,M,N}(\epsilon_i,\epsilon_j)
	$, which yields $\cO(N^{-2})$ when integrated with respect to $\epsilon_i$ and $\epsilon_j$.
	Next, we show that~\eqref{eq:cov.part3-2}$=\cO(N^{-1})$. The first-order Taylor series expansion of $f_{ij}(\mu_i,\mu_j,\epsilon_i,\epsilon_j)$ with respect to $\mu_i\in[\xi-\epsilon_i/\sqrt{N},\xi]$ gives
	\begin{equation}\label{eq:hockey.stick.taylor}
		f_{ij}(\mu_i,\mu_j,\epsilon_i,\epsilon_j) = f_{ij}(\xi, \mu_j,\epsilon_i,\epsilon_j) + \frac{\partial f_{ij}(\check{\mu}_i,\mu_j,\epsilon_i,\epsilon_j)}{\partial \mu_i} (\mu_i-\xi)
	\end{equation}
	for $\check{\mu}_i \in (\mu_i,\xi)$.
	Combining this with Assumption~\ref{assm:hockeystick}, the integrand of~\eqref{eq:cov.part3-2} is upper/lower-bounded by $\pm|\mu_j-\xi|\left\{|\mu_i-\xi|q_{0,M,N}(\mu_j,\epsilon_i,\epsilon_j) + q_{1,M,N}(\mu_j,\epsilon_i,\epsilon_j) (\mu_i-\xi)^2\right\}$, which becomes $\pm|\mu_j-\xi|\left\{\tfrac{\epsilon_i^2}{2N}q_{0,M,N}(\mu_j,\epsilon_i,\epsilon_j) + \tfrac{|\epsilon_i|^3}{3N^{3/2}}q_{1,M,N}(\mu_j,\epsilon_i,\epsilon_j)\right\}$ after integrating over $\mu_i\in[\xi-\epsilon_i/\sqrt{N},\xi]$. From Assumption~\ref{assm:hockeystick}, integrating these bounds with respect to $\mu_j\in[\xi,\infty), \epsilon_i\in(-\infty, \infty)$, and $\epsilon_j\in(-\infty,\infty)$ results in $\eqref{eq:cov.part3-2}=\cO(N^{-1})$.
	
	To show~\eqref{eq:cov.part4}=$\cO(N^{-1})$ we first partition the integration ranges for $\mu_i$ and $\mu_j$ as: (i) $\mu_i \in [\xi-\epsilon_i/\sqrt{N}, \xi], \mu_j \in [\xi-\epsilon_j/\sqrt{N}, \xi]$, (ii) $\mu_i \in [\xi, \infty), \mu_j \in [\xi, \infty)$, (iii) $\mu_i \in [\xi-\epsilon_i/\sqrt{N}, \xi], \mu_j \in [\xi, \infty)$, and (iv) $\mu_i \in[\xi, \infty), \mu_j \in [\xi-\epsilon_j/\sqrt{N}, \xi]$.
	
	\noindent \textbf{Part (i)} Plugging in the Taylor series expansion in~\eqref{eq:jointpdf.taylor} for $f_{ij}(\mu_i,\mu_j,\epsilon_i,\epsilon_j)$ and computing the two inner integrals of~\eqref{eq:cov.part4} for Part (i), we have the lower \& upper bounds,
	$\frac{2\epsilon_i^2\epsilon_j^2}{N^2}f_{ij}(\xi,\xi,\epsilon_i,\epsilon_j) \mp \frac{2|\epsilon_i^2\epsilon_j^3 + \epsilon_i^3\epsilon_j^2|}{N^{5/2}} p_{1,M,N}(\epsilon_i,\epsilon_j)$,
	which yields $\cO(N^{-2})$ when integrated with respect to $\epsilon_i$ and $\epsilon_j$.
	
	\noindent \textbf{Part (ii)} We can change orders of integrals because the ranges for $\mu_i$ and $\mu_j$ no longer depend on $\epsilon_i$ and $\epsilon_j$. Thus, Part (ii) can be rewritten as
	\small
	\begin{align*}
		\int_{\xi}^\infty \int_{\xi}^\infty \left\{ (\mu_i-\xi)\E\left[\left.\frac{\epsilon_j}{\sqrt{N}}\right|\mu_i,\mu_j\right] + (\mu_j-\xi)\E\left[\left.\frac{\epsilon_i}{\sqrt{N}}\right|\mu_i,\mu_j\right]+\E\left[\left.\frac{\epsilon_i\epsilon_j}{{N}}\right|\mu_i,\mu_j\right]
		\right\} \phi(\mu_i)\phi(\mu_j) d\mu_i d\mu_j.
	\end{align*}
	\normalsize
	Because $\E[\frac{\epsilon_j}{\sqrt{N}}|\mu_i,\mu_j]=\cO(N^{-1})$ and $\E[\frac{\epsilon_i\epsilon_j}{N}|\mu_i,\mu_j]=\cO(N^{-1})$ for all $\mu_i$ and $\mu_j$, Part (ii) $=\cO(N^{-1})$.
	
	\noindent \textbf{Part (iii) and (iv)} Because Part (iii) and (iv) are symmetric, it suffices to bound the former.
	Substituting $f_{ij}(\mu_i,\mu_j,\epsilon_i,\epsilon_j)$ with~\eqref{eq:hockey.stick.taylor} and integrating with respect to $\mu_i \in [\xi-\epsilon_i/\sqrt{N},\xi]$ yields the following upper \& lower bounds
	\begin{equation}\label{eq:part4}
		\left\{ \frac{\epsilon_i^2(\mu_j-\xi)}{N} + \frac{3\epsilon_j\epsilon_i^2}{2N^{3/2}} \right\}f_{ij}(\xi,\mu_j,\epsilon_i,\epsilon_j) \pm \left\{\frac{|\epsilon_i^3(\mu_j-\xi)|}{N^{3/2}} + \frac{|\epsilon_j\epsilon_i^3|}{N^2}\right\} q_{1,M,N}(\mu_j,\epsilon_i,\epsilon_j).
	\end{equation}
	Integrating~\eqref{eq:part4} over  {$\mu_j\in [\xi,\infty)$,} $\epsilon_i\in(-\infty,\infty)$~and~$\epsilon_j \in(-\infty,\infty)$, Part~(iv)~$=\cO(N^{-1})$.
	
	Finally, combining Parts (i)--(iv),~\eqref{eq:cov.part2},  {\eqref{eq:cov.part3}, and~\eqref{eq:cov.part3-2}}, $\Cov[(\tmustar_i-\xi) I(\tmustar_i>\xi),(\tmustar_j-\xi) I(\tmustar_j>\xi)]=\cO(N^{-1})$. Therefore, $\Var[\widetilde{\zeta}]=\cO(M^{-1}) + \cO(N^{-1})$.
\end{proof}

\section{Proof of Theorem~\ref{thm:smooth.func}} \label{app:proof.smooth}

\begin{proof}
	From the definition, $\E[\widetilde{\zeta}] = \sum_{i=1}^M\E[\zeta(\tmustar_i)]/M$.
	From the Taylor series expansion, we have
	$\zeta(\tmustar_i) = \zeta\left(\mu_i + \frac{\epsilon_i}{\sqrt{N}}\right) = \zeta(\mu_i) + \zeta^\prime(\mu_i) \frac{\epsilon_i}{\sqrt{N}} + \frac{\zeta^{\prime\prime}(\Check{\mu}_i)}{2} \frac{\epsilon_i^2}{N}$,
	where {$\Check{\mu}_i\in (\tmustar_i,\mu_i)$}. Therefore, $\E[\zeta(\tmustar_i)]-\E[\zeta(\mu_i)]=\E\left[\zeta^\prime(\mu_i)\E\left[\left.\frac{\epsilon_i}{\sqrt{N}}\right|\mu_i \right]\right] + \E[{\zeta^{\prime\prime}(\Check{\mu}_i)}\frac{\epsilon_i^2}{2N}]$.
	Recall that $\E\left[\left.\frac{\epsilon_i}{\sqrt{N}}\right|\mu_i \right] = \cO(N^{-1})$ for all $\mu_i$. Because $\zeta^\prime(\mu_i)$ does not depend on $N$ and $\E[\zeta^\prime(\mu_i)\epsilon_i]$ is bounded by Assumption~\ref{assm:continuous.risk.measure}, $\E\left[\zeta^\prime(\mu_i)\E\left[\left.\frac{\epsilon_i}{\sqrt{N}}\right|\mu_i \right]\right]=\cO(N^{-1})$. Since $\zeta^{\prime\prime}$ and $\E[\epsilon_i^2]$ are bounded, $\E[{\zeta^{\prime\prime}(\Check{\mu}_i)}\frac{\epsilon_i^2}{2N}] = \cO(N^{-1})$. Therefore, $\E[\zeta(\tmustar_i)]-\E[\zeta(\mu_i)]=\cO(N^{-1})$.
	For the variance,
	\begin{equation}
		\label{eq:var.of.continuous.zeta.est}
		\Var[\widetilde{\zeta}] = \tfrac{1}{M^2} \sum\nolimits_{i=1}^M \Var[\zeta(\tmustar_i)] + \tfrac{1}{M^2} \sum\nolimits_{1\leq i\neq j\leq M} \Cov[\zeta(\tmustar_i), \zeta(\tmustar_j)].
	\end{equation}
	Clearly, the first sum of~\eqref{eq:var.of.continuous.zeta.est} is $\cO(M^{-1})$. The covariance term of~\eqref{eq:var.of.continuous.zeta.est} can be written as
	$\Cov[\zeta(\tmustar_i), \zeta(\tmustar_j)] = \E[\zeta(\tmustar_i)\zeta(\tmustar_j)]-\E[\zeta(\tmustar_i)]\E[\zeta(\tmustar_j)]$.
	Because $\E[\zeta(\tmustar_i)]=\E[\zeta(\mu_i)]+\cO(N^{-1})$ as shown above, $\E[\zeta(\tmustar_i)]\E[\zeta(\tmustar_j)] = \E[\zeta(\mu_i)]\E[\zeta(\mu_j)] + \cO(N^{-1})$.
	Moreover,
	\begin{align*}
		\zeta(\tmustar_i)\zeta(\tmustar_j) = & \zeta(\mu_i)\zeta(\mu_j) + \left\{ \zeta(\mu_i)\zeta^\prime(\mu_j)\epsilon_i + \zeta(\mu_j)\zeta^\prime(\mu_i)\epsilon_j \right\} N^{-1/2}                                                                                             \\
		& + \left\{\zeta^\prime(\mu_i)\zeta^\prime(\mu_j)\epsilon_i\epsilon_j+\tfrac{1}{2}{\zeta(\mu_i)\zeta^{\prime\prime}(\Check\mu_j)\epsilon_j^2}+\tfrac{1}{2}{\zeta(\mu_j)\zeta^{\prime\prime}(\Check\mu_i)\epsilon_i^2}\right\}N^{-1} + R,
	\end{align*}
	where $R$ contains $\cO(N^{-3/2})$ terms.
	Note that
	$\E[\zeta(\mu_i)\zeta^\prime(\mu_j)\frac{\epsilon_i}{\sqrt{N}}] = \E[\zeta(\mu_i)\zeta^\prime(\mu_j)\E[\frac{\epsilon_i}{\sqrt{N}}|\mu_i]]=\cO(N^{-1})$.
	Under Assumption~\ref{assm:continuous.risk.measure}, one can verify that the coefficient of $N^{-1}$ is bounded in mean and $\E[R] = \cO(N^{-3/2})$.
	Therefore, from~\eqref{eq:var.of.continuous.zeta.est}, $\Var[\widetilde{\zeta}] = \cO(M^{-1}) + \cO(N^{-1})$.
\end{proof}

\section{Examples for construction of $\mathbf{\Theta}_{M^*}$ in Theorem~\ref{thm:linear.growth.obj.val}}\label{app:examples}

\noindent\textbf{Example 1:} Consider an exponential distribution, whose canonical form pdf is $h(x;\theta) = -\theta\exp(\theta x), x>0$ with natural parameter $\theta$ in $\bar{\Theta} = (-\infty, 0)$ and $A(\theta) = - \ln (-\theta)$.
Then, by Lemma~\ref{lem:exp.EW2}, we have $\E_{\theta_j}[W_{ij}^2] = \frac{\theta_i^2}{\theta_j(2\theta_i-\theta_j)}$, if $2\theta_i-\theta_j<0$, and $\E_{\theta_j}[W_{ij}^2] = \infty$, otherwise.
For any fixed $\theta_j<0$, $\mathcal{N}(\theta_j) = \{\theta_i<0|\E_{\theta_j}[W_{ij}^2]<2\}$ can be found by solving
$
\E_{\theta_j}[W_{ij}^2] = \frac{\theta_i^2}{\theta_j(2\theta_i-\theta_j)} = 2 \Rightarrow \theta_i^2 - 4\theta_i\theta_j + 2\theta_j^2 = 0 \Rightarrow \theta_i = \left(2 \pm \sqrt{2}\right)\theta_j.
$
This means that $\E_{\theta_j}[W_{ij}^2] < 2$ for all $\theta_i \in \left((2+\sqrt{2})\theta_j,(2-\sqrt{2})\theta_j\right)$.
Suppose we have $\Theta = [a, b]$, for $-\infty<a<b<0$. We can construct $\Theta_{M^*}$ as follows:
let $\theta_{j} = (2+\sqrt{2})^{j-1}b, j=1,\ldots,M^*,$ where $M^*$ is the smallest positive integer satisfying $(2+\sqrt{2})^{M^*}b < a \leq (2+\sqrt{2})^{M^*-1}b$, i.e., $M^* = \lceil \frac{\ln (-a) - \ln (-b)}{\ln(2+\sqrt{2})} + 1\rceil$, and $\mathcal{N}(\theta_j) = \left(\left(2+\sqrt{2}\right)^{j}b,(2-\sqrt{2}){\left(2+\sqrt{2}\right)}^{j-1}b\right)$, for $j=1,\ldots,M^*$.
Then, by construction, we have $\bigcup_{j=1}^{M^*}\mathcal{N}(\theta_{j}) \supset [a,b]$, as desired.

\noindent\textbf{Example 2:} For a normal distribution with fixed variance $\sigma >0$, the natural parameter is $\theta = \frac{\mu}{\sigma}$ in $\bar{\Theta}=(-\infty,\infty)$ and $A(\theta) = \frac{\theta^2}{2}$.
By Lemma~\ref{lem:exp.EW2}, we have $\E_{\theta_j}[W_{ij}^2] = \exp\left((\theta_i-\theta_j)^2\right)$ for any $\theta_i,\theta_j \in \real$. To find $\mathcal{N}(\theta_j)$ for fixed $\theta_j \in \real$, we first solve
$
\E_{\theta_j}[W_{ij}^2] = \exp\left((\theta_i-\theta_j)^2\right) = 2 \Rightarrow (\theta_i-\theta_j)^2 = \ln 2 \Rightarrow \theta_i = \theta_j \pm \ln 2.
$
Thus, $\E_{\theta_j}[W_{ij}^2] < 2$ for all $\theta_i \in \left(\theta_j - \ln 2,\theta_j + \ln 2\right)$.
Suppose $\bTheta = [a,b]$, for $-\infty<a<b<\infty$. We can construct $\Theta_{M^*}$ as follows:
let $\theta_{j} = a + (j-1) (\ln 2)$ for $j=1,\ldots,M^*$, where $M^*$ is the smallest positive integer satisfying $a + (\ln 2)(M^*-1) \leq b < a + (\ln 2)M^*$, i.e., $M^* = \lceil \frac{ b - a}{ \ln 2} + 1\rceil$, and $\mathcal{N}(\theta_j) = \left(a+ (j-2) \ln 2,a+ j \ln 2\right)$.
Then, we have $\bigcup_{j=1}^{M^*}\mathcal{N}(\theta_{j}) \supset [a,b]$.

\section{Simplified likelihood ratio calculation for the multi-asset ERM example in Section~\ref{subsec:MultiAssetERM}}\label{app:simplifiedLR}
In this example, the discounted portfolio payoff depends on the whole stock price path between $\tau$ and $T$ or even on additional random variables such as minimum stock prices between two time points to calculate barrier option payoffs.
Below, we show that the LR of the whole inner sample path between two outer scenarios can be simplified to a ratio of two one-step transition densities between time $\tau$ and $\tau+h$ thanks to the Markovian nature of the Black-Scholes model.

Let $\bm{S}_\tau = (S_0, S_h, \ldots, S_{\tau-h}, S_{\tau})$ be an outer stock scenario and $\bm{S}_T|\bm{S}_{\tau} = (S_{\tau+h}, S_{\tau+2h},\ldots, S_{T-h}, S_T)$ be the simulated inner price path given the scenario.
Let $h(\bm{S}_T|\bm{S}_\tau)$ be the transition density function for two consecutive stock price path segments, which is determined by the asset model.
For a Markov asset model such as the Black-Scholes model (i.e., geometric Brownian motion), we have $h(\bm{S}_T|\bm{S}_\tau) = h(\bm{S}_T|S_\tau)$.
Suppose one wishes to reuse an inner price path $\bm{S}_T$ sampled from scenario $\bm{S}_{\tau}^{j}$ to estimate the conditional mean for scenario $\bm{S}_{\tau}^{i}$, the appropriate LR is given by
\begin{align*}
	W_{ij}(\bm{S}_T) & = \frac{h(\bm{S}_T|\bm{S}_\tau^{i})}{h(\bm{S}_T|\bm{S}_\tau^{j})} = \frac{h(S_{\tau+h}, S_{\tau+2h},\ldots, S_{T-h}, S_T|S_\tau^{i})}{h(S_{\tau+h}, S_{\tau+2h},\ldots, S_{T-h}, S_T|S_\tau^{j})},                                               & \quad\mbox{by Markov property,} \\
	& = \frac{h(S_{\tau+h}|S_\tau^{i})h(S_{\tau+2h}|S_{\tau+h},S_\tau^{i})\cdots h(S_{T}|S_{T-h}, \ldots, S_{\tau+h},S_\tau^{i})}{h(S_{\tau+h}|S_\tau^{j})h(S_{\tau+2h}|S_{\tau+h},S_\tau^{j})\cdots h(S_{T}|S_{T-h}, \ldots, S_{\tau+h},S_\tau^{j})}, & \quad\mbox{by Bayes rule,}      \\
	& = \frac{h(S_{\tau+h}|S_\tau^{i})h(S_{\tau+2h}|S_{\tau+h})\cdots h(S_{T}|S_{T-h})}{h(S_{\tau+h}|S_\tau^{j})h(S_{\tau+2h}|S_{\tau+h})\cdots h(S_{T}|S_{T-h})},                                                                                     & \quad\mbox{by Markov property,} \\
	& = \frac{h(S_{\tau+h}|S_\tau^{i})}{h(S_{\tau+h}|S_\tau^{j})},                                                                                                                                                                                     & \quad\mbox{by cancellation.}
\end{align*}

Let $M_t = \max_{t< t' \leq t+h} S_{t'}$ be the maximum stock price between $t$ and $t+h$.
Conditional on two consecutive stock price $S_t$ and $S_{t+h}$, $M_t$ can be simulated from a Brownian bridge method~\citep[Chapter 6.4]{glassermanbook}.
In other words, conditioning on $S_t$ and $S_{t+h}$, the distribution of $M_t$ is independent of any other random variables.
Let $\bm{M}_T = (M_{\tau+h}, M_{\tau+2h},\ldots,M_{T-h})$ be the maximum stock prices between consecutive inner simulation stock prices.
Suppose one wishes to reuse an inner price path $(\bm{S}_T, \bm{M}_T)$ sampled from scenario $\bm{S}_{\tau}^{j}$ to estimate the conditional mean for scenario $\bm{S}_{\tau}^{i}$, the appropriate LR is given by
\begin{align*}
	W_{ij}(\bm{S}_T,\bm{M}_T) & = \frac{h(\bm{S}_T|\bm{S}_\tau^{i})h(\bm{M}_T|\bm{S}_T,\bm{S}_\tau^{i})}{h(\bm{S}_T|\bm{S}_\tau^{j})h(\bm{M}_T|\bm{S}_T,\bm{S}_\tau^{j})}, & \quad \mbox{by Bayes rule,}                                           \\
	& = \frac{h(S_{\tau+h}|S_\tau^{i})}{h(S_{\tau+h}|S_\tau^{j})}\frac{h(\bm{M}_T|\bm{S}_T)}{h(\bm{M}_T|\bm{S}_T)},                              & \quad\mbox{from $W_{ij}(\bm{S}_T)$ and the Brownian bridge property,} \\
	& = \frac{h(S_{\tau+h}|S_\tau^{i})}{h(S_{\tau+h}|S_\tau^{j})},                                                                               & \quad\mbox{by cancellation.}
\end{align*}

\end{document}